\newtheorem{theorem}{Theorem}
\newtheorem{assumption}[theorem]{Assumption}
\newcounter{spslist}
\newcommand{\col}[3]{ \renewcommand{\arraystretch}{#1}
                \left[\!\! \begin{array}{c} #2 \\ #3 \end{array} \!\!\right] }
\newcommand{\mat}[5]{ \renewcommand{\arraystretch}{#1}
                    \left[\!\! \begin{array}{cc}
                            #2 & #3 \\
                            #4 & #5 \end{array} \!\!\right] }
\newcounter{geqncount}
    {\refstepcounter{equation}%
     \setcounter{geqncount}{\value{equation}}%
     \setcounter{equation}{0}%
  }%
    {\setcounter{equation}{\value{geqncount}}}
\newtheorem{Theorem}{Theorem}[section]
\newtheorem{Lemma}{Lemma}[section]
\newtheorem{Proposition}{Proposition}[section]
\newcommand{\lf}{{\!-\!}}
\newcommand{\rt}{{+}}
\newcommand{\rte}{{+e}}
\newcommand{\rtp}{{+p}}
\newcommand{\Pl}{P_{\!-\!}} 
\newcommand{\Ple}{P_{-e}} 
\newcommand{\vle}{v_{-e}} 
\newcommand{\wle}{w_{-e}} 
\newcommand{\Plp}{P_{-p}} 
\newcommand{\vlp}{v_{-p}} 
\newcommand{\wlp}{w_{-p}} 
\renewcommand{\Pr}{P_+} 
\newcommand{\Pre}{P_{+e}} 
\newcommand{\vre}{v_{+e}} 
\newcommand{\wre}{w_{+e}} 
\newcommand{\Prp}{P_{+p}} 
\newcommand{\vrp}{v_{+p}} 
\newcommand{\wrp}{w_{+p}} 
\newcommand{\Plr}{P_\mp} 
\newcommand{\Pres}{P_\text{\scriptsize res}}
\newcommand{\Preg}{P_\text{\scriptsize reg}}
\renewcommand{\O}{{O}}
\newcommand{\slantfrac}[2]{{\hbox{\tiny$\raisebox{0.6pt}{#1}\!/_{\!#2}$}}}
\newcommand{\threehalves}{\slantfrac{3}{2}}
\newcommand{\onehalf}{\slantfrac{1}{2}}
\newcommand{\kw}{(\kk,\omega)}
\newcommand{\kwz}{(\kk_0,\omega_0)}
\newcommand{\tomega}{{\tilde\omega}}
\newcommand{\tkk}{{\tilde\kk}}
\newcommand{\tkw}{(\tkk,\tomega)}
\newcommand{\Psil}{\Psi_-}
\newcommand{\Psir}{\Psi_+}
\newcommand{\psig}{\psi^\guided}
\newcommand{\Psig}{\Psi^\guided}
\newcommand{\Psigr}{\Psig_+}
\newcommand{\Psigl}{\Psig_-}
\newcommand{\inc}{\text{\scriptsize in}}
\newcommand{\out}{\text{\scriptsize out}}
\newcommand{\guided}{\text{\itshape g}}
\newcommand{\Range}{{\mathrm{Ran}}}
\newcommand{\Null}{{\mathrm{Null}}}
\newcommand{\sspan}{{\mathrm{span}}}
\newcommand{\fadj}{{\text{\tiny $[$}*{\text{\tiny $]$}}}}
\newcommand{\fperp}{{\text{\tiny $[$}\perp{\text{\tiny $]$}}}}
\newcommand{\rr}{{\mathbf r}}
\newcommand{\HH}{{\mathbf H}}
\newcommand{\EE}{{\mathbf E}}
\newcommand{\BB}{{\mathbf B}}
\newcommand{\DD}{{\mathbf D}}
\renewcommand{\SS}{{\mathbf S}}
\newcommand{\kk}{{\boldsymbol{\kappa}}}
\newcommand{\xxi}{{\boldsymbol{\xi}}}
\newcommand{\ZZ}{\mathbb{Z}}
\newcommand{\RR}{\mathbb{R}}
\newcommand{\CC}{\mathbb{C}}
\renewcommand{\Re}{\text{Re}\,}
\renewcommand{\Im}{\mathrm{Im}\,}
\newcommand{\half}{\frac{1}{2}}
\newcommand{\commentsps}[1]{}
\newcommand{\commentatw}[1]{}
\begin{document}

\begin{center}
{\bfseries \Large  Resonant electromagnetic scattering in anisotropic layered media}
\end{center}

\vspace{0ex}

\begin{center}
{\scshape \large Stephen P. Shipman${}^\dagger$ and Aaron T. Welters${}^\ddagger$} \\
\vspace{2ex}
{\itshape ${}^\dagger$Department of Mathematics\\
Louisiana State University\\
Baton Rouge, Louisiana 70803, USA\\
\vspace{1ex}
${}^\ddagger$Department of Mathematics\\
Massachusetts Institute of Technology\\
Cambridge, Massachusetts 02139, USA
}
\end{center}

\vspace{2ex}
\centerline{\parbox{0.9\textwidth}{
{\bf Abstract.}\
The resonant excitation of an electromagnetic guided mode of a slab structure by exterior radiation results in anomalous scattering behavior, including sharp energy-transmission anomalies and field amplification around the frequency of the slab mode.  In the case of a periodically layered ambient medium, anisotropy serves to couple the slab mode to radiation.  Exact expressions for scattering phenomena are proved by analyzing a pole of the full scattering matrix as it moves off the real frequency axis into the lower half complex plane under a detuning of the wavevector parallel to the slab.  The real pole is the frequency of a perfect (infinite Q) guided mode, which becomes lossy as the frequency gains an imaginary part.
This work extends results of Shipman and Venakides to evanescent source fields and two-dimensional parallel wavevector and demonstrates by example how the latter allows one to control independently the width and central frequency of a resonance by varying the angle of incidence of the source field.
The analysis relies on two nondegeneracy conditions of the complex dispersion relation for slab modes (relating poles of the scattering matrix to wavevector), which were assumed in previous works and are proved in this work for layered media.  One of them asserts that the dispersion relation near the wavevector $\kk$ and frequency $\omega$ of a perfect guided mode is the zero set of a {\em simple} eigenvalue $\ell\kw$, and the other relates $\partial\ell/\partial\omega$ to the total energy of the mode, thereby implying that this derivative is~nonzero.
}}

\vspace{3ex}
\noindent
\begin{mbox}
{\bf Key words:}  layered media, anisotropic, guided mode, trapped mode, electromagnetics, photonic crystal, defect, resonance, periodic media, scattering and transmission anomalies, Fano lineshape, control of resonance, perturbation of scattering resonances, scattering matrix poles.
\end{mbox}
\vspace{3ex}

\hrule
\vspace{1ex}


\vspace{1.5ex}

When an electromagnetic mode of a slab or film is excited by an exterior source field, delicate resonance phenomena occur.
The most notable are high field amplification and sharp variations of the transmitted energy across the slab as the frequency and angle of incidence of the source field are tuned.  These phenomena are utilized in photonic devices such as lasers~\cite{KanskarPaddonPacradouni1997} and light-emitting diodes~\cite{FanVilleneuveJoannopoul2000}.

Excitation of a guided mode is typically achieved by periodic variation of the dielectric properties of the slab in directions parallel to it. 
The periodicity couples radiating Rayleigh-Bloch waves with evanescent ones composing a guided mode~\cite{FanJoannopoul2002}.
In an explicit example, we show that coupling can be achieved without periodicity of the slab by replacing the air with an anisotropic ambient medium that supports radiation and evanescent modes at the same frequency and wavevector along the slab.

This paper analyzes resonant scattering phenomena through the perturbation of a pole of a scattering matrix representing the complex frequency of a generalized (leaky) guided mode, or scattering resonance.
The real frequency of a perfect guided mode (infinite quality factor), attains a small imaginary part as the wavevector parallel to the slab is detuned from the precise value required to support the perfect mode.
This is a manifestation of the instability of a perfect guided mode whose frequency is embedded in the continuous spectrum of radiation states.  Spectrally embedded guided modes have been demonstrated in periodic photonic structures~\cite{Bonnet-BeStarling1994,ShipmanVenakides2003,ShipmanVenakides2005,TikhodeevYablonskiMuljarov2002,Wei-HsuZhenChua2013,Wei-HsuZhenLee2013}, discrete systems~\cite{PtitsynaShipman2012,ShipmanRibbeckSmith2010}, 
and anisotropic layered media~\cite{ShipmanWelters2012}.

The characteristic peak-dip shape of resonant transmission anomalies (Fig.~\ref{fig:resonance},\,\ref{fig:transmission}) is often called a ``Fano resonance" or ``Fano lineshape".
There are several formulae for the Fano lineshape in the literature that are based on the underlying principle of coupling between an oscillatory mode of a structure and radiation states~\cite{Fano1961,DurandPaidarovaGadea2001,FanSuhJoannopoul2003,FanJoannopoul2002}.
Our analysis of the scattering matrix yields a formula for the transmission anomaly and the associated field amplification that is based solely on the Maxwell equations, without invoking a heuristic model.  The formula shows explicitly how the frequencies of the peaks and dips depend on the angle of incidence.  The reduction of the problem to the scattering matrix is an expression of the universal applicability of the formulae to very general linear scattering problems.

The descriptions of resonances in this work extend previous formulae, involving one angle of incidence~\cite{ShipmanVenakides2005,Shipman2010}, to two angles of incidence, and it is shown that this permits independent control over the  width and central frequency of a resonance; this is important, for example, in the tuning of LED structures~\cite{FanVilleneuveJoannopoul2000}.
Resonances can also be tuned by structural mechanisms, such as by rotating the anisotropic slab in the example of section~\ref{sec:example}.  We do not pursue structural perturbations in this paper, but they can be incorporated into the analysis by treating structural parameters on par with the wavevector.

Our analysis allows the ambient medium to be a 1D photonic crystal, that is, a periodically layered medium.  The slab or film consists of a defective layer, or slab, embedded in the ambient medium~(Fig.~\ref{fig:layered}).   
Rigorous derivation of resonances based on the Maxwell equations requires a careful treatment of energy density and flux in periodically layered media and new results on the non-degeneracy of the dispersion relation for generalized guided modes.  Theorems~\ref{thm:nondegeneracy} relates the frequency derivative of the dispersion relation for slab modes to the total energy of the mode and leads to a justification of a genericity assumption that was made in previous works~\cite{ShipmanVenakides2005,Shipman2010}.  

\begin{center}
{\scalebox{0.55}{\includegraphics{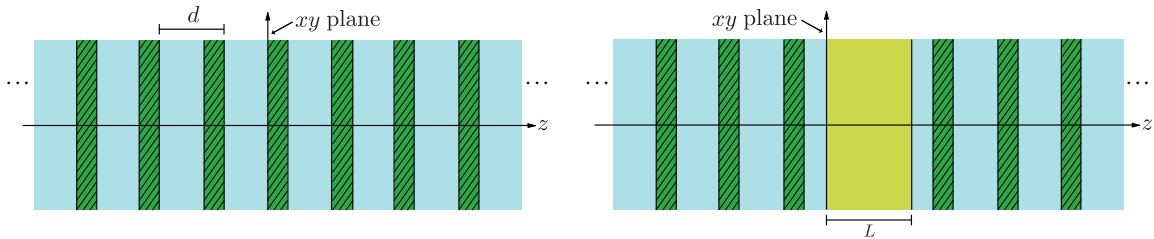}}}
\vspace{-1.8ex}
\captionof{figure}{\small
{\bfseries Left.}  A periodic layered medium with anisotropic layers.
{\bfseries Right.}  A defective layer, or slab, embedded in an ambient periodic layered medium.  In this example, the structure is symmetric about the centerline of the slab.
}
\label{fig:layered}
\end{center}

Layered media offer a scenario for resonant scattering by a planar waveguide that is physically realistic yet simple enough to permit essentially exact calculations and quick numerical computations.   They provide model problems for computing, for instance, the effects of random fabrication errors on resonant phenomena or scattering in the slow light regime.

This paper starts with an elaboration of an example of resonant scattering introduced in~\cite{ShipmanWelters2012}.  A second example in that work elucidates scattering in a ``slow light" medium---a periodically layered medium in which the group velocity of the propagating mode in one direction vanishes at a frequency in the interior of a spectral band~\cite{FigotinVitebskiy2006,Welters2011a}.  This will be treated in detail in a another communication.

Here is an overview of the content and main results of this paper.

\smallskip
{\bfseries Section~\ref{sec:example}: An example.}  An explicit construction demonstrates concretely how a spectrally embedded guided mode can be created in an anisotropic layered medium and how perturbations of the system result in resonant scattering.  The example introduces key concepts, such as the complex dispersion relation for guided slab modes, formulae for transmission anomalies, and the quality factor.

{\bfseries Section~\ref{sec:ODEs}: Energy in layered media.}
Reduction of the Maxwell equations to an ODE system in lossless anisotropic layered media is reviewed.
Theorems~\ref{thm:EnergyConservationLaw} and~\ref{thm:EnergyDensity} relate energy density, flux, and conservation for harmonic fields at complex frequency directly to the Maxwell ODEs.  These results, which don't appear to be in the literature, provide the basis for the designation of rightward and leftward modes in the scattering problem and the basis for Theorems~\ref{thm:scattering} and~\ref{thm:nondegeneracy} on the nondegeneracy of the complex dispersion relation for slab modes, used in the analysis of scattering anomalies.

{\bfseries Section~\ref{sec:scattering}: Resonant scattering.}  Detailed descriptions of transmission anomalies (Fig.~\ref{fig:resonance},\,\ref{fig:transmission} and sec.~\ref{sec:transmission}, eq.~\ref{T2b}) and resonant field amplification (sec.~\ref{sec:amplification}) are derived by perturbation analysis of scattering resonances, as they depend on the wavevector~$\kk$ parallel to the slab.  The complex frequencies $\omega$ of the resonances are poles of the full scattering matrix $S\kw$ for fixed $\kk$.  Transmission anomalies are revealed by reduction to the far-field scattering matrix~$S_0\kw$ (\ref{S0}), whose entries are ratios of analytic functions that both vanish at the parameters $\kwz$ of a perfect guided mode, as proved first in~\cite{ShipmanVenakides2005}.

The analysis relies on certain generic nondegeneracy conditions on the guided-mode dispersion relation that were assumed in previous works~\cite{Shipman2010,ShipmanVenakides2005}.  These conditions are proved for layered media in two theorems stated in section~\ref{sec:scattering} and proved in section~\ref{sec:nondegeneracy}.

{\bfseries Section~\ref{sec:nondegeneracy}: Nondegeneracy of guided modes}---proofs.
Theorem~\ref{thm:scattering} asserts that the dispersion relation is the zero set of an analytic and algebraically simple eigenvalue of a matrix whose nullspace corresponds to the guided modes of the slab.  It also identifies the space of incoming fields for which the scattering problem has a (nonunique) solution; this space includes propagating harmonics.

Theorem~\ref{thm:nondegeneracy} gives a formula for the $\omega$-derivative of the dispersion relation at the wavevector-frequency 
pair of a perfect guided mode in terms of the total energy of the mode. The derivative cannot be zero because 
the energy density is positive.

{\bfseries Section~\ref{sec:reduction}: Appendix on electromagnetics in layered media.}  
The dependence of the Maxwell ODEs, their solutions, and energy density, on the system parameters $(z,\kk,\omega)$ is discussed.  Theorem~\ref{thm:EnergyIndRep} expresses the energy of a field in a periodic layered medium (1D photonic crystal) in terms of the energy of a solution of an ``effective" homogeneous ODE $\psi'\!=\!iK\psi$, where $K$ is the matrix Floquet exponent.  This theorem is an extension to periodic media of Theorems~\ref{thm:EnergyConservationLaw} and~\ref{thm:EnergyDensity} and is needed in the proof of Theorem~\ref{thm:nondegeneracy}.

\section{An example of resonant scattering}\label{sec:example} 

The objective of this section is to introduce the reader, through a concrete example, to the resonance phenomena investigated in this article.

We explicitly construct spectrally embedded guided modes of a defect layer, or slab, embedded in a homogeneous anisotropic ambient medium (Fig.~\ref{fig:slab}).  Perturbation of the exact parameters that permit the construction of a guided mode results in resonant field amplification and sharp variations in the transmission of energy of an incident plane wave across the slab---we call this resonant scattering.
Controlling the central frequency and the spectral width of a resonance is important for applications~\cite{FanJoannopoul2002}.  {\itshape This example shows how these characteristics can be controlled independently by varying the two-dimensional angle of incidence of the source field.}

\begin{figure}[h]
\centerline{\scalebox{0.27}{\includegraphics{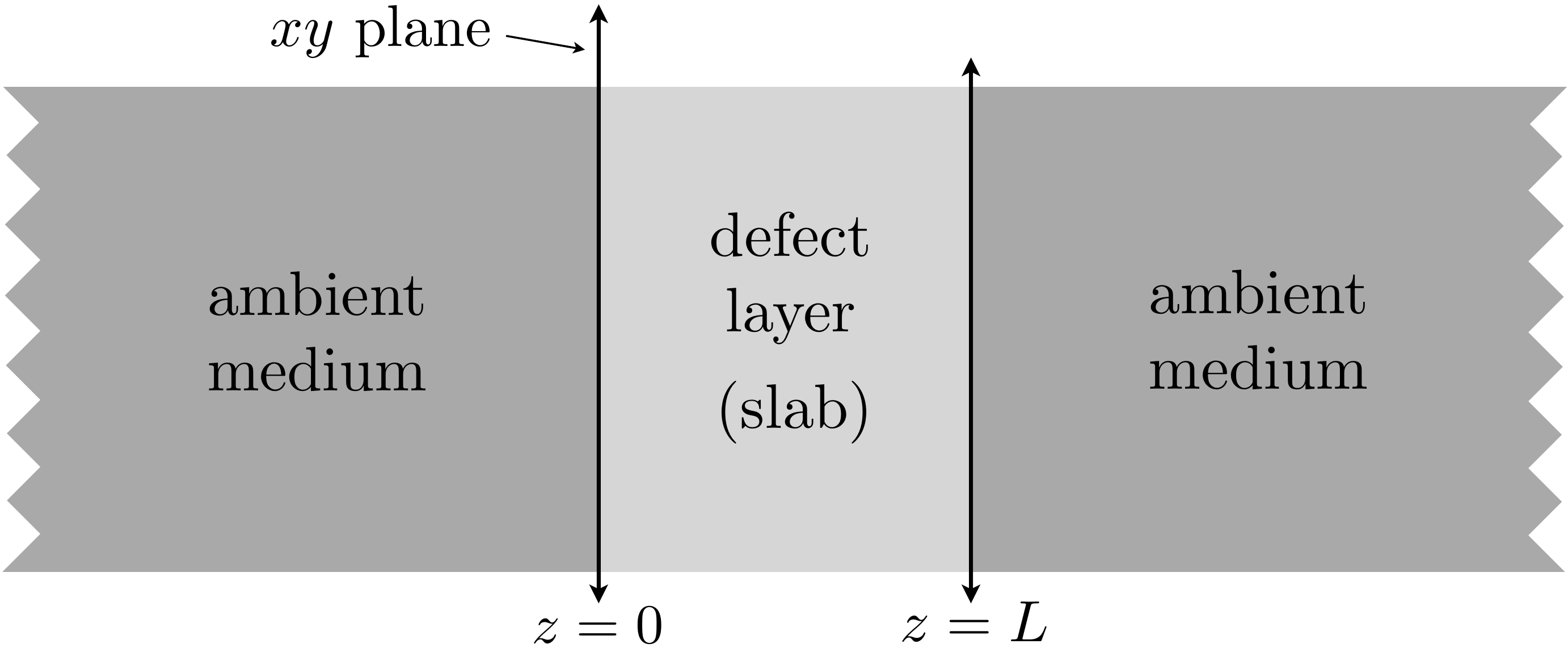}}}
\vspace{-1ex}
\caption{\small An infinite, homogeneous or periodically layered, anisotropic ambient material is interrupted by a defect layer, or slab, of length $L$ of a contrasting medium.}
\label{fig:slab}
\end{figure}

{\bfseries Modes of the ambient medium.}
In Fig.~\ref{fig:slab}, the $xy$-plane is parallel to the layers, and the $z$-axis is perpendicular.
Electromagnetic fields at a fixed frequency $\omega$ and wavevector $\kk=(k_1,k_2)$ parallel to the defect layer have the form
\begin{eqnarray*}
  \EE(x,y,z;t) &=&  [E_1(z), E_2(z), E_3(z)]^T e^{i(k_1x+k_2y-\omega t)}\,, \\
  \HH(x,y,z;t) &=& [H_1(z), H_2(z), H_3(z)]^T e^{i(k_1x+k_2y-\omega t)}\,,
\end{eqnarray*}
and the Maxwell equations reduce to a four-dimensional ODE system for the tangential components
$\psi(z)=[E_1(z),E_2(z),H_1(z),H_2(z)]^T$, described in detail in Appendix~\ref{sec:reduction},
\begin{equation*}
  \frac{d\psi}{dz} = iJA(z,\kk,\omega)\psi\,.
\end{equation*}
The $z$ dependence of $A$ comes through the dielectric and magnetic tensors $\epsilon(z)$ and $\mu(z)$.  In general, these are allowed to be periodic in the ambient medium, but, in the example we construct, they are constant both in the ambient medium and in the defect layer.

At a typical pair $\kw$, the ambient medium admits four modes---solutions of the form $\psi(z)=\psi_0^{ik_3z}$, where $k_3$, the wavenumber perpendicular to the layer, is an eigenvalue of $JA\kw$.

\medskip
{\bfseries Spectrally embedded guided slab modes.}
The strategy for creating a spectrally embedded guided mode is to choose $\epsilon$, $\mu$, $\kk$, and $\omega$ so that $JA\kw$ in the ambient medium admits two real wavenumbers $\pm k_3^{0p}\in\RR$, corresponding to propagating modes, and two imaginary wavenumbers $\pm k_3^{0e}\in i\RR$, corresponding to exponential modes, and then to arrange the defect layer just right so that the exponentially growing mode to the left of the defect matches the exponentially decreasing mode to the right (see Fig.~\ref{fig:fields}, top).  Because the ambient medium admits a propagating mode, $\omega$ will lie within the continuous spectrum for the given wavevector $\kk$.

Anisotropy is the key to creating simultaneous propagating and exponential modes.
Let the electric and magnetic tensors $\epsilon^{(0)}$ and $\mu^{(0)}$ of the ambient medium be
\begin{equation*}
  \epsilon^{(0)} = \left[
  \begin{array}{ccc}
      \epsilon_1 & 0 & 0 \\
      0 & \epsilon_2 & 0 \\
      0 & 0 & 1
   \end{array}
   \right],
  \qquad
  \mu^{(0)} = \left[
  \begin{array}{ccc}
      \mu_1 & 0 & 0 \\
      0 & \mu_2 & 0 \\
      0 & 0 & 1
   \end{array}
   \right],
   \qquad \text{(ambient space)}
\end{equation*}
and let us consider fields that propagate parallel to the $xz$-plane, which means $k_2=0$.
One computes the wavenumbers
\begin{equation*}
  k_3^{0e} = \left[ \epsilon_1 \left( \frac{\omega^2}{c^2} \mu_2 - k_1^2 \right) \right]^{1/2},
   \qquad
   k_3^{0p} = \left[ \mu_1 \left( \frac{\omega^2}{c^2} \epsilon_2 - k_1^2 \right) \right]^{1/2}.
   \qquad (k_2=0)
\end{equation*}
Their associated eigenspaces are given by the relations
\begin{eqnarray*}
  \big\{ {\textstyle-\frac{\omega}{c}}\epsilon_1 E_1 \pm k_3^{0e} H_2 \,=\, 0\,,\;
    E_2 = 0\,,\;
    H_1 = 0 \big\} & \text{for} & \pm k_3^{0e}\,, \\
 \big\{ {\textstyle\frac{\omega}{c}} \mu_1 H_1 \pm k_3^{0p} E_2 \,=\, 0\,,\;
    H_2 = 0\,,\;
    E_1 = 0 \big\} & \text{for} & \pm k_3^{0p}\,,
\end{eqnarray*}
which place them in mutually orthogonal polarizations.
Within a certain $(k_1,\omega)$ region, one has $k_3^{0e}=i|k_3^{0e}|$ and $k_3^{0p}=|k_3^{0p}|$.

The trick to matching the evanescent modes across the defect layer is to build it of the same material, but rotated by a right angle in the $xy$-plane so that its material tensors $\epsilon^{(1)}$ and $\mu^{(1)}$ are
\begin{equation*}
  \epsilon^{(1)} = \left[
  \begin{array}{ccc}
      \epsilon_2 & 0 & 0 \\
      0 & \epsilon_1 & 0 \\
      0 & 0 & 1
   \end{array}
   \right],
  \qquad
  \mu^{(1)} = \left[
  \begin{array}{ccc}
      \mu_2 & 0 & 0 \\
      0 & \mu_1 & 0 \\
      0 & 0 & 1
   \end{array}
   \right].
   \qquad \text{(defect layer)}
\end{equation*}
The $z$-directional wavenumbers in this medium are $\big\{ k_3^{1p}, -k_3^{1p}, k_3^{1e}, -k_3^{1e} \big\}$, given by
\begin{equation*}
    k_3^{1p} = \left[ \epsilon_2 \left( \frac{\omega^2}{c^2} \mu_1 - k_1^2 \right) \right]^{1/2},
    \qquad
    k_3^{1e} = \left[ \mu_2 \left( \frac{\omega^2}{c^2} \epsilon_1 - k_1^2 \right) \right]^{1/2},
\end{equation*}
and their associate eigenspaces are given by the relations
\begin{eqnarray*}
   \big\{ {\textstyle-\frac{\omega}{c}}\epsilon_2 E_1 \pm k_3^{1p} H_2 \,=\, 0\,, \;
    E_2 = 0\,,\;
    H_1 = 0 \big\} & \text{for} & \pm k_3^{1p}\,, \\
   \big\{ {\textstyle\frac{\omega}{c}} \mu_2 H_1 \pm k_3^{1e} E_2 \,=\, 0\,,\;
    H_2 = 0\,,\;
    E_1 = 0 \big\} & \text{for} & \pm k_3^{1e}\,.
\end{eqnarray*}

Figure~\ref{fig:resonance} (left) shows the dispersion relations ($\omega$ {\itshape vs.}~real $k_3$) for the propagating modes of the ambient space (superscript~$0$) and the defect layer (superscript~$1$) for hypothetical material coefficients.  In the frequency interval ${I}$ indicated in the figure, the modes $0e$ and $1e$ are exponential and the modes $0p$ and $1p$ are propagating.  This situation is attained under the condition
\vspace{-0.5ex}
\begin{equation*}
  \max\{\epsilon_1,\mu_2\} \leq \min\{\epsilon_2,\mu_1\}
  \quad \text{and} \quad
  k_1\not=0.
  \qquad \text{(assuming $k_2=0$)}
\end{equation*}
For frequencies in the interval~${I}$, the vector span of the ambient exponential modes ($0e$) coincides with that of the propagating modes in the slab ($1p$).  This allows the construction of guided modes by matching evanescent fields outside the slab with oscillatory fields in the slab:
\begin{equation*}
\begin{array}{ccll}
\left[
  \begin{array}{c}
    E_1 \\ E_2 \\ H_1 \\ H_2
  \end{array}
\right]
&=&  
C_1 \left[
  \begin{array}{c}
    -k_3^{0e} \\ 0 \\ 0 \\ \frac{\omega}{c}\epsilon_1
  \end{array}
\right]
e^{-ik_3^{0e}z},
&
\quad
z<0\,,
\quad
\text{(leftward evanescent)} \\
\vspace{-1ex}\\
&=&
B_1 \left[
  \begin{array}{c}
    k_3^{1p} \\ 0 \\ 0 \\ \frac{\omega}{c}\epsilon_2
  \end{array}
\right]
e^{ik_3^{1p}z} +
B_2 \left[
  \begin{array}{c}
    -k_3^{1p} \\ 0 \\ 0 \\ \frac{\omega}{c}\epsilon_2
  \end{array}
\right]
e^{-ik_3^{1p}z},
&
\quad
0<z<L\,,
\quad
\text{(oscillatory)} \\
\vspace{-1ex}\\
&=&
C_2 \left[
  \begin{array}{c}
    k_3^{0e} \\ 0 \\ 0 \\ \frac{\omega}{c}\epsilon_1  
  \end{array}
\right]
e^{ik_3^{0e}(z-L)},
&
\quad
L<z\,.
\quad
\text{(rightward evanescent)}
\end{array}
\end{equation*}
By imposing continuity of this solution at the interfaces $z=0$ and $z=L$, one obtains
\begin{equation}\label{trapped}
  {2\cos(k_3^{1p}L) - i\left( \frac{k_3^{0e}}{k_3^{1p}}\frac{\epsilon_2}{\epsilon_1} - \frac{k_3^{1p}}{k_3^{0e}}\frac{\epsilon_1}{\epsilon_2} \right)\sin(k_3^{1p}L) \ = \ 0.}
  \qquad \text{(guided-mode condition)}
\end{equation}
When plotted in the $\omega$-$L$ plane, this relation has multiple branches, which are shown in Fig.~\ref{fig:resonance} (middle) for $\kk=(0.5,0)$.  For $L=7$, for example, there are four frequencies in $I$ that admit a guided mode.  The first and third modes are plotted in Fig.~\ref{fig:fields} (top).

\begin{figure}
\centerline{
\scalebox{0.45}{\includegraphics{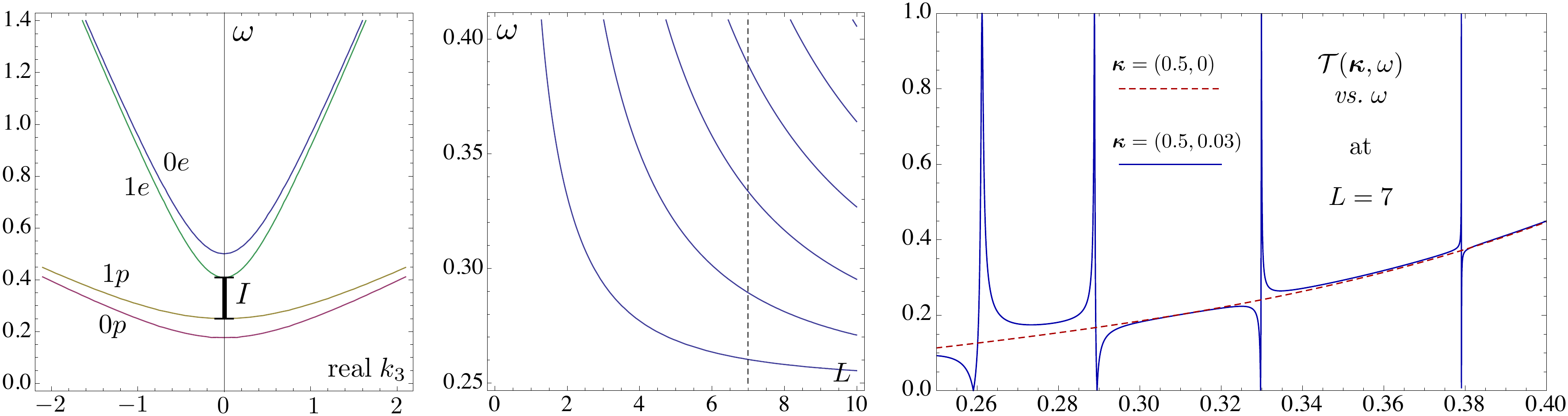}}
}
\caption{\small {\bfseries Left:}\, These dispersion relations show frequency $\omega$ {\itshape vs.}~propagation wavenumber $k_3$ (when it is real) perpendicular to the layers ($z$-direction) for the ambient medium ($0p$ and $0e$) and for the slab medium ($1p$ and $1e$).  For the material coefficients $\epsilon_1=1.5$, $\epsilon_2=8$, $\mu_1=4$, $\mu_2=1$, and wavevector $\kk_0=(k_1,k_2)=(0.5,0)$ parallel to the layers, there is a frequency interval $I\approx[0.25, 0.408248]$ in which each medium admits one propagating mode ($0p$ and $1p$) and one evanescent mode ($0e$ and $1e$).
{\bfseries Middle:}\, When the length of the slab $L$ and the frequency $\omega\in I$ satisfy this multi-branched relation, the slab supports a perfect guided mode that falls off exponentially as $|z|\to\infty$.  Because the ambient medium supports a propagating mode for $\omega\in I$, the frequency of the guided mode is embedded in the continuous spectrum.  The dotted line shows that, for $L=7$, there are four guided-mode frequencies in $I$.
{\bfseries Right:}\, The square root ${\cal T}\kw$ of the transmission is shown for a slab of length $L=7$ for $\kk=(0.5,0)$ (dotted) and $\kk=(0.5,0.03)$ (solid).  The slab at $\kk=(k_1,k_2)=(0.5,0)$ admits guided modes at four frequencies within the interval $I$ indicated by the intersection of the dashed line the middle graph with the four curves.  When $k_2$ is perturbed, sharp transmission anomalies appear near the guided-mode frequencies.}
\label{fig:resonance}
\end{figure}

\begin{center}
\scalebox{0.55}{\includegraphics{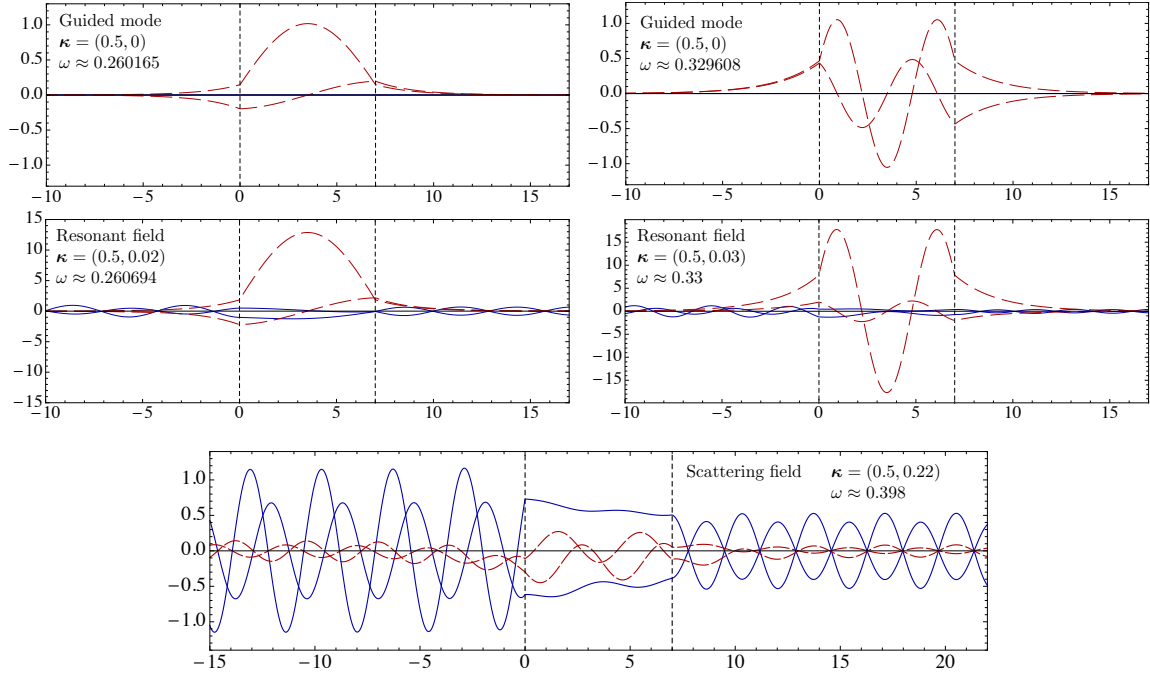}}
\captionof{figure}{\small The solid curves show the $E_2$ and $H_1$ components of electromagnetic field, and the dashed curves show the $E_1$ and $H_2$ components.  The slab (defect layer) lies between the vertical dashed lines.  {\bfseries Top.} The guided modes corresponding to the first and third of the four guided-mode frequencies indicated in Fig.~\ref{fig:resonance} for $L=7$.  {\bfseries Middle.} Resonant amplification when an incident plane wave is scattered at parameters $\kw$ close to those of a guided mode.  {\bfseries Bottom.} A non-resonant scattering field; the source field is incident is from the left.}
\label{fig:fields}
\end{center}

\medskip
{\bfseries Scattering of a propagating mode and transmission anomalies.}
When an electromagnetic mode of the ambient medium impinges upon the slab, say from the left, it is scattered, resulting in a transmitted field on the right and a reflected field on the left:
\begin{equation}\label{leftscattering}
  \psi(z) =
  \left\{
  \begin{array}{rl}
  v_{+p}\,e^{ik_3^{0p}z} + r_{-p}\,v_{-p}\,e^{-ik_3^{0p}z} + r_{-e}\,v_{-e}\,e^{-ik_3^{0e}z}\,, & (z<0)\\
  \vspace{-1.5ex}\\
                                 t_{+p}\,v_{+p}\,e^{\,ik_3^{+0p}z} + t_{+e}\,v_{+e}\,e^{\,ik_3^{+0e}z}\,. & (z>L)
  \end{array}
  \right.
\end{equation}
Here, $v_{\pm p,e}$ are the eigenvectors corresponding to the eigenvalues $\pm k_3^{0p,e}$ and $v_\rtp e^{ik_3^{0p}z}$ is the incident field.  The resulting total field $\psi(z)$ is called a scattering field; some are shown in Fig.~\ref{fig:fields} (middle, bottom).

The {\itshape transmission} ${\cal T}\kw^2$ is the ratio of energy flux of the transmitted field to that of the incident field, and it is equal to $|t_\rtp|^2$; its value lies in the interval $[0,1]$.  
Figure~\ref{fig:resonance} (right) shows ${\cal T}\kw$ {\itshape vs.} $\omega$ as $\omega$ traverses the interval $I$, which contains four guided mode frequencies for the wavevector $\kk=\kk_0=(0.5,0)$.  If $\kk$ is set exactly to $\kk_0$, the graph of ${\cal T}\kw$ is smooth, and the guided mode frequencies cannot be detected.  If $k_2$ is perturbed from $0$, the construction of perfect, exponentially decaying guided modes at real frequencies breaks down.  The destruction of a guided mode at a frequency $\omega_0$ is marked by a sharp anomaly in the transmission graph, characterized by a peak of $100\%$ transmission and a dip of $0\%$ transmission at frequencies separated by a spectral deviation of on the order of $(k_2)^2$.  
This feature is often called a Fano transmission resonance or a Fano lineshape and is accompanied by the excitation of a ``guided resonance" along the slab.

Another way to excite a guided resonance is to rotate the slab so that the two polarizations no longer match exactly at the interface between the ambient medium and the slab, so that again the construction of a perfect guided mode breaks down.  This results in a similar transmission graph to that in Fig.~\ref{fig:resonance}.

\medskip
{\bfseries Controling resonances.}
The components of the parallel wavevector $\kk=(k_1,k_2)$ independently control the central frequency and the width of a transmission resonance.  This is because a perfect guided mode is {\em robust} with respect to $k_1$ and {\em non-robust} with respect to $k_2$.  As $k_1$ varies, the mode does not couple with radiation and retains its infinite Q-value, but its frequency changes.  As $k_2$ varies, the conditions for a perfect mode are destroyed and the mode becomes leaky, but the central frequency of the resonance remains fixed (to leading order).

To be concrete, let us consider the pair $\kk_0=(0.5,0)$ and a frequency $\omega_0$ corresponding to Fig.~\ref{fig:resonance}, for which a guided slab mode can be constructed.  The loci of $100\%$ and $0\%$ transmission in real $\kw$ space near $\kwz$ are given by power series in $\tilde k_1=k_1\!-\!0.5$ and~$k_2$ for~$k_2\not=0$,
\begin{eqnarray*}
  {\cal T}\kw = 1 &\iff& \omega = \omega_\text{\tiny max}(\kk) := \omega_0 - \ell_1\tilde k_1 - \ell_{21}\tilde k_1^2 - r_2 k_2^2 + \dots \,,\\
  {\cal T}\kw = 0 &\iff& \omega = \omega_\text{\tiny min}(\kk) := \omega_0 - \ell_1\tilde k_1 - \ell_{21}\tilde k_1^2 + t_2 k_2^2 + \dots \,,
\end{eqnarray*}
with $r_2\not=t_2$, as illustrated in Fig.~\ref{fig:transmission} (left), and the ellipses indicate $O(|(\tilde k_1,k_2)|^3)$.
In this example, all the coefficients are real---$\omega$ is a real-analytic function of $\kk$---and when $k_2$=0, the expressions for $\omega_\text{\tiny max}(\kk)$ and $\omega_\text{\tiny min}(\kk)$ are identical and the anomaly reduces to the single $k_1$-dependent frequency of the guided mode.  The reality of the coefficients can be proved for a slab that is symmetric in~$z$~\cite{ShipmanTu2012}.

The wavenumber $k_1$ controls the central frequency of an anomaly by shifting the frequency $\omega_0$ of the guided mode.  Thus no anomaly emerges with a perturbation of $k_1$; the spectral width of the resonance remains zero.
On the other hand, if $k_2$ is perturbed from $0$, a transmission anomaly of width on the order of $|t_2-r_2|(k_2)^2$ opens up.  The central frequency of the anomaly is, to linear order, unchanged because of the symmetric dependence of the scattering problem on~$k_2$ coming from the reflection symmetry of the structure and the Maxwell equations under the map $(x,y,z)\to(x,-y,z)$.

\medskip
{\bfseries Complex dispersion relation and generalized guided modes.}
As we have discussed, the perfect guided slab mode we have constructed is an eigenstate at a frequency $\omega_0$ embedded in the continuous spectrum for the Maxwell equations corresponding to a fixed wavenumber, say $\kk_0=(0.5,0)$ as in the figures.  One can consider the mode to be a finite-energy state, or a bound state, when viewed as a function of $z$ alone, forgetting its infinite extent in $x$ and $y$.

The destruction of the perfect guided mode under a perturbation of $k_2$ from $0$ is a manifestation of the instability of embedded eigenvalues under generic perturbations of a system.  The annihilation of a positive eigenfrequency corresponds to a pole of a scattering matrix moving off of the real $\omega$ axis as it attains a small negative imaginary part, and this marks the onset of resonance.  The complex poles are called ``scattering resonances".
They have a long history in scattering theory; see for example~\cite[Vol.\,IV, \S\,XXII]{ReedSimon1980d} on the Auger states and~\cite{Zworski1999,Zworski2011}.

A relation $D\kw=0$ that defines the (complex) frequency of a scattering matrix parameterized by~$\kk$ is known as the dispersion relation for generalized guided slab modes.  It is depicted in Fig.~\ref{fig:ComplexDispersion} for real~$\kk$.  A~guided mode corresponding to a real pair $\kwz$ is always exponentially confined to the slab and has no attenuation temporally or along the slab---it is a {\em perfect guided mode}, experiencing no damping (infinite quality factor).  A generalized guided mode, where either $\kk$ or $\omega$ has a nonzero imaginary part, has either temporal attenuation or spatial attenuation along the slab.  These modes underly the theory of ``leaky modes" as described, for example, in \cite{HausMiller1986,TikhodeevYablonskiMuljarov2002,PengTamirBertoni1975}.

In section~\ref{sec:scattering} of this paper, we analyze the case of real parallel wavevectors, i.e., for $\kk\in\RR^2$.  A generalized guided mode for real~$\kk$ and complex~$\omega$ always has exponential growth in $|z|$ away from the slab.  This spatial growth is well known in scattering theory, as in the Helmholtz resonator~\cite{Beale1973} or the Lamb model of a spring-mass system attached to a string,~\cite{Lamb1900}.

The analysis of transmission anomalies in section~\ref{sec:scattering} centers on perturbation of the scattering matrix around a {\em real} point $\kwz$ satisfying the dispersion relation $D\kwz=0$.  In the example of this section, one has, near the guided mode parameters $\kwz$,
\begin{eqnarray*}
  D\kw=0 &\iff& \omega = \omega_g(\kk) := \omega_0 - \ell_1\tilde k_1 - \ell_{21}\tilde k_1^2 - \ell_{22} k_2^2 + \dots \,
\end{eqnarray*}
with $\ell_1$ and $\ell_{21}$ real valued.
Observe that {\em the linear part of the the frequency $\omega$ of a generalized guided mode, as a function of $\kk-\kk_0=(\tilde k_1,k_2)$, is real and coincides with the loci of $100\%$ and $0\%$ transmission}.  This is proved in general in section~\ref{sec:transmission} following \cite{PtitsynaShipmanVenakides2008,Shipman2010}.  In the example of this section, $\omega$ is real when $k_2=0$.  In addition, $\Im\ell_{22}>0$, which means that when $k_2$ is perturbed from $0$, $\omega$ enters the lower-half complex plane, becoming a scattering resonance.

Denote by $\omega_\text{\tiny cent}$ the real part of the generalized guided-mode frequency as a function of real $\kk$,
\begin{equation*}
  \omega_\text{\tiny cent}(\kk) \,:=\, \omega_0 - \ell_1\tilde k_1 - \ell_{21}\,\tilde k_1^2 - \Re\ell_{22}\, k_2^2 + \dots.
\end{equation*}
{\em The central frequency $\omega_\text{\tiny cent}$ lies between the frequencies of 0\% and 100\% transmission}.  This is seen through the relation
\begin{equation*}
  (r_2-\Re\ell_{22})(t_2-\Re\ell_{22}) = -(\Im\ell_{22})^2\,.
\end{equation*}
Thus $\Im\ell_{22}$ is the geometric average of the differences $\left| r_2-\Re\ell_{22} \right|$ and $\left| t_2-\Re\ell_{22} \right|$, and
Proposition~\ref{prop:coefficients}(g) expresses them explicitly in terms of $\Im\ell_{22}$.

\medskip
{\bfseries Fano-type resonance.}
The formulas for anomalies presented in section~\ref{sec:anomalies} generalize the Fano peak-dip shape and were first proved in~\cite{ShipmanVenakides2005,PtitsynaShipmanVenakides2008} in the case of a one-dimensional parallel wavevector.  The transmission has the form
\begin{equation*}
  {\cal T}\kw^2 \,=\, t_0^2\,\frac{\,\left| \varpi + (t_2-\Re\ell_{22})k_2^2 + \dots\right|^2 \,}{\left| \varpi + ik_2^2\,\Im\ell_{22}+\dots \right|^2 }(1+\dots)\,,
\end{equation*}
in which $\varpi=\omega-\omega_\text{\tiny cent}(\kk)$ is the deviation of $\omega$ from the center of the resonance and the ellipses indicate higher-order terms.  Ignoring the higher-order terms, one obtains the Fano lineshape~\cite{Fano1961}
\begin{equation}\label{Fano}
  {\cal T}\kw^2 \,\approx\, \frac{(q+e)^2}{1+e^2}\,,
\end{equation}
in which $q$ and $e$ are defined through
\begin{equation*}
  \renewcommand{\arraystretch}{1.1}
\left.
  \begin{array}{ll}
    \varpi = \omega-\omega_\text{\tiny cent}(\kk) & \text{(deviation from central frequency)} \\
    \vspace{-2ex}\\
    \Gamma = 2k_2^2\,\Im\ell_{22}>0 & \text{(resonance width)} \\
    \vspace{-2ex}\\
    e = \displaystyle\frac{\varpi}{\Gamma/2} & \text{(normalized frequency)} \\
    \vspace{-2ex}\\
    q = \displaystyle\frac{t_2-\Re\ell_{22}}{\Im\ell_{22}} & \text{(asymmetry parameter).}
  \end{array}
\right.
\end{equation*}
The relation between the width of the resonance and the imaginary part of $\ell_{22}$ is a form of the Fermi Golden Rule~\cite[\S 12.6]{ReedSimon1980d}.

The term ``Fano resonance" originated as a description of peak-dip features of atomic and molecular spectra.  It is characterized universally by the coupling between a mode of a structure and radiation states, which results in extreme sensitivity of scattered fields around the frequency of the mode.  There are several formulas in the literature based on heuristics of mode-radiation coupling 
\cite{DurandPaidarovaGadea2001,FanJoannopoul2002,FanSuhJoannopoul2003}, including that of Fano (\ref{Fano})~\cite{Fano1961}.
The approach in~\cite{ShipmanVenakides2005} and in this paper assumes only the underlying equations (the Maxwell equations here) and proves rigorous formulas for scattering anomalies from them.  The analysis centers around the perturbation of poles of a scattering matrix (here the frequencies $\omega_g$ of a guided mode) around a pole on the real $\omega$ axis, as the system parameters are varied ($\kk$ or structural parameters).  This is an expression of universal applicability of the formulas to linear systems in electromagnetics, acoustics, and other continuous and discrete~\cite{PtitsynaShipman2012,ShipmanRibbeckSmith2010} systems.

\medskip
{\bfseries Quality factor.}
The quality factor (Q-factor) of a resonant mode tends to infinity as the mode approaches a perfect guided mode with no damping.
It can be defined in terms of the generalized guided mode associated with the frequency $\omega_g$, as the ratio of the {\em energy stored} in the mode within a volume to the {\em energy dissipated} from the mode within that volume in one temporal cycle.
Equation (\ref{Qfactor}) in section~\ref{sec:EnergyFluxAndDensity} allows one to relate this quantity to the resonant width and frequency,
\begin{equation}
  Q = \frac{\left|\Re\omega_g(\kk)\right|}{-2\,\Im\omega_g(\kk)} 
  \sim \frac{\,\left|\omega_0\right|\,}{\Gamma} = O(|k_2|^{-2})
  \qquad
  \text{(quality factor).}
\end{equation}

\medskip
{\bfseries Resonant field amplification.}
When an incident propagating field at real parameters $\kw$ near a pair $\kwz$ of a perfect guided slab mode is scattered, the field in the slab is highly amplified and resembles the guided mode, as shown Fig.~\ref{fig:fields} (middle).  These fields are called {\em guided resonances} \cite{FanJoannopoul2002}, and one thinks of them as a coupling between radiation (propagating modes in the ambient medium) and a guided mode.  Field amplification occurs around the real part of the frequency of the generalized guided mode $\omega=\omega_\text{\tiny cent}(\kk)$ for real perturbations $(\tilde k_1,k_2)$ from $\kk_0$, as shown in Fig.~\ref{fig:amplification} (top).

The frequency interval of amplification shrinks to the single guided-mode frequency $\omega_0$ as $k_2\to0$.  At $k_2=0$, no amplification is observed at frequencies near $\omega_0$.  This is because, at $\kk=\kk_0=(0.5,0)$ (or more generally $\kk_0\!=\!(k_1,0)$), the guided mode is a perfect, infinite-lifetime, finite-energy, spectrally embedded state and is thus decoupled from radiation states.
Along the real part of the generalized guided-mode frequency relation
%
%
$\omega = \omega_\text{\tiny cent}(\kk)$, field amplification is on the order of $c/|k_2|$, as observed in in Fig.~\ref{fig:amplification} (top).  This law is proved in a general setting in section~\ref{sec:amplification}.

\medskip
{\bfseries Scattering of an evanescent field.}
We have just seen that, when the parallel wavevector $\kk$ of an incident {\em propagating} field is set exactly to that of a real guided mode pair $\kwz$, but the frequency is allowed to vary from $\omega_0$, no high-amplitude response is excited in the slab.  On the other hand, an incident evanescent field at wavevector-frequency pairs $(\kk_0,\omega)$ produces amplification that blows up as $c/|\omega-\omega_0|$, a law proved in section~\ref{sec:amplification}, as shown in Fig.~\ref{fig:amplification} (bottom right).  This is because the scattering problem (\ref{leftscattering}), with $v_\rtp e^{ik_3^{0p}z}$ replaced by $v_\rte e^{ik_3^{0e}z}$, admits no solution (this is made precise in Theorem~\ref{thm:scattering}).

Along the relation $\omega=\omega_\text{\tiny cent}(\kk)$, field amplification is on the order of $c/|k_2|^2$, a law proved in section~\ref{sec:amplification}, as shown in Fig.~\ref{fig:amplification} (bottom left).

\begin{center}
\scalebox{0.52}{\includegraphics{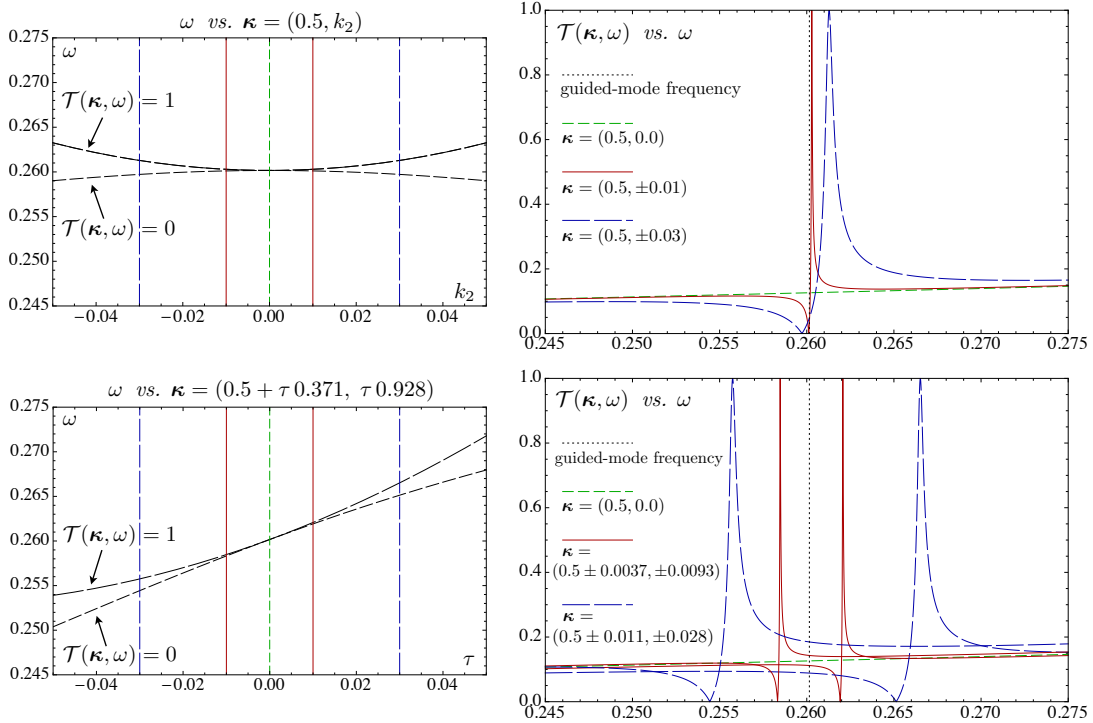}}
\captionof{figure}{\small 
The central frequency and width of transmission anomaly can be controlled by varying $\kk=(k_1,k_2)$ about the pair $\kk_0=(0.5,0)$ that admits a perfect guided mode at $\omega_0\approx0.26015$.  
The black curves in the left-hand plots are the loci of $100\%$ transmission ($a=0$) and $0\%$ transmission ($b=0$) along two vertical planes in real $(k_1,k_2,\omega)$ space.  They intersect quadratically at the guided-mode pair $\kwz$.
The vertical lines are the constant-$\kk$ lines along which the square root ${\cal T}\kw$ of the transmission is evaluated in the right-hand plots.
The wavenumber $k_1$ linearly controls the center of a resonance by changing the frequency $\omega_0$ of the perfect guided mode, and $k_2$ controls the width of the anomaly like $ck_2^2$ as the perfect guided mode is destroyed.}
\label{fig:transmission}
\end{center}

\begin{center}
\scalebox{0.37}{\includegraphics{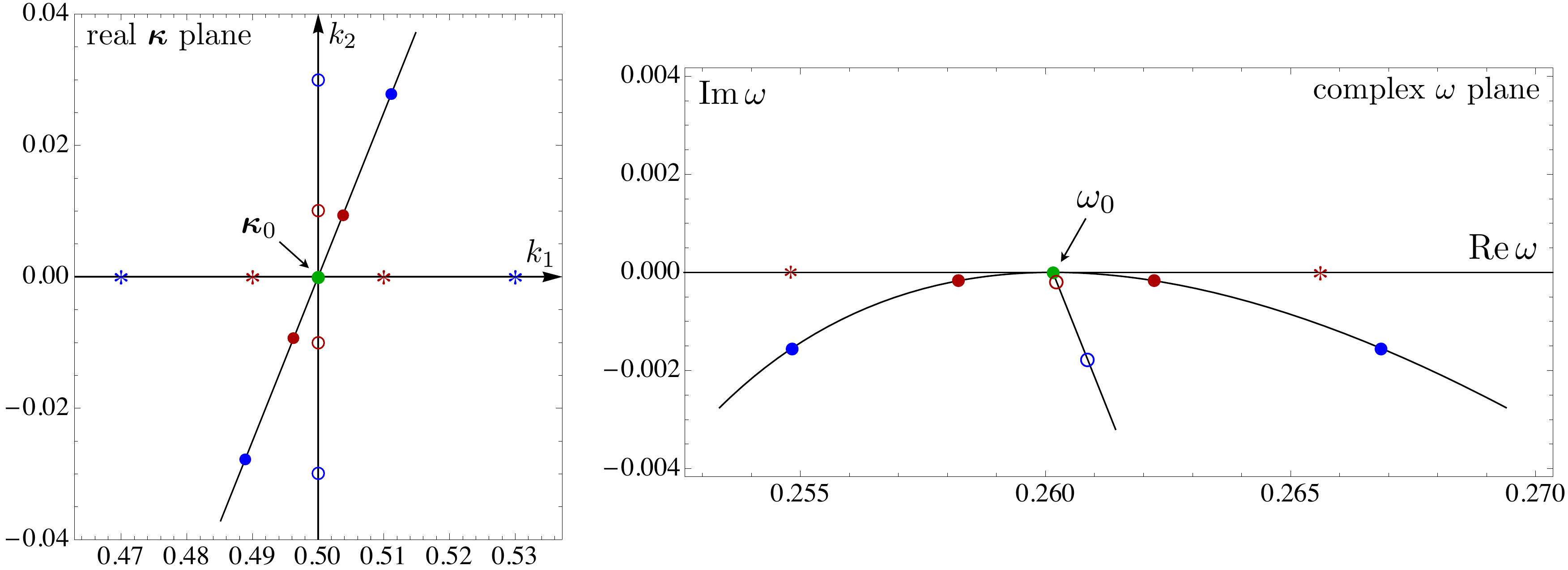}}
\captionof{figure}{\small This a depiction of the complex dispersion relation $D\kw=0$ for generalized guided modes for real wavevector $\kk\in\RR^2$ and complex frequency $\omega\in\CC$ near a real wavevector-frequency pair $\kwz\approx(0.5,0,0.26015)$ (green) for which the slab admits a true, exponentially decaying guided mode.
The relation is symmetric in $k_2$ and has the form $\omega-\omega_0 + \ell_1k_1 + \ell_{21}k_1^2 + \ell_{22}k_2^2 + \dots=0$, with $\ell_1$ real, $\Im\ell_{21}>0$ and $\Im\ell_{22}>0$.
{\bfseries (1)} A segment of the $k_1$-axis is mapped implicitly by $D\kw=0$ to the real $\omega$ axis.  This is due to the explicit construction, which decouples the two types of fields $(E_1,0,0,H_2)$ and $(0,E_2,H_1,0)$ when $k_2=0$.
{\bfseries (2)} The $k_2$-axis is mapped to a curve in the lower half $\omega$-plane emanating from $\omega_0$ in the direction of $-\ell_{22}$.  The hollow dots indicate the $\kk$ values used in Fig.~\ref{fig:transmission} (top) and the corresponding $\omega$ values satisfying $D\kw=0$.
{\bfseries (3)} The line $\kk=\tau(2/\sqrt{29},5/\sqrt{29})\approx\tau(0.371,0.928)$ is mapped to the curve in the lower half $\omega$-plane that is tangent to the real axis at $\kw$.  The solid dots indicate the $\kk$ values used in Fig.~\ref{fig:transmission} (bottom) and the corresponding $\omega$ values satisfying $D\kw=0$.}
\label{fig:ComplexDispersion}
\end{center}

\vspace{1ex}
\begin{center}
\scalebox{0.54}{\includegraphics{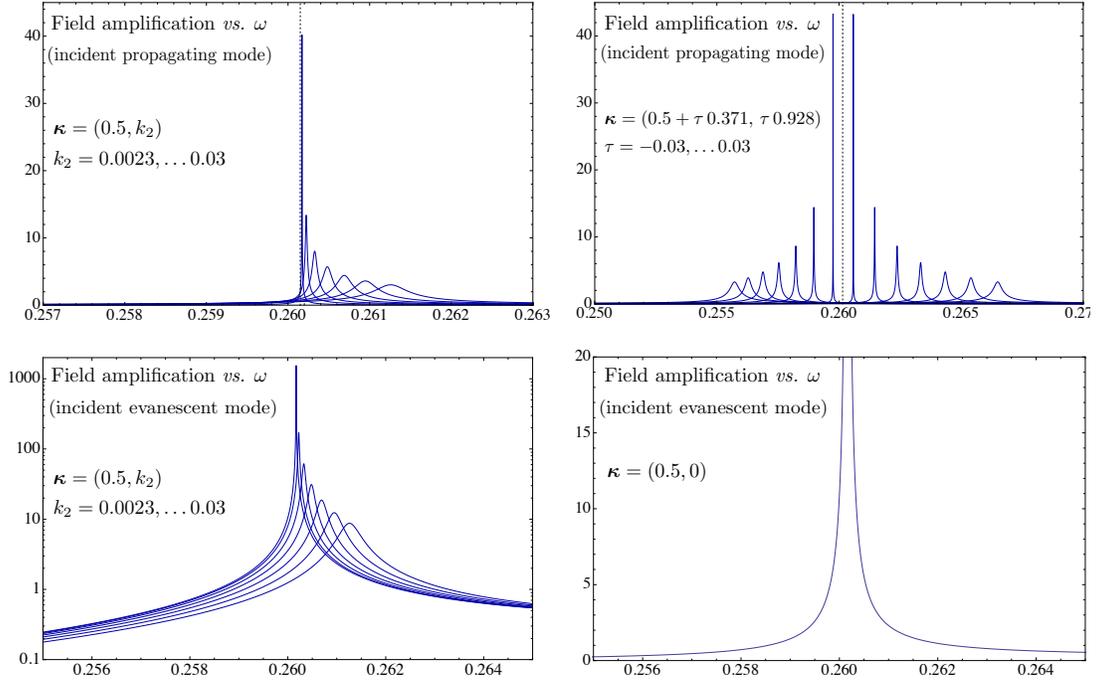}}
\captionof{figure}{\small
Resonant field amplification that occurs when an incident field strikes the slab can be measured by the modulus $\mathfrak A$ of the coefficient of the reflected evanescent mode.  Amplification occurs at wavevector-frequency pairs $\kw$ near those of a {real} pair $\kwz\approx(0.5,0,0.26015)$ that admits a guided mode.
{\bfseries Above:} The incident field is a propagating mode of the ambient space, and $\mathfrak A$ is plotted versus frequency $\omega$ for different values of $\kk$.  Maximal field amplification is of order $1/|\kk|$ as long as $k_2\not=0$, and the interval of amplification shrinks as $|\kk|^2$.  At $\kk=\kk_0$, $\mathfrak A=0$ and there is no field amplification.
{\bfseries Below:} The incident field is an evanescent harmonic, and $\log\mathfrak A$ is plotted versus frequency $\omega$ for different values of $\kk$ (left).  Maximal field amplification is of order $1/|\kk|^2$ (for $k_2\not=0$).  At $\kk=\kk_0$, field amplification is of order $1/|\omega-\omega_0|$.}
\label{fig:amplification}
\end{center}

\section{Electrodynamics in lossless layered media}\label{sec:ODEs} 

This section develops the concepts from electrodynamics of lossless layered media that will be needed for the analysis of resonant scattering.  The reader may wish to skip directly to section~\ref{sec:scattering} and refer back to this material as needed.  

Key results on the non-degeneracy of guided modes (Theorems~\ref{thm:scattering} and~\ref{thm:nondegeneracy}) require a careful treatment of energy density, flux, and velocity for time-harmonic electromagnetic fields for real and complex frequencies.
Section~\ref{sec:CanonicalODEs} reviews the reduction of Maxwell equations in linear, nondispersive, lossless anisotropic layered media to a linear ODE system, which we call the canonical Maxwell ODEs; details are relegated to Appendix~\ref{sec:reduction}.
Two new results, Theorems~\ref{thm:EnergyConservationLaw} and~\ref{thm:EnergyDensity} establish relationships between energy and the Maxwell ODEs.
Section~\ref{sec:periodic} discusses the Floquet theory for the periodic ambient medium and defines the rightward and leftward modes.

\subsection{The canonical Maxwell ODEs}\label{sec:CanonicalODEs}

The Maxwell equations for time-harmonic electromagnetic fields $(\EE(\rr),\HH(\rr),\DD(\rr),\BB(\rr))e^{-i\omega t}$ ($\omega\not=0$) in linear anisotropic media without sources are
\begin{equation}\label{MaxwellEqsTimeHar}
\mat{1.1}{0}{\nabla\times}{-\nabla\times}{0}
\col{1.1}{\EE}{\HH}
=
-\frac{i\omega}{c}
\col{1.1}{\DD}{\BB},\qquad
\col{1.1}{\DD}{\BB}
=
\mat{1.1}{\epsilon}{0}{0}{\mu}
\col{1.1}{\EE}{\HH}
\end{equation}
(in Gaussian units), where $c$ denotes the speed of light in a vacuum. We consider only non-dispersive and lossless media, which means that the dielectric permittivity $\epsilon$ and magnetic permeability $\mu$ are $3\times3$ Hermitian matrices that depend only on the spatial variable $\rr=(x,y,z)$.
A stratified medium is one for which $\epsilon$ and $\mu$ depend only on~$z$.  Thus
\begin{align}\label{LosslessLayeredMedia}
\epsilon=\epsilon(z)=\epsilon(z)^*,\;\;\mu=\mu(z)=\mu(z)^*,
\end{align}
where $*$ denotes the Hermitian conjugate (adjoint) of a matrix. 
Typically, a stratified medium consists of layers of different homogeneous materials.
We assume that each layer is passive.  Thus, for some positive constants $c_1$ and $c_2$, 
\begin{align}\label{PassiveMedia}
0<c_1I\leq \epsilon(z), \mu(z) \leq c_2 I
\end{align}
for all $z\in\RR$, where $I$ denotes the $3\times 3$ identity matrix.  The positive-definiteness of the material tensors has implications for energy density and flux which enters the analysis of transmission anomalies through Theorem~\ref{thm:nondegeneracy} on the non-degeneracy of guided modes.


Because of the translation invariance of stratified media along the $xy$ plane, solutions of equation (\ref{MaxwellEqsTimeHar}) are sought in the form
\begin{equation}\label{TangentialPlaneWavesRepr}
\col{1.1}{\EE}{\HH}=
\col{1.1}{\EE(z)}{\HH(z)}
e^{i(k_1x+k_2y)}\,,
\end{equation}
in which
%
  $\kk=(k_1,k_2)$
%
is the tangential wavevector.
The Maxwell equations (\ref{MaxwellEqsTimeHar})) for this type of solution can be reduced to a system of ordinary differential equations for the tangential electric and magnetic field components (see Appendix~\ref{sec:reduction}) denoted by
\[
\psi(z)=\left[E_1(z), E_2(z), H_1(z),H_2(z)\right]^T\,,
\]
\begin{equation}\label{canonical}
  -iJ\, \frac{d}{dz} \psi(z) \,=\, A(z;\kk,\omega)\, \psi(z),
\end{equation}
in which
\begin{equation*}
  J=
  \renewcommand{\arraystretch}{1.2}
\left[
  \begin{array}{cccc}
    0 & 0 & 0 & 1 \\
    0 & 0 & -1 & 0 \\
    0 & -1 & 0 & 0 \\
    1 & 0 & 0 & 0
  \end{array}
\right], \quad
J^*=J^{-1}=J\,.
\end{equation*}
The $4\times 4$ matrix $A(z;\kk,\omega)$ is given by (\ref{SchurComp}) in Appendix~\ref{sec:reduction} and is a Hermitian matrix for real $\kw$, $\omega\not=0$. We will refer to the ODEs in (\ref{canonical}) as the canonical \emph{Maxwell ODEs}.

Boundary conditions for electromagnetic fields in layered media require that tangential electric and magnetic field components $\psi(z)$ be continuous across the layers, which means $\psi$ is a continuous function of~$z$.
Thus solutions are those satisfying the integral equation
\begin{equation}\label{integralequation}
\psi(z)=\psi(z_0)+\int_{z_0}^z iJ^{-1}A(s;\kk,\omega)\psi(s)ds.
\end{equation}


The initial-value problem
\begin{equation}
  -iJ\, \frac{d}{dz} \psi(z) \,=\, A(z;\kk,\omega)\, \psi(z),
  \qquad
  \psi(z_0)=\psi_0
\end{equation} 
for the Maxwell ODEs (\ref{canonical}) has a unique solution
\begin{equation}
  \psi(z) \,=\, T(z_0,z)\, \psi(z_0)
\end{equation}
for each initial condition $\psi_0\in \CC^4$.   The $4\times 4$ matrix $T(z_0,z)$ is called the \emph{transfer matrix}.
It satisfies
%
\begin{align}
T(z_0,z)=T(z_1,z)T(z_0,z_1),\quad T(z_0,z_1)^{-1}=T(z_1,z_0),\quad T(z_0,z_0)=I,
\end{align}
for all $z_0,z_1,z\in \RR$.  As a function of $z$ it is continuous and satisfies the integral equation~(\ref{integralequation}) with $T(z_0,z)$ in place of $\psi(z)$.
%
%
As a function of the wavevector-frequency pair $(\kk,\omega)\in \CC^2\times \CC\setminus\{0\}$, it is analytic.
Perturbation analysis of analytic matrix-valued functions and their spectrum is central to the study of scattering problems, particularly 
those involving guided modes, as in this work, or slow light, as discussed in \cite{FigotinVitebskiy2006,Welters2011,Welters2011a,ShipmanWelters2012}, for instance.

\subsection{Electromagnetic energy flux and energy density}\label{sec:EnergyFluxAndDensity}

Two of the most important attributes of electrodynamics of layered media are energy flux and energy density.
The energy-conservation law for electromagnetics, known as Poynting's theorem, 
for time-harmonic fields $\EE(\rr,t)=\EE(\rr)e^{-i\omega t}$, $\HH(\rr,t)=\HH(\rr)e^{-i\omega t}$ with complex frequency $\omega\not=0$
%
%
in {\em linear, non-dispersive, lossless} media is
\begin{eqnarray}\label{APoyntingsThm}
  -\frac{d}{dt}\int_{V}U\left(\rr,t\right)d^3\rr
  \,=\, \int_{V}\nabla \cdot \operatorname{Re}\SS\left(\rr,t\right) d^3\rr
  \,=\, \int_{\partial V}\operatorname{Re}\SS\left(\rr,t\right) \cdot \mathbf{n}\, da\;.
\end{eqnarray}
The volume $V\subseteq \mathbb{R}^3$ is bounded, its boundary $\partial V$ has outward directed unit normal $\mathbf{n}$, and
\begin{equation*}
  \renewcommand{\arraystretch}{1.1}
\left.
  \begin{array}{ll}
    \SS\left( \rr,t\right)=\SS(\rr)e^{2\operatorname{Im}\omega t}\;, &
    \SS(\rr)=\displaystyle\frac{c}{8\pi }(\EE\left( \rr\right) \times \overline{\EE\left( \rr\right) })\;; \\
    \vspace{-1.5ex}\\
    U\left( \rr,t\right)=U(\rr)e^{2\operatorname{Im}\omega t}\;, &
    U(\rr)=\displaystyle\frac{1}{16\pi } \big( \left(\EE\left( \rr\right) ,\DD\left( \rr\right) \right) +\left( \BB\left( \rr\right) ,\HH\left( \rr\right) \right)
\big).
  \end{array}
\right.
\end{equation*}
The overline denotes complex conjugation and $(\cdot,\cdot)$ is the usual inner product in $\CC^3$. As in the case of purely oscillatory fields, i.e., $\operatorname{Im}\omega=0$, it is fruitful to make an interpretation of \ref{APoyntingsThm}. With the real vector $\operatorname{Re}\SS(\rr,t)$ interpreted as energy flux associated with the field and the real scalar $U(\rr,t)$ interpreted as energy density, equality \ref{APoyntingsThm} is then interpreted as an energy conservation law which says that the rate of decay of electromagnetic energy in a volume is a result of the outflow of electromagnetic energy through the surface of the volume.

For a damped oscillation, that is, a time-harmonic field with a complex frequency $\omega$ with $\Im\omega< 0$, the quality factor, or Q-factor, is commonly defined as the reciprocal of the relative rate of energy dissipation per temporal cycle 
\begin{equation}\label{Qfactor}
Q_\omega:=2\pi\frac{\text{Energy stored in system}}{\text{Energy lost per cycle}}=|\Re\omega|\frac{\int_{V}U\left(\rr,t\right)d^3\rr}{\int_{\partial V}\operatorname{Re}\SS\left(\rr,t\right) \cdot \mathbf{n}\, da}=\frac{|\Re\omega|}{-2\,\Im\omega} \,,
\end{equation}
where the latter equality follow from Poynting's theorem (\ref{APoyntingsThm}). For purely oscillatory fields, i.e., $\Im\omega=0$, the Q factor is defined to be $Q_\omega=+\infty$. 

For time-harmonic fields with real frequency, i.e., $\operatorname{Im}\omega=0$, this conservation law is the well known Poynting's theorem for time-harmonic fields. In this case, $\SS(\rr)$ is the complex Poynting vector and $\operatorname{Re}\SS\left(\rr\right)$, $U(\rr)$ are the time-averaged Poynting vector and total energy density, respectively, of the sinusoidally varying fields,
\begin{equation*}
  \renewcommand{\arraystretch}{1.1}
\left.
  \begin{array}{rcl}
    \operatorname{Re}\SS\left(\rr\right)
    &=& \left<c/4\pi \operatorname{Re} \EE(\rr,t)\times \operatorname{Re} \HH(\rr,t)\right> \\
    \vspace{-2ex} \\
    U\left(\rr\right)
    &=& \left<1/8\pi \left(\operatorname{Re} \EE(\rr,t)\cdot \operatorname{Re} \DD(\rr,t)+ \operatorname{Re} \BB(\rr,t)\cdot \operatorname{Re} \HH(\rr,t)\right)\right>
  \end{array}
\right.
\end{equation*}
in which $\cdot$ denotes the dot product of two vectors in $\RR^3$ and $\left<\;\right>$ denotes the time average over one period of a periodic function.  Thus \ref{APoyntingsThm} is the well known energy conservation law for the time-averaged energy flux and energy density for the sinusoidally varying fields with the same interpretation as above.

For fields of the form $[\EE(z),\HH(z)]^Te^{i(k_1x+k_2y)}$ (\ref{TangentialPlaneWavesRepr}) in a stratified medium, with $\kk=(k_1,k_2)\in \RR^2$, the energy conservation law (\ref{APoyntingsThm}) for the flow across the layers has a simplified form.  For the tangential electric and magnetic field components $\psi=\left[E_1, E_2, H_1,H_2\right]^T$, the region $V=[x_0,x_1]\times[y_0,y_1]\times[z_0,z_1]$, and the positively-directed normal vector $\mathbf{e}_3$, the energy conservation law \ref{APoyntingsThm} yields
\begin{eqnarray}\label{EnergyConsLawLayeredMedia}
-2\operatorname{Im}\omega\int_{z_0}^{z_1}U\left(z\right)dz
= \int_{z_0}^{z_1}\frac{d}{dz}(\operatorname{Re}\SS\left(z\right)\cdot \mathbf{e}_3)\,dz
= [\psi(z_1),\psi(z_1)]-[\psi(z_0),\psi(z_0)]\,,
\end{eqnarray}
in which the energy-flux form $[\cdot,\cdot]$ and the $z$-dependent Poynting vector $\SS(z)$ and energy flux $U(z)$ are
\begin{equation}\label{FluxSU}
\renewcommand{\arraystretch}{1.2}
\left.
  \begin{array}{rcl}
 [\psi_1,\psi_2] &:=& \displaystyle\frac{c}{16\pi}(J\psi_1,\psi_2),
  \qquad
  \psi_1, \psi_2\in \mathbb{C}^4\,, \\
\vspace{-2ex}\\
\SS(\rr)=\SS\left(z\right) &=& \displaystyle\frac{c}{8\pi }(\EE\left( z \right) \times \overline{\HH\left( z\right) }),\\
 \vspace{-2ex}\\
\Re \SS(\rr)\cdot\mathbf{e}_3 = \Re \SS(z)\cdot\mathbf{e}_3&=&[\psi(z),\psi(z)],\\
\vspace{-2ex}\\
U(\rr) = U\left( z\right) &=& \displaystyle\frac{1}{16\pi }\left( \left( \epsilon
\left( z\right) \EE\left( z\right) ,\EE\left( z\right) \right) +\left( \mu \left( z\right)
\HH\left( z\right) ,\HH\left( z\right) \right)
\right).    
  \end{array}
\right.
\end{equation}
where $(\cdot,\cdot)$ is the standard complex inner product in $\mathbb{C}^4$ with the convention of linearity in the second component and conjugate-linearity in the first.

We have derived an important result: Up to multiplication by  $(x_1-x_0)(y_1-y_0)e^{2\operatorname{Im}\omega t}$, the energy conservation law (\ref{EnergyConsLawLayeredMedia}) represents the differential energy flow across two parallel planes (RHS of (\ref{EnergyConsLawLayeredMedia})) in terms of the energy between these planes (LHS of (\ref{EnergyConsLawLayeredMedia})) for time-harmonic fields with complex frequency $\omega\not=0$ and real tangential wavevector $\kk$.

For damped oscillations, {\itshape i.e.,} $\operatorname{Im}\omega<0$, the positivity (\ref{PassiveMedia}) of the media make the LHS of (\ref{EnergyConsLawLayeredMedia}) positive, and so the RHS of (\ref{EnergyConsLawLayeredMedia}) may be interpreted as the outward flux of electromagnetic energy as it radiates from the interval $(z_0,z_1)$, decaying exponentially in time as $e^{2\operatorname{Im}\omega t}$.  For undamped oscillations, i.e., $\operatorname{Im}\omega=0$, one has the usual conservation of energy for lossless layered media, that is, $\Re \SS(\rr)\cdot\mathbf{e}_3=[\psi(z),\psi(z)]$ is the $z$-independent time-averaged energy flux across the layers.

\medskip
The energy flux $[\cdot,\cdot]$ is an indefinite sesquilinear form associated with the canonical ODEs (\ref{canonical}).  It will play a central role in the analysis of scattering and resonance.
%
%
The adjoint of a matrix $M$ with respect to $[\cdot,\cdot]$ is denoted by $M^\fadj$ and will be called the {\em flux-adjoint} of $M$.  It is equal to $M^\fadj=J^{-1}M^*J$, where $M^*=\overline{M}^T$ is the adjoint of $M$ with respect to the standard inner product $(\cdot, \cdot)$. 

If $\omega$ is real and nonzero and $\kk$ is real, then the matrix $A$ is self-adjoint with respect to $(\cdot,\cdot)$, i.e., $A^*=A$.
The matrix $J^{-1}A$ is self-adjoint and the transfer matrix $T$ is unitary with respect to $[\cdot,\cdot]$, i.e., for any $\psi_1, \psi_2\in \mathbb{C}^4$,
\begin{equation*}
\renewcommand{\arraystretch}{1.1}
\left.
  \begin{array}{ll}
  {} [J^{-1}A\psi_1,\psi_2] = [\psi_1,J^{-1}A\psi_2]\,, &\hspace{1em} (J^{-1}A)^\fadj = J^{-1}A\,, \\
  {} [T\psi_1,\psi_2] = [\psi_1,T^{-1}\psi_2]\,, &\hspace*{1em} T^\fadj = T^{-1}\,.
  \end{array}
\right.
\end{equation*}
The flux-unitarity of $T$ follows from the energy conservation law (\ref{EnergyConsLawLayeredMedia}) and expresses the conservation of energy in a $z$-interval $[z_0,z_1]$ through the principle of energy-flux invariance for lossless systems,
\begin{equation}\label{flux1}
  [\psi(z_1),\psi(z_1)]=[T(z_0,z_1)\psi(z_0),T(z_0,z_1)\psi(z_0)]=[\psi(z_0),\psi(z_0)]
  \quad \big((\kk,\omega) \,\text{ real}\big).
\end{equation}

If $\omega$ is complex, 
this flux is not invariant; instead one has (\ref{EnergyConsLawLayeredMedia}) which, in terms of the matrix $A$ in the Maxwell ODEs, is stated in the following theorem.

\begin{Theorem}[Energy Conservation Law]\label{thm:EnergyConservationLaw}
Let $\psi$ be a solution of the canonical Maxwell ODEs (\ref{canonical}) with nonzero frequency $\omega\in\mathbb{C}$ and real tangential wave vector $\kk\in \RR^2$.  For any $z_0,z_1\in \mathbb{R}$,
\begin{eqnarray*}
[\psi(z_1),\psi(z_1)]-[\psi(z_0),\psi(z_0)]=-\frac{c}{8\pi}\int_{z_0}^{z_1}\left(\psi(z),\operatorname{Im}(A(z;\kk,\omega))\psi(z)\right)dz=-2\operatorname{Im(\omega)}\int_{z_0}^{z_1}U\left(z\right)dz.
\end{eqnarray*}
\end{Theorem}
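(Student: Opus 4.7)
The plan is to establish the first equality by direct differentiation of the energy-flux form $[\psi(z),\psi(z)]$ along solutions of the canonical Maxwell ODE, and to obtain the second equality from the Poynting identity already stated in equation~(\ref{EnergyConsLawLayeredMedia}).

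First I rewrite (\ref{canonical}) as $\psi'(z)=iJ^{-1}A\psi(z)=iJA\psi(z)$, using $J^{-1}=J$ (immediate from $J^*=J$ and the fact that $J^2=I$ by direct computation from the explicit matrix). Then, with $[\psi_1,\psi_2]=\tfrac{c}{16\pi}(J\psi_1,\psi_2)$ and $(\cdot,\cdot)$ denoting the standard $\CC^4$ inner product (linear in the second slot), I differentiate:
\begin{equation*}
\frac{d}{dz}[\psi,\psi]=\frac{c}{16\pi}\bigl[(J\psi',\psi)+(J\psi,\psi')\bigr].
\end{equation*}
Substituting $\psi'=iJA\psi$, the first summand becomes $(iJ^2A\psi,\psi)=-i(A\psi,\psi)$ and the second becomes $i(J^*J\psi,A\psi)=i(\psi,A\psi)$. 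Using $(A\psi,\psi)=\overline{(\psi,A\psi)}$ together with $(\psi,A\psi)-\overline{(\psi,A\psi)}=2i\operatorname{Im}(\psi,A\psi)$, these combine to
\begin{equation*}
\frac{d}{dz}[\psi,\psi]=-\frac{c}{8\pi}\operatorname{Im}(\psi,A\psi)=-\frac{c}{8\pi}\bigl(\psi,\operatorname{Im}(A)\psi\bigr),
\end{equation*}
where $\operatorname{Im}(A):=(A-A^*)/(2i)$ is the Hermitian matrix whose quadratic form yields the scalar $\operatorname{Im}(\psi,A\psi)$. Integrating over $[z_0,z_1]$ produces the first equality of the theorem.

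For the second equality I invoke the Poynting identity~(\ref{EnergyConsLawLayeredMedia}), which has already been derived from Poynting's theorem~(\ref{APoyntingsThm}) for fields of the plane-wave form~(\ref{TangentialPlaneWavesRepr}) and asserts $[\psi(z_1),\psi(z_1)]-[\psi(z_0),\psi(z_0)]=-2\operatorname{Im}(\omega)\int_{z_0}^{z_1}U(z)\,dz$; chaining this with the first equality finishes the proof. (As a by-product, the pointwise identity $\tfrac{c}{8\pi}(\psi,\operatorname{Im}(A)\psi)=2\operatorname{Im}(\omega)\,U(z)$ falls out along any solution, since both expressions equal $-\tfrac{d}{dz}[\psi,\psi]$.)

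The only real obstacle is bookkeeping: tracking the sign and conjugate-linearity conventions in the inner-product computation, and confirming the scalar-to-matrix identity $\operatorname{Im}(\psi,A\psi)=(\psi,\operatorname{Im}(A)\psi)$, which reduces to the elementary $(\psi,A\psi)-\overline{(\psi,A\psi)}=(\psi,(A-A^*)\psi)$. The only structural ingredients beyond this are $J^2=I$ and $J^*=J$, both read off from the explicit matrix definition of $J$.
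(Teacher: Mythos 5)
Your proof is correct and follows essentially the same route as the paper: differentiate $[\psi(z),\psi(z)]$ along the solution, use the ODE and $J^*=J^{-1}=J$ to reduce the derivative to $-\tfrac{c}{8\pi}(\psi,\operatorname{Im}(A)\psi)$, integrate, and then invoke the already-derived Poynting identity~(\ref{EnergyConsLawLayeredMedia}) for the second equality. No substantive differences.
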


\begin{proof}
By the fact $\psi$ is a solution to the Maxwell ODEs (\ref{canonical}) then it is an integrable solution and hence satisfies for almost every $z\in\mathbb{R}$ the differential equations \ref{canonical}. Together with the definition of the energy-flux form in (\ref{FluxSU}) this implies for a.e.\ $z\in\mathbb{R}$,
\begin{eqnarray}
\frac{\partial}{\partial z}\left[\psi(z),\psi(z)\right]&=&\left[\frac{\partial}{\partial z}\psi(z),\psi(z)\right]+\left[\psi(z),\frac{\partial}{\partial z}\psi(z)\right]\notag\\
&=&\frac{c}{16\pi}\left(\left(iA(z;\kk,\omega)\psi(z),\psi(z)\right)+\left(\psi(z),iA(z;\kk,\omega)\psi(z)\right)\right)\notag\\
&=&-\frac{2c}{16\pi}\left(\psi(z),\frac{A(z;\kk,\omega)-A(z;\kk,\omega)^*}{2i}\psi(z)\right)\notag\\
&=&-\frac{c}{8\pi}\left(\psi(z),\operatorname{Im}(A(z;\kk,\omega))\psi(z)\right). \label{one}
\end{eqnarray}
with the LHS having the antiderivative the function $\left[\psi(z),\psi(z)\right]$.
On the other hand, by (\ref{EnergyConsLawLayeredMedia}),
%
%
the result follows by integrating (\ref{one}).
\end{proof}

Using Theorem \ref{thm:EnergyConservationLaw} one derives the following formula for the energy density $U(z)$ of any time-harmonic electromagnetic field with real $\kw$ in terms of its tangential components $\psi(z)$.

\begin{Theorem}[Energy Density]\label{thm:EnergyDensity}
Let $\psi$ be a solution of the canonical Maxwell ODEs (\ref{canonical}) with nonzero real frequency $\omega\in\RR$ and real tangential wave vector $\kk\in \RR^2$.  Then the energy density $U(z)$ is given by 
%
%
%
\begin{eqnarray*}
  U\left(z\right) \,=\, \frac{c}{16\pi} \left(\psi(z),\,\frac{\partial A}{\partial \omega}(z;\kk,\omega)\psi(z)\right)\,.
\end{eqnarray*}
\end{Theorem}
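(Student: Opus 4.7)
The strategy is to extract the formula from Theorem~\ref{thm:EnergyConservationLaw} by perturbing $\omega$ off the real axis. The key observation is that, because $A(z;\kk,\cdot)$ is analytic in $\omega$ and the matrices $A^{(n)}(z;\kk,\omega_0)$ are all Hermitian at real $\omega_0$ (obtained by differentiating the identity $A(z;\kk,\omega)^*=A(z;\kk,\omega)$ in $\omega$ along the real axis), the Taylor expansion at a real frequency $\omega$ gives
\begin{equation*}
  \Im A(z;\kk,\omega+i\eta) \;=\; \eta\,\frac{\partial A}{\partial\omega}(z;\kk,\omega) + O(\eta^3)
\end{equation*}
as real $\eta\to 0$, with the remainder uniform in $z$ on any compact interval where $\epsilon,\mu$ are bounded.

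Fix a real pair $(\kk,\omega)$ with $\omega\neq 0$, a compact interval $[z_0,z_1]$, and the given solution $\psi$ of (\ref{canonical}) on this interval. For small real $\eta$, let $\psi_\eta$ denote the unique solution of the canonical Maxwell ODEs at frequency $\omega+i\eta$ and wavevector $\kk$ with initial datum $\psi_\eta(z_0)=\psi(z_0)$. The analytic dependence of the transfer matrix on the frequency parameter yields $\psi_\eta\to\psi$ uniformly on $[z_0,z_1]$ as $\eta\to 0$. The eliminated normal components $E_3^\eta, H_3^\eta$, recovered algebraically through Appendix~\ref{sec:reduction}, also converge, so the associated fields $(\EE_\eta,\HH_\eta)$ and the energy density
\begin{equation*}
  U_\eta(z) \;=\; \frac{1}{16\pi}\Bigl((\epsilon(z)\EE_\eta(z),\EE_\eta(z))+(\mu(z)\HH_\eta(z),\HH_\eta(z))\Bigr)
\end{equation*}
converge uniformly to $(\EE,\HH)$ and to $U(z)$ as $\eta\to 0$.

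Applying Theorem~\ref{thm:EnergyConservationLaw} to $\psi_\eta$ at the complex frequency $\omega+i\eta$ and dividing by $-2\eta$ produces
\begin{equation*}
  \frac{c}{16\pi}\int_{z_0}^{z_1}\Bigl(\psi_\eta(z),\,\tfrac{1}{\eta}\Im A(z;\kk,\omega+i\eta)\,\psi_\eta(z)\Bigr)\,dz \;=\; \int_{z_0}^{z_1} U_\eta(z)\,dz.
\end{equation*}
Letting $\eta\to 0$ and invoking the Taylor expansion above together with the uniform convergence of $\psi_\eta$ and $U_\eta$ yields
\begin{equation*}
  \frac{c}{16\pi}\int_{z_0}^{z_1}\Bigl(\psi(z),\,\frac{\partial A}{\partial\omega}(z;\kk,\omega)\,\psi(z)\Bigr)\,dz \;=\; \int_{z_0}^{z_1} U(z)\,dz.
\end{equation*}
Since $z_0<z_1$ are arbitrary and both integrands are locally integrable, they coincide almost everywhere; on any subinterval where $\epsilon,\mu$ are continuous they agree pointwise, which is the stated formula.

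The main technical hurdle is justifying the first-order Taylor expansion of $\Im A$ and the passage to the $\eta\to 0$ limit under the integral; both rest on the explicit Schur-complement formula (\ref{SchurComp}) for $A(z;\kk,\omega)$ and the passivity bounds (\ref{PassiveMedia}), which together guarantee local boundedness of $\partial_\omega A$ and analytic dependence of the ODE solutions on $\omega$. Once these are in place, the algebraic content of the theorem falls out of Theorem~\ref{thm:EnergyConservationLaw} essentially by inspection.
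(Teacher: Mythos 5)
Your proposal is correct and follows essentially the same route as the paper: a limiting absorption argument that applies Theorem~\ref{thm:EnergyConservationLaw} at frequency $\omega+i\eta$, uses the Hermitian Taylor coefficients of $A$ to identify $\lim_{\eta\downarrow 0}\eta^{-1}\Im A$ with $\partial A/\partial\omega$, and passes to the limit using the continuity in $\omega$ of the solution and of the energy density established in Appendix~\ref{sec:reduction}. The only cosmetic difference is that you work with integrals over arbitrary $[z_0,z_1]$ and localize at the end, whereas the paper phrases the same limit as an equality in $L^\infty_{loc}(\RR)$.
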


\begin{proof}
Let $\psi$ be a solution of the canonical Maxwell ODEs (\ref{canonical}) with $\kk\in \RR^2$ and nonzero $\omega_0\in\RR$. We will show that 
\begin{eqnarray*}
  \frac{c}{16\pi}\left(\psi(z),\frac{\partial A}{\partial \omega}(z;\kk,\omega_0)\psi(z)\right)=U\left(z\right)
\end{eqnarray*}
for a.e.\ $z\in\mathbb{R}$.  We will prove this, suppressing the explicit dependence on $\kk$.

The idea is to use a type of limiting absorption principle with the integral form of the energy conservation law, that is, to consider Theorem \ref{thm:EnergyConservationLaw} for time-harmonic fields with frequency $\omega=\omega_0+i\eta$ with nonzero $\omega_0\in \RR$, $0<\eta\ll 1$ and take the limit as $\eta\to 0$. 

Because $\epsilon$ and $\mu$ depend only on $z$, it is proved in Section \ref{sec:reduction} that
$
A\in \mathcal{O}(\mathbb{C}\setminus \left\{ 0\right\} ,M_{4}\left( L^{\infty}\left( \mathbb{R}\right) \right))
$,
where $\mathcal{O}$ denotes holomorphic functions.
Thus for any $\omega _{0}\in \mathbb{R}\mathcal{\setminus }\left\{ 0\right\}$ there exist coefficient functions of $z$, $\left\{ A_{j}\left(\cdot \right) \right\} _{j=0}^{\infty }\subseteq M_{4}\left( L^{\infty}\left( \mathbb{R}\right) \right)$, such that the power series
\[
A\left( \cdot\, ;\omega \right) =\sum_{j=0}^{\infty }A_{j}\left( \cdot \right)
\left( \omega -\omega _{0}\right) ^{j},\text{ \ \ }\left\vert \omega -\omega
_{0}\right\vert \ll 1
\]
converges uniformly and absolutely in $M_{4}\left(L^{\infty}\left(\mathbb{R}\right) \right) $ to $A\left( \cdot\, ;\omega \right) $.  This
implies that in $M_{4}\left( \mathbb{C}\right) $ with the standard Euclidean metric the partial derivative $\frac{\partial A}{\partial \omega }\left( z;\omega_0 \right)=\lim_{\omega\to \omega_0}\frac{1}{\omega-\omega_0}\left(A(z;\omega)-A(z;\omega_0)\right)$
exists for a.e. $z\in \mathbb{R}$ and 
\[
A_{1}\left( z\right) =\frac{\partial A}{\partial \omega }\left( z;\omega_0 \right)\text{ for a.e. }z\in 
\mathbb{R}.
\]
Moreover, by the Hermitian properties (\ref{MatrixAinMaxwellODEsIsHermitian}) of $A$ it follows that  
$
A_{j}\left( z\right) ^{\ast }=A_{j}\left( z\right) \text{ for a.e. }z\in \mathbb{R}
$
for each $j$. This implies
\[
\operatorname{Im}A\left( \cdot\, ;\omega _{0}+i\eta \right) =\sum_{j=0}^{\infty
}A_{j}\left( \cdot \right) \operatorname{Im}\left( i^{j}\right) \eta ^{j},\text{ \ \ 
}0<\eta \ll 1
\]
and hence the normal limit%
\[
\lim_{\eta \downarrow 0}\frac{1}{\eta }\operatorname{Im}A\left( \cdot\, ;\omega
_{0}+i\eta \right) =A_{1}\left( \cdot \right) \text{ converges in }
M_{4}\left( L^{\infty }\left( \mathbb{R}\right) \right) .
\]

Now we know that $\psi \left( \cdot \right) $ is a solution to the Maxwell ODEs (\ref{canonical}) with nonzero real frequency $\omega _{0}$. \ Let $U\left(\cdot \right) $ denote its associated energy density. \ By the properties of the transfer matrix $T\left( z_0,z;\omega \right) $ we know that there exists $%
\psi_0 \in M_{4}\left( \mathbb{C}\right) $ such that $\psi \left( \cdot \right) =T\left( z_0,\cdot\, ;\omega
_{0}\right) \psi_0 $. \ Define $\psi \left( \cdot\, ;\omega _{0}+i\eta \right)
=T\left( z_0,\cdot\, ;\omega _{0}+i\eta \right) \psi_0 $ and its associated energy density by $U\left( \cdot\, ;\omega _{0}+i\eta \right) $. Then by continuity of the energy density as proven in Section \ref{sec:reduction} [see (\ref{ParaDepEnergyDens})] it follows that 
\begin{eqnarray*}
\lim_{\eta \downarrow 0}\psi \left( \cdot\, ;\omega _{0}+i\eta \right)=\psi \left( \cdot \right) \ \text{converges in }\left( L_{loc}^{\infty
}\left( \mathbb{R}\right) \right) ^{4}\text{ and }\lim_{\eta \downarrow 0}U\left( \cdot\, ;\omega _{0}+i\eta \right)=U\left(
\cdot \right) \text{ converges in }L_{loc}^{\infty }\left( \mathbb{R}\right) .
\end{eqnarray*}
Thus by the integral form of the energy conservation law, i.e., Theorem \ref{thm:EnergyConservationLaw}, it follows that%
\[
\frac{c}{16\pi }\left( \psi \left( \cdot\, ;\omega _{0}+i\eta \right) ,\frac{1%
}{\eta }\operatorname{Im}A\left( \cdot;\omega _{0}+i\eta \right) \psi \left( \cdot\,
;\omega _{0}+i\eta \right) \right) =U\left( \cdot\, ;\omega _{0}+i\eta \right) 
\]%
with equality in the sense of in $L_{loc}^{\infty }\left( \mathbb{R}\right)$ and hence by continuity 
\[
\frac{c}{16\pi }\left( \psi \left( \cdot \right) ,A_{1}\left( \cdot \right)
\psi \left( \cdot \right) \right) =\lim_{\eta \downarrow 0}\frac{c}{16\pi }%
\left( \psi \left( \cdot\, ;\omega _{0}+i\eta \right) ,\frac{1}{\eta }\operatorname{Im}
A\left( z;\omega _{0}+i\eta \right) \psi \left( \cdot\, ;\omega _{0}+i\eta
\right) \right) =U\left( \cdot \right) 
\]%
converges in $L_{loc}^{\infty }\left( \mathbb{R}\right) $. Therefore, from these facts it follows that%
\[
\frac{c}{16\pi }\left( \psi \left( z\right) ,\frac{\partial A}{\partial \omega }\left( z;\omega_0 \right)\psi \left( z\right)
\right) =U\left( z\right) \text{ for a.e. }z\in \mathbb{R}.
\]
This completes the proof.
\end{proof}

\subsection{A periodic ambient medium}\label{sec:periodic}  

Let the materials $\epsilon$ and $\mu$ of the ambient space (Fig.~1 outside the slab) be periodic with period $d$, i.e.,
\begin{align}
\epsilon(z+d)=\epsilon(z),\;\;\;\mu(z+d)=\mu(z).
\end{align}
Then for the Maxwell ODEs (\ref{canonical}) the propagator $iJ^{-1}A(z;\kk,\omega)$ for the field $\psi(z)$ along the $z$-axis is $d$-periodic.  According to the Floquet theory (see, {\it e.g.}, \cite[Ch.\!~II]{YakubovichStarzhinsk1975}), the general solution of the Maxwell ODEs is pseudo-periodic, meaning that the transfer matrix $T(0,z)$ is the product of a periodic matrix and an exponential matrix, $T(0,z)=F(z)e^{iKz}$,
\begin{equation*}
  \psi(z) \,=\, F(z) e^{iKz} \psi(0)\,,
  \quad
  F(z+d) = F(z)\,,
  \quad
  F(0) = I\,.
\end{equation*}
For a real pair $(\kk,\omega)$, $F(z)$ can be chosen to be flux-unitary and the constant-in-$z$ matrix $K$ to be flux-self-adjoint:
\begin{equation*}
  [F(z)\psi_1,\psi_2] = [\psi_1,F(z)^{-1}\psi_2]\,,
  \quad
  [K\psi_1,\psi_2] = [\psi_1,K\psi_2]\,.
\end{equation*}
In concise notation, $F(z)^\fadj=F(z)^{-1}$ and $K^\fadj=K$.  The matrix $T(0,d)=e^{iKd}$ is called the monodromy matrix for the sublattice $d\ZZ\subset\RR$. The flux-self-adjointness of $K$ implies that the conjugate of any eigenvalue is also an eigenvalue. Moreover, for a pair $\kwz\in\RR^2\times \RR/\{0\}$ these functions $F,K$ can also be chosen such that they are analytic in a complex neighborhood of that pair which follows from the properties (\ref{SmoothnessOfAinMaxwellDAEs}), (\ref{MatrixAinMaxwellODEsIsHermitian}) of $A$ and \cite{YakubovichStarzhinsk1975}[\S III.4.6]. We shall assume that this is the case in a complex neighborhood of a pair $\kwz$.

In the example of section \ref{sec:example} the ambient medium has $F(z)=I$ and $K=J^{-1}A$, with $A$ a constant matrix.  The condition that the ambient medium simultaneously allows a pair of propagating modes and a pair of evanescent modes is generalized for a periodic ambient medium by the condition that $K$ has two real eigenvalues $k_{-p}, k_{+p}$ and a pair of complex conjugate eigenvalues $k_{-e},k_{+e}$.  We assume that this is the case in a real neighborhood of a pair $(\kk_0,\omega_0)$.
Because $K\kw$ is analytic and $[K\psi_1,\psi_2]=[\psi_1,K\psi_1]$, it suffices to make the following assumption at $\kwz$:

\begin{assumption}\label{Assumption1}
For the real pair $(\kk_0,\omega_0)$, the matrix $\mathring{K}=K(\kk_0,\omega_0)$ is diagonalizable with diagonal form
\begin{equation}
\mathring{\tilde K} \,=\,
  \renewcommand{\arraystretch}{1.1}
\left[
  \begin{array}{cccc}
    \mathring{k}_{-p} & 0 & 0 & 0 \\
    0 & \mathring{k}_{+p} & 0 & 0 \\
    0 & 0 & \mathring{k}_{-e} & 0 \\
    0 & 0 & 0 & \mathring{k}_{+e}
  \end{array}
\right],
\qquad
\renewcommand{\arraystretch}{1.1}
\left\{
  \begin{array}{ll}
    \mathring{k}_{-p}\,, \mathring{k}_{+p}\in\RR\,, & \mathring{k}_{-p}\not=\mathring{k}_{+p}\,, \\
    \vspace{-2ex}  \\
    \mathring{k}_{-e}=\overline{\mathring{k}_{+e}}\,, & \Im\mathring{k}_{+e}>0\,.
  \end{array}
\right.
\end{equation}
In addition, the eigenvectors $\mathring{v}_{\pm p}$ of $\mathring K$ corresponding to $\mathring{k}_{\pm p}$ satisfy
\begin{equation*}
  [\mathring{v}_{+p},\mathring{v}_{+p}] > 0
  \qquad\text{and}\qquad
  [\mathring{v}_{-p},\mathring{v}_{-p}] < 0\,.
\end{equation*}
\end{assumption}

The eigenvalues are extensible to analytic functions $k_{\pm p}\kw$ and $k_{\pm e}\kw$ in a neighborhood of $\kwz$, and the eigenvectors $v_{\pm p}\kw$ and $v_{\pm e}\kw$ can be chosen to be analytic.  The conditions of Assumption~\ref{Assumption1} hold for all $\kw$ near $\kwz$.  The theory of linear algebra in the presence of an indefinite form \cite{GohbergLancasterRodman2005} allows the eigenvectors to be chosen such that, with respect to the basis $\{\vlp,\vrp,\vle,\vre\}$, energy flux and $K$ have the forms
\begin{equation}\label{canonicalform}
  \left([v_i,v_j]\right)_{i,j\in \{-p,+p,-e,+e\}} \,=\,
  \renewcommand{\arraystretch}{1.1}
\left[
  \begin{array}{cccc}
    -1 & 0 & 0 & 0 \\
    0 & 1 & 0 & 0 \\
    0 & 0 & 0 & 1 \\
    0 & 0 & 1 & 0
  \end{array}
\right]\, \; \big(\text{for $(\kk,\omega)$ real}\big),
\qquad
  \tilde K \,=\,
  \renewcommand{\arraystretch}{1.1}
\left[
  \begin{array}{cccc}
    k_{-p} & 0 & 0 & 0 \\
    0 & k_{+p} & 0 & 0 \\
    0 & 0 & k_{-e} & 0 \\
    0 & 0 & 0 & k_{+p}
  \end{array}
\right]\,.
\end{equation}
The diagonal form of $K$ holds in a complex neighborhood of $(\kk_0,\omega_0)$, whereas the form for energy flux holds only in a real neighborhood.

The general solution to the Maxwell ODE system is
\begin{equation*}
  \psi(z) \,=\, F(z)\left( a\,\vlp e^{ik_{-p}z} + b\,\vrp e^{ik_{+p}z} + c\,\vle e^{ik_{-e} z} + d\,\vre e^{ik_{+p} z} \right),
  \quad
  a,b,c,d\in\CC.
\end{equation*}
Equation (\ref{canonicalform}) indicates the energy-flux interactions among the four modes, which is independent of $z$:
\begin{equation}\label{conservation}
  [\psi(z),\psi(z)] \,=\, -|a|^2 + |b|^2 + c\bar d + \bar c d\qquad
  \text{for $(\kk,\omega)$ real}.
\end{equation}
The modes are designated as rightward or leftward as follows:
\begin{equation}\label{modes}
\renewcommand{\arraystretch}{1}
\left.
  \begin{array}{rl}
    \text{leftward propagating:} & \wlp(z) = F(z)\,\vlp e^{ik_{-p} z} \\
    \text{rightward propagating:} & \wrp(z) = F(z)\,\vrp e^{ik_{+p} z} \\
    \text{leftward evanescent:} & \wle(z) = F(z)\,\vle e^{ik_{-e} z} \\
    \text{rightward evanescent:} & \wre(z) = F(z)\,\vre e^{ik_{+e} z}
  \end{array}
\right.
\end{equation}
The modes $w_{\pm e}$ remain evanescent near $(\kk_0,\omega_0)$ because the eigenvalues $k_{\pm e}$ have nonzero imaginary parts.  The modes $w_{\pm p}$ attain exponential growth or decay, depending on the sign of $\Im\omega$.

\smallskip
Theorem~\ref{thm:EnergyIndRep}, proved later on, relates the energy density $U(z)$, as given by Theorem~\ref{thm:EnergyDensity} to the corresponding one for which the $z$-dependent propagator matrix $A$ is replaced by the ``effective" $z$-independent propagator matrix $K$.
Both give the same total energy in a full period,
\begin{equation}\label{AK}
\frac{c}{16\pi}\int_{0}^{d}\left(\frac{\partial A}{\partial \omega}(z;\kk_0,\omega_0)\psi_1(z),\psi_2(z)\right)dz=\int_{0}^{d}\left[\frac{dK}{d\omega}(\omega_0) e^{iK(\omega_0) z}\psi_1(0),e^{iK(\omega_0)z}\psi_2(0)\right]dz,
\end{equation}
This fact is used to show how the imaginary parts of the exponents $k_{\pm p}$ change as $\omega$ attains a small imaginary part.

\begin{Theorem}[Analytic continuation of modes]\label{thm:exponents}
  For $(\kk,\omega)$ near $(\kk_0,\omega_0)$ with $\kk$ real, both $\Im k_{+p}$ and $-\Im k_{-p}$ have the same sign as $\Im\omega$.
\end{Theorem}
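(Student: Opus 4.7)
The plan is to reduce the statement to a perturbative statement about $\partial k_{\pm p}/\partial\omega$ at the real base pair, show that this derivative is real and of definite sign by combining flux-self-adjoint eigenvalue perturbation with the energy density formula, and then Taylor expand in the imaginary direction of $\omega$.

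\emph{Step 1 (eigenvalue perturbation).}
Fix a real pair $(\kk,\omega)$ in a real neighborhood of $(\kk_0,\omega_0)$ on which Assumption~\ref{Assumption1} and the normalization~(\ref{canonicalform}) hold. Differentiating $K\kw v_{\pm p}=k_{\pm p}v_{\pm p}$ in $\omega$, taking the flux inner product with $v_{\pm p}$, and using that $K$ is flux-self-adjoint with real eigenvalue $k_{\pm p}$ at the real base point, the $\partial_\omega v_{\pm p}$ contributions cancel and one obtains
\begin{equation*}
   \frac{\partial k_{\pm p}}{\partial\omega}(\kk,\omega)\,[v_{\pm p},v_{\pm p}]
   \,=\,[\,\partial_\omega K\kw\,v_{\pm p},\,v_{\pm p}\,].
\end{equation*}

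\emph{Step 2 (energy density identification).}
Apply the identity~(\ref{AK}) to $\psi_1=\psi_2=v_{\pm p}$. Because $k_{\pm p}$ is real here, $e^{iKz}v_{\pm p}=e^{ik_{\pm p}z}v_{\pm p}$ contributes a unit modulus phase and the integrand on the right of~(\ref{AK}) is $z$-independent, giving
\begin{equation*}
   \int_0^d\!\bigl[\partial_\omega K\,e^{iKz}v_{\pm p},\,e^{iKz}v_{\pm p}\bigr]\,dz
   \,=\, d\,[\partial_\omega K\,v_{\pm p},\,v_{\pm p}].
\end{equation*}
By Theorem~\ref{thm:EnergyDensity} the left-hand side of~(\ref{AK}) is the period integral $\int_0^d U(z)\,dz$ of the energy density of the corresponding mode $\psi(z)=F(z)v_{\pm p}e^{ik_{\pm p}z}$. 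Positive-definiteness of $\epsilon,\mu$ in (\ref{PassiveMedia}) forces $U(z)>0$ for almost every $z$, so this integral is strictly positive.

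\emph{Step 3 (sign of the derivative).}
Combining Steps~1 and~2 and using $[v_{+p},v_{+p}]=+1$, $[v_{-p},v_{-p}]=-1$ from~(\ref{canonicalform}),
\begin{equation*}
   d\cdot\frac{\partial k_{+p}}{\partial\omega}(\kk,\omega)\,=\,\frac{16\pi}{c}\int_0^d U(z)\,dz\,>\,0,
   \qquad
   d\cdot\frac{\partial k_{-p}}{\partial\omega}(\kk,\omega)\,=\,-\frac{16\pi}{c}\int_0^d U(z)\,dz\,<\,0,
\end{equation*}
at every real $(\kk,\omega)$ in the neighborhood.

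\emph{Step 4 (Taylor expansion in $\Im\omega$).}
Analyticity of $k_{\pm p}\kw$ gives, for real $\kk$ and $\omega=\omega'+i\eta$ with $\omega'$ real and $\eta$ small,
\begin{equation*}
   k_{\pm p}(\kk,\omega'+i\eta)
   \,=\,k_{\pm p}(\kk,\omega')+i\eta\,\frac{\partial k_{\pm p}}{\partial\omega}(\kk,\omega')+O(\eta^2).
\end{equation*}
Since $k_{\pm p}(\kk,\omega')$ is real and the derivative is real of definite sign by Step~3, $\Im k_{+p}$ and $\eta$ share a sign while $\Im k_{-p}$ and $\eta$ are of opposite sign, which is the claim.

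The only substantive obstacle is justifying the identity~(\ref{AK}) that links the $z$-dependent propagator $A$ to the effective constant matrix $K$; this is exactly the content of Theorem~\ref{thm:EnergyIndRep}, proved later. The rest of the argument is then a clean application of indefinite-inner-product eigenvalue perturbation together with positivity of the energy density.
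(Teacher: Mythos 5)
Your proof is correct and follows essentially the same route as the paper's: differentiate the eigenvalue relation for the flux-self-adjoint $K$, identify $[\partial_\omega K\,v_{\pm p},v_{\pm p}]$ with the positive period integral of the energy density via~(\ref{AK}) and Theorem~\ref{thm:EnergyDensity}, and conclude that $\partial k_{\pm p}/\partial\omega$ has the sign of $[v_{\pm p},v_{\pm p}]$. Your Step 4 merely makes explicit the Taylor expansion in $\Im\omega$ that the paper leaves implicit in ``the theorem follows'' (and the stray factor $16\pi/c$ in Step 3 is harmless, since only the sign matters).
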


\begin{proof}
The proof is essentially based on the well-known principle that for propagating electromagnetic Bloch waves in a passive and lossless periodic medium, the group velocity equals the energy velocity which points in the same direction as the energy flux since energy density must be positive.

Begin by differentiating the relation $(K(\omega)-k_{\pm p}(\omega))v_{\pm p}(\omega)=0$ with respect to $\omega$ and applying $[v_{\pm p},\cdot]$ to the result yields
\begin{equation}\label{kK}
  \left[ v_{\pm p},\left(\frac{\partial K}{\partial\omega}-\frac{\partial k_{\pm p}}{\omega}\right) v_{\pm p} \right]
  + \left[ v_{\pm p}, (K-k_{\pm p}) \frac{\partial v_{\pm p}}{\partial\omega} \right] \,=\, 0.
\end{equation}
By the flux-self-adjointness of $K$ and the reality of $k_{\pm p}$ for real $\kk$ and $\omega$, the second term vanishes, leaving
\begin{equation*}
  \frac{\partial k_{\pm p}}{\partial\omega}[v_{\pm p},v_{\pm p}] = \left[v_{\pm p}, \frac{\partial K}{\partial\omega}v_{\pm p} \right].
\end{equation*}
In (\ref{AK}), put $\psi_1(0)=\psi_2(0)=v_{\pm p}$.  Then $\psi_1(z)=\psi_2(z)$, and the left-hand side is positive because the integrand, by Theorem~\ref{thm:EnergyDensity}, is equal to the positive energy density $U(z)$ given by~(\ref{FluxSU}).  Since $e^{iKz}\psi_1(0)=e^{ik_{\pm p}z}v_{\pm p}$, the right-hand side of (\ref{AK}) is
\begin{equation*}
  \int_0^{d} \left[ \frac{dK}{d\omega}v_{\pm p},\,v_{\pm p} \right]dz
  = d \left[ \frac{dK}{d\omega}v_{\pm p},\,v_{\pm p} \right]\,,
\end{equation*}
and the right-hand side of (\ref{kK}) is therefore positive.
Thus $\partial k_{\pm p}/\partial\omega$ and $[v_{\pm p},v_{\pm p}]$ have the same sign and the theorem follows.
\end{proof}

Denote by $\Plp$, $\Prp$, $\Ple$, and $\Pre$ the rank-1 complementary projections (their sum is the identity) onto the corresponding eigenspaces of $K$ and by $\Pl$ and $\Pr$ the rank-2 complementary projections onto the {\em leftward and rightward spaces} of $\CC^4$:
\begin{equation}\label{projections}
  \renewcommand{\arraystretch}{1}
\left.
  \begin{array}{ll}
  \Pl = \Plp + \Ple\,, & \Range(\Pl) = \sspan\{\vlp,\vle\} \\
  \vspace{-2ex}\\
  \Pr = \Prp + \Pre\,, & \Range(\Pr) = \sspan\{\vrp,\vre\}    
  \end{array}.
\right.
\end{equation}

The form of the flux interaction matrix in (\ref{canonicalform}) plays an important role in the way fields are scattered by an obstacle, particularly at resonance.  The critical fact is that each oscillatory mode carries energy in isolation while the evanescent modes induce energy flux only when superimposed with one another.  This idea is manifest in the flux-adjoints of the projection operators: 
Both projections $\Plp$ and $\Prp$ onto the ``propagating subspaces" are flux-self-adjoint, and the projections $\Ple$ and $\Pre$ onto the ``evanescent subspaces" are flux-adjoints of each other:
\begin{equation}\label{adjoints}
  [\Plp\psi_1,\psi_2] = [\psi_1,\Plp\psi_2]\,,\;\;\;
  [\Prp\psi_1,\psi_2] = [\psi_1,\Prp\psi_2]\,,\;\;\;
  [\Ple\psi_1,\psi_2] = [\psi_1,\Pre\psi_2]\,.
\end{equation}
In concise notation, $P_{\pm p}^\fadj=P_{\pm p}$ and $\Ple^\fadj=\Pre$.

\section{Resonant scattering by a defect layer}\label{sec:scattering} 

This section, especially~\ref{sec:anomalies}, contains the heart of this work, namely a rigorous analysis of guided modes and the anomalous scattering behavior exemplified by the system in section~\ref{sec:example}, specifically, resonant transmission anomalies and field amplification.

We investigate scattering of a time-harmonic electromagnetic field by a defective layer, or slab, embedded in a layered ambient medium (Fig.~\ref{fig:slab}).  The materials $\epsilon,\mu$ are lossless and passive---they satisfy (\ref{LosslessLayeredMedia}) and (\ref{PassiveMedia}).  We work near a real wavevector-frequency pair $\kwz$ at which the ambient medium admits rightward and leftward propagating and evanescent modes, according to Assumption \ref{Assumption1}.  $P_\rt$ and $P_\lf$ are the projections (\ref{projections}) onto the two-dimensional rightward and leftward mode spaces, as described on page~\pageref{modes}, and $T=T(0,L)$ denotes the transfer matrix across the slab.

The analysis of transmission anomalies is based on a few key points, which are developed rigorously in the subsections below.
\begin{enumerate}
  \item  An analytic eigenvalue $\ell\kw$ (of the matrix $T\Pl-\Pr$ introduced below), whose zero set defines the dispersion relation for generalized guided modes, is {\em algebraically simple} at a real pair $\kwz$ of a guided slab mode.
  This was hypothesized in previous works~\cite{ShipmanVenakides2005,Shipman2010} on transmission anomalies and is established for layered media in Theorem~\ref{thm:scattering}.
  \item  The far-field scattering matrix is the ratio of analytic functions of $\kw$
  whose {\em zero sets intersect at a real pair} $\kwz$ of a guided slab mode,
\begin{equation}\label{S0}
  S_0\kw
  \,=\, \mat{1.3}{t_{+p}\kw}{r_{+p}\kw}{r_{-p}\kw}{t_{-p}\kw}
  \,=\; \frac{1}{\ell\kw} \mat{1.3}{b_1\kw}{a_2\kw}{a_1\kw}{b_2\kw}\,.
\end{equation}
  \item  For $\kk\in\RR^2$, the poles in the $\omega$-plane of the full scattering matrix are in the lower half-plane.
  \item  The eigenvalue $\ell\kw$ is nondegenerate in the sense that
\begin{equation*}
  \frac{\partial\ell}{\partial\omega}\kwz \,\not=\, 0\,.  
\end{equation*}
The proof is nontrivial and requires a careful treatment of electromagnetic energy density and flux in layered media.  This generic condition was hypothesized in~\cite{ShipmanVenakides2005,Shipman2010} and is now established for layered media in Theorem~\ref{thm:nondegeneracy}.
\end{enumerate}

\subsection{The scattering problem and guided modes}

Let the slab extend from $z=0$ to $z=L$ as in Fig.~\ref{fig:layered}.
It is convenient to define the point $z=L$ so that the electromagnetic coefficients $\epsilon(z)$ and $\mu(z)$ in the period $[-d,0]$ are identical to those in the period $[L,L+d]$ ({\itshape i.e.}, $\epsilon(z+L+d)=\epsilon(z)$ and $\mu(z+L+d)=\mu(z)$ for all $z\in[-d,0]$). Thus, in the ambient space, any solution $\psi$ of the Maxwell ODEs (\ref{canonical}) has the form
\begin{equation}\label{ambientfield}
\renewcommand{\arraystretch}{1.0}
\left.
  \begin{array}{rl}
  z<0: & \psi(z) \;=\; F(z)e^{iKz}\psi(0) \,, \\
  \vspace{-1.5ex}\\
  z>L: & \psi(z) \;=\; F(z-L)e^{iK(z-L)}\psi(L)\,,    
  \end{array}
\right.
\end{equation}
with $\psi(0)$ and $\psi(L)$ related through the flux-unitary transfer matrix across the slab $T=T(0,L)$,
\begin{equation}\label{SlabTransferMatrix}
  T\psi(0)=\psi(L),
  \qquad
  T := T(0,L),
  \qquad
  [T\psi_1,\psi_2] = [\psi_1,T^{-1}\psi_2].
\end{equation}

The scattering problem for this stratified medium, as illustrated in Fig.~\ref{fig:scattering}, can be described as follows.  A~harmonic source located to the left of some point $z_-<0$, emits a field that, in the region $z_-<z\leq0$ is of the rightward form $\psi_+^\inc(z) = j_{+p}\wrp(z) + j_{+e}\wre(z)$ (see~(\ref{modes})), and a source to the right of a point $z_+>L$ emits a field that, in the region $L\leq z<z_+$ is of the leftward form $\psi_-^\inc(z) = j_{-p}\wlp(z-L) + j_{-e}\wle(z-L)$.  In the specified domains, these fields are designated as {\em incoming}, or incident, as they are directed from their sources toward the slab.  In the absence of the slab, with $L=0$, they would continue indefinitely to $\pm\infty$ where they would become outgoing fields.  But with the slab present, the {\em outgoing field} is modified.  To the left of the slab ($z_-<z\leq0$), it is of the leftward form $\psi_-^\out(z)=u_{-p}\wlp(z)+u_{-e}\wle(z)$, and to the right ($L\leq z<z_+$) it is rightward, $\psi_+^\out(z)=u_{+p}\wrp(z-L)+u_{+e}\wre(z-L)$. 

A solution $\psi(z)$ to the scattering problem is a solution of the Maxwell ODEs (\ref{canonical}) that is decomposed outside the slab into incoming and outgoing parts.  This decomposition expands the solution~(\ref{ambientfield}) into physically meaningful modes,
\begin{equation}\label{totalfield}
\renewcommand{\arraystretch}{1.0}
\left.
  \begin{array}{rrl}
  z<0: & \psi(z) &=\; \psi_-^\out(z) + \psi_+^\inc(z) \\
           && =\; u_{-p}\wlp(z)+u_{-e}\wle(z) + j_{+p}\wrp(z) + j_{+e}\wre(z) \\
           && =\; F(z)\big( u_{-p}\vlp e^{ik_{-p}z} +u_{-e}\vle e^{i k_{-e} z} + j_{+p}\vrp e^{ik_{+p}z} + j_{+e}\vre e^{ik_{+e} z} \big)\,, \\
  \vspace{-1ex}\\
  z>L: & \psi(z) &=\; \psi_-^\inc(z) + \psi_+^\out(z) \\
            && =\; j_{-p}\wlp(z-L) + j_{-e}\wle(z-L) + u_{+p}\wrp(z-L)+u_{+e}\wre(z-L) \\   
            && =\; F(z-L) \big( j_{-p}\vlp e^{ik_{-p}(z-L)} + j_{-e}\vle e^{ik_{-e}(z-L)} + u_{+p}\vrp e^{ik_{+p}(z-L)}+u_{+e}\vre e^{ik_{+e}(z-L)} \big)\,.    
  \end{array}
\right.
\end{equation}

\begin{figure}
\centerline{\scalebox{0.4}{\includegraphics{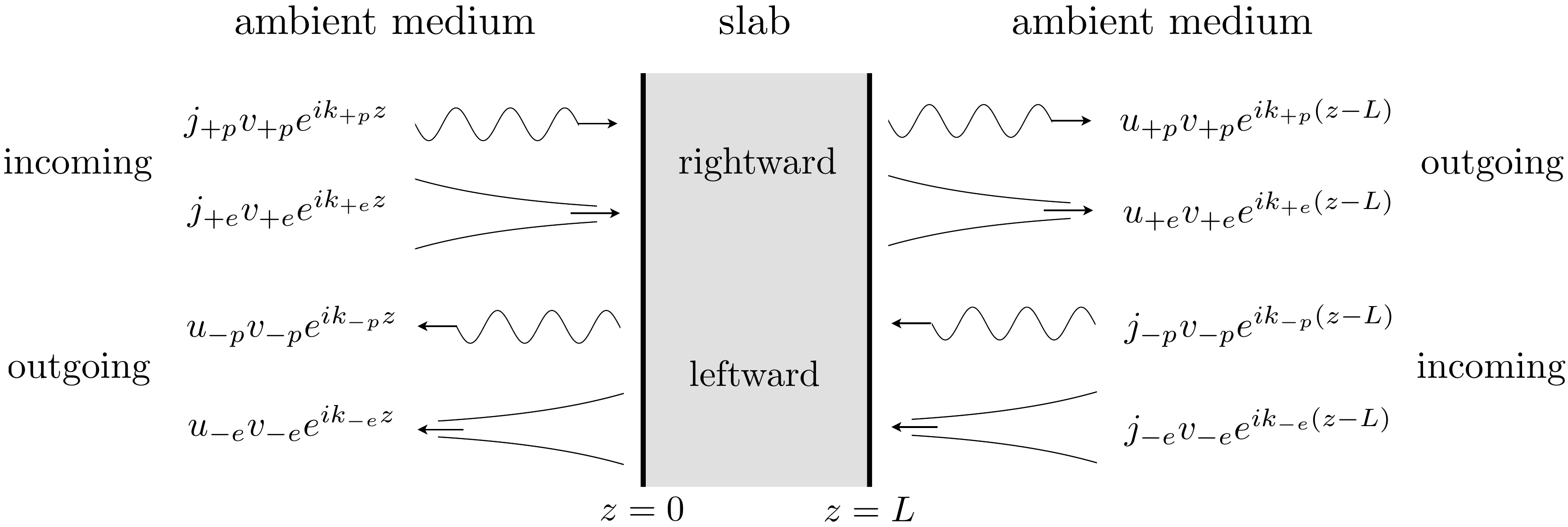}}}
\caption{\small  Scattering of harmonic source fields in a periodic ambient medium by a defect layer (slab), at frequency $\omega$ and wavevector $\kk$ parallel to the slab.  Sources on each side of the slab emit fields, which arrive at the slab as {\em incoming} propagating and evanescent modes.  Scattering by the slab results in an {\em outgoing} field, whose modes are directed in the opposite direction to the incoming modes.  The periodic factor $F(z)$ (left) and $F(z-L)$ (right) are omitted.}
\label{fig:scattering}
\end{figure}

The field $\psi(z)$ is determined by $\psi(0)$ or $\psi(L)$ alone, and these boundary values are related through the slab transfer matrix $T$ by $T\psi(0)=\psi(L)$, which has the block form
\begin{equation}\label{transfer2}
  \mat{1.3}{T_{--}}{T_{-+}}{T_{+-}}{T_{++}}
  \col{1.3}{\psi_-^\out(0)}{\psi_+^\inc(0)}
  \,=\,
  \col{1.3}{\psi_-^\inc(L)}{\psi_+^\out(L)},
\end{equation}
\begin{equation*}
  T_{--} = P_- T P_-\,,\quad
  T_{-+} = P_- T P_+\,,\quad
  T_{+-} = P_+ T P_-\,,\quad
  T_{++} = P_+ T P_+\,.
\end{equation*}
This equation can be rearranged to separate the  incoming (given) outgoing (unknown) fields,
\begin{equation}\label{scattering1}
  \mat{1.3}{-I}{T_{-+}}{0}{T_{++}}
  \col{1.3}{\psi_-^\inc(L)}{\psi_+^\inc(0)}
  +
  \mat{1.3}{T_{--}}{0}{T_{+-}}{-I}
  \col{1.3}{\psi_-^\out(0)}{\psi_+^\out(L)}
  \,=\, 0\,.
\end{equation}
By defining incoming and outgoing vectors in $\CC^4$ by
\begin{equation*}
  \Psi^\inc = \psi_-^\inc(L) + \psi_+^\inc(0)\,,
  \qquad
  \Psi^\out = \psi_-^\out(0) + \psi_+^\out(L)\,,
\end{equation*}
the form (\ref{scattering1}) of the scattering problem may be written concisely as
\begin{equation}\label{scattering2}
  (T\Pl - \Pr) \Psi^\out + (T\Pr - \Pl)\Psi^\inc \,=\, 0\,.
  \qquad
  \text{(scattering problem)}
\end{equation}
This formulation reduces the scattering problem to finite-dimensional linear algebra involving the analytic slab transfer matrix $T=T\kw$ and the eigenspaces of $K=K\kw$, where $e^{iKd}$ is the analytic monodromy matrix for the periodic ambient medium.

The scattering problem is uniquely solvable for $\Psi^\out$ whenever $T\Pl-\Pr$ is invertible.  When it is not invertible, there is a nonzero vector $\Psi^\out$ that solves (\ref{scattering2}) with $\Psi^\inc=0$.
In this case, the projections of the vector $\Psi^\out$ onto the leftward and rightward subspaces are the traces of a solution $\psig(z)$ of the Maxwell ODEs that is outgoing as $|z|\to\infty$.  Because it has no incoming component on either side, conservation of energy for {\em real} $\kw$ (\ref{conservation}) requires that it be exponentially decaying as $|z|\to\infty$.  Such a field is a {\em guided mode} of the slab, and its vector $\Psi^\out$ of traces at $z=0$ and $z=L$ satisfies
\begin{equation}\label{guidedmode1}
  (T\Pl - \Pr) \Psi^\guided  \,=\, 0\,.
  \qquad
  \text{(guided-mode equation)}
\end{equation}

Observe that (\ref{scattering2}) and (\ref{guidedmode1}) for scattering fields and guided modes remain unchanged if the factor $F(z)$ is removed from the fields (\ref{totalfield}) in the ambient medium since $F(0)=I$.  This amounts to replacing the transfer matrix $F(z)e^{iKz}$ for the ambient medium with the simple exponential $e^{iKz}$, or, equivalently, replacing the $z$-periodic matrix $A(z)$ with the constant matrix $JK$ in the Maxwell ODEs to obtain
\begin{equation*}
  \frac{d}{dz}\psi = iK\psi.
\end{equation*}
So we might as well be working with a homogeneous ``effective" ambient medium in place of the periodic one.  Solutions need only be multiplied by $F(z)$ for $z<0$ and $F(z-L)$ for $z>L$ to obtain solutions of the original problem.

When $(\kk,\omega)$ is not real, $\psi^\guided(z)$ is known as a {\em generalized guided mode} or a {\em guided resonance} \cite{FanJoannopoul2002}.  For real wavevector $\kk$, one necessarily has $\Im\omega\leq0$, and, if $\Im\omega<0$, the mode grows exponentially as $z\to\infty$ or $z\to-\infty$.  That $\Im\omega\leq0$ is the statement that the resolvent of the scattering matrix (developed below) has its poles in the lower half plane.

\begin{Theorem}[Poles of the resolvent and generalized guided modes]\label{thm:resolvent}
  If $(\kk,\omega)$ with $\kk\in\RR^2$ is near a point $(\kk_0,\omega_0)$ at which Assumption~\ref{Assumption1} is satisfied and if $(T\Pl-\Pr)|_{\kw}\Psi^\guided=0$, then either $\Im\omega=0$ and $\psi^\guided(z)$ is exponentially decaying as $|z|\to\infty$ or $\Im\omega<0$ and $\psi^\guided(z)$ is exponentially growing as $z\to\infty$ or $z\to-\infty$.
\end{Theorem}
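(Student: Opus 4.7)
The plan is to apply the Energy Conservation Law (Theorem~\ref{thm:EnergyConservationLaw}) to $\psi^\guided$ on expanding intervals $[-R,L+R]$ and to use Theorem~\ref{thm:exponents} to track how the four ambient modes behave asymptotically as $\Im\omega$ varies. Since $(T\Pl-\Pr)\Psi^\guided=0$, one has $\psi^\guided(0)=\Pl\Psi^\guided$ and $\psi^\guided(L)=\Pr\Psi^\guided$, so outside the slab
\begin{align*}
\psi^\guided(z) &= u_{-p}\wlp(z) + u_{-e}\wle(z) \quad (z<0), \\
\psi^\guided(z) &= u_{+p}\wrp(z-L) + u_{+e}\wre(z-L) \quad (z>L).
\end{align*}
Theorem~\ref{thm:EnergyConservationLaw} then yields
\[
[\psi^\guided(L+R),\psi^\guided(L+R)] - [\psi^\guided(-R),\psi^\guided(-R)] = -2\Im\omega\int_{-R}^{L+R}U(z)\,dz,
\]
with $U(z)\geq 0$ by (\ref{FluxSU}) and (\ref{PassiveMedia}), and $U(z)>0$ on a set of positive measure since $\psi^\guided\not\equiv 0$. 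By Theorem~\ref{thm:exponents} and continuity of $k_{\pm e}$ near $\kwz$, the evanescent modes $\wle(z)$ and $\wre(z-L)$ decay exponentially as $z\to-\infty$ and $z\to+\infty$ respectively, while the propagating modes $\wrp(z-L)$ and $\wlp(z)$ in the corresponding limits decay exponentially when $\Im\omega>0$, remain bounded when $\Im\omega=0$, and grow exponentially when $\Im\omega<0$. The periodic prefactor $F$ is bounded and analytic and does not affect these exponential rates.

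Three cases follow. If $\Im\omega>0$, every exponential factor decays at infinity, so the left-hand side tends to $0$ while the right-hand side tends to $-2\Im\omega\int_{-\infty}^{\infty}U(z)\,dz<0$, a contradiction. If $\Im\omega=0$, the flux pairings are in the canonical form (\ref{canonicalform}) and the flux-unitarity of $F$ on the real slice gives the $z$-independent identities $[\psi^\guided(z),\psi^\guided(z)] = |u_{+p}|^2$ for $z>L$ and $[\psi^\guided(z),\psi^\guided(z)] = -|u_{-p}|^2$ for $z<0$; since the right-hand side of the conservation identity vanishes, these together force $|u_{+p}|^2+|u_{-p}|^2=0$, so $u_{\pm p}=0$, and $\psi^\guided$ is then purely evanescent outside the slab, hence decays exponentially. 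If $\Im\omega<0$, suppose for contradiction that both $u_{+p}=u_{-p}=0$; then $\psi^\guided$ consists outside the slab only of the evanescent modes and decays exponentially at $\pm\infty$, so the LHS tends to $0$ but the RHS is strictly positive---a contradiction. Hence at least one of $u_{\pm p}$ is nonzero, and Theorem~\ref{thm:exponents} then yields exponential growth of $\psi^\guided$ as $z\to+\infty$ or as $z\to-\infty$ respectively.

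The main subtlety lies in the case $\Im\omega\neq 0$, where the canonical form (\ref{canonicalform}) of the pairings $[v_i,v_j]$ is guaranteed only on the real slice, so for complex $\omega$ these pairings perturb analytically and are no longer identically zero off the real diagonal. However, the analysis at $|z|\to\infty$ depends only on the signs of $\Im(k_i+k_j)$ in the exponentials $e^{-(\Im k_i+\Im k_j)(z-L)}$, and these signs are preserved for $\kw$ in a small complex neighborhood of $\kwz$ by continuity. In particular, even if the perturbed $[\vre,\vre]$ is nonzero, the factor $e^{-2\Im k_{+e}(z-L)}$ still sends that term to $0$ as $z\to+\infty$ because $\Im k_{+e}>0$, and the cross terms decay at rate $e^{-(\Im k_{+p}+\Im k_{+e})(z-L)}$, whose exponent is negative whenever $|\Im k_{+p}|$ is small relative to $\Im k_{+e}$. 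Thus the dominant-term analysis goes through uniformly in a complex neighborhood of $\kwz$, and the three cases cover all possibilities.
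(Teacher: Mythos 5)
Your proposal is correct and follows essentially the same route as the paper: both proofs combine the energy conservation law (Theorem~\ref{thm:EnergyConservationLaw}) with the sign information on $\Im k_{\pm p}$ from Theorem~\ref{thm:exponents}, ruling out $\Im\omega>0$ by contradiction, forcing $u_{\pm p}=0$ via flux conservation (\ref{conservation}) when $\Im\omega=0$, and forcing some $u_{\pm p}\not=0$ (hence growth) when $\Im\omega<0$. Your write-up simply supplies more of the details, such as the limit over expanding intervals and the remark that the canonical flux form is only needed on the real slice, which the paper leaves implicit.
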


\begin{proof}
  The proof is based on Theorem \ref{thm:EnergyConservationLaw} and Theorem \ref{thm:exponents}.  By the latter, $\psi^\guided$ is exponentially decaying in $|z|$ if $\Im\omega>0$, and this contradicts Theorem \ref{thm:EnergyConservationLaw}.  So $\Im\omega\leq0$.  If $\Im\omega=0$, we have seen that the propagating modes vanish by conservation of energy (\ref{conservation}), so that $\psi^\guided$ decays with $|z|$.  If $\Im\omega<0$, Theorem \ref{thm:EnergyConservationLaw} guarantees that $\psi^\guided$ does not decay as $|z|\to\infty$, and thus the coefficients of $v_{\pm p}$ do not both vanish.  But for $\Im\omega<0$, the mode $v_{\pm p}$ grows exponentially as $z\to\pm\infty$ by Theorem \ref{thm:exponents}.  
\end{proof}

The time-averaged energy of a guided mode is positive, as stated in the following theorem.  This will be instrumental in the proof of the nondegeneracy of the dispersion relation for slab modes stated in Theorem~\ref{thm:nondegeneracy}.

\begin{Theorem}[Energy of a guided mode]\label{thm:energy}
If $\psi^\guided(z)$ is a guided mode solution to the Maxwell ODEs at a real pair $\kwz$ satisfying Assumption~\ref{Assumption1}, then
\begin{equation}
\frac{c}{16\pi}\int_{-\infty}^{\infty}\left(\psi^\guided(z),\frac{\partial A}{\partial \omega}\kw\psi^\guided(z)\right)\,dz = \int_{-\infty}^\infty U^g(z)\,dz>0,
\end{equation}
where $U^g(z)$ is the time-averaged energy density of the time-harmonic electromagnetic field with tangential wavevector $\kk$, frequency $\omega$, and tangential electric and magnetic field components $\psi^g(z)$.
\end{Theorem}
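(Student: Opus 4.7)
The plan is to assemble three already-proved ingredients. First, Theorem~\ref{thm:EnergyDensity}, specialized to the real pair $\kwz$, gives the pointwise (a.e.\ in~$z$) identity
\[
U^g(z) \,=\, \frac{c}{16\pi}\bigl(\psi^g(z),\,\tfrac{\partial A}{\partial\omega}(z;\kk_0,\omega_0)\,\psi^g(z)\bigr),
\]
so integration over $\RR$ immediately yields the claimed equality of the two integrals, provided both converge. The remaining work therefore splits into (a)~convergence and (b)~strict positivity.

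For (a), Theorem~\ref{thm:resolvent} asserts that, because $\omega_0\in\RR$, the guided mode $\psi^g$ decays exponentially as $|z|\to\infty$. Combined with local boundedness of $\partial A/\partial\omega(\cdot;\kk_0,\omega_0)$---which follows from the regularity of $A$ recalled in Appendix~\ref{sec:reduction}---and the fact that $A$ is $z$-constant outside the slab, exponential decay of $\psi^g$ forces the integrand to decay exponentially in~$|z|$, so both integrals converge absolutely.

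For (b), the passivity condition~(\ref{PassiveMedia}) gives $\epsilon(z),\mu(z)\geq c_1 I$ with $c_1>0$, and the defining formula~(\ref{FluxSU}) for $U$ then yields the pointwise bound $U^g(z)\geq \frac{c_1}{16\pi}\bigl(|\EE^g(z)|^2+|\HH^g(z)|^2\bigr)\geq 0$. Since $\psi^g\not\equiv 0$, uniqueness of solutions to the canonical Maxwell ODEs forces $\psi^g(z)\neq 0$ for every $z\in\RR$; in particular at least one of $E_1,E_2,H_1,H_2$ is nonzero at each $z$, hence $|\EE^g(z)|^2+|\HH^g(z)|^2>0$ and $U^g(z)>0$ pointwise, so the integral is strictly positive. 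There is no serious obstacle; the whole argument is essentially a direct assembly of Theorems~\ref{thm:EnergyDensity} and~\ref{thm:resolvent} with passivity and ODE uniqueness, and the only step warranting a brief check is that nonvanishing of the $4$-vector $\psi^g$ transfers to nonvanishing of the full electromagnetic field pair $(\EE^g,\HH^g)$, which is immediate since the components of $\psi^g$ are themselves tangential components of $(\EE^g,\HH^g)$.
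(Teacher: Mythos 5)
Your proposal is correct and follows essentially the same route as the paper's (much terser) proof: exponential decay of $\psi^\guided$ from Theorem~\ref{thm:resolvent} to justify convergence, the pointwise energy-density identity to equate the two integrals, and passivity~(\ref{PassiveMedia}) for strict positivity, with your extra observation that invertibility of the transfer matrix makes $\psi^\guided(z)$ nowhere zero being a welcome elaboration of the positivity step. The only slip is your remark that $A$ is $z$-constant outside the slab---the ambient medium is in general $d$-periodic, so $A$ is only periodic there---but this is immaterial since the global $L^\infty$ bound on $\partial A/\partial\omega$ from (\ref{SmoothnessOfAinMaxwellDAEs}) already suffices for the convergence argument.
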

\begin{proof}
By Theorem \ref{thm:resolvent} we know that $\psi^\guided(z)$ is exponentially decaying as $|z|\rightarrow \infty$ and so by Theorem \ref{thm:EnergyConservationLaw} the equality follows by taking $z_0\rightarrow -\infty$, $z_1\rightarrow \infty$. The fact that the integral is positive follows from passivity hypothesis (\ref{PassiveMedia}) for the layered media.
\end{proof}

Perfectly guided modes, exponentially decaying away from the defect layer, occur by Theorem~\ref{thm:resolvent} at real pairs $\kw$.  It turns out that they are always simple as the next theorem states; the proof is deferred to section~\ref{sec:nondegeneracy}.  Part (2) of the theorem identifies the three-dimensional space of incoming fields for which the scattering problem has a (nonunique) solution at any pair $\kwz$ at which a perfect guided mode exists.

\begin{Theorem}[Simplicity of guided modes]\label{thm:scattering}  
If $T\Pl\!-\!\Pr$ is noninvertible at the real pair $(\kk,\omega)$, then the following statements hold.
\begin{enumerate}
  \item  The zero eigenvalue of $T\Pl\!-\!\Pr$ is of algebraic (and geometric) multiplicity $1$.  A nonzero solution to (\ref{guidedmode1}) has only evanescent components, that is,
\begin{equation*}
  \Pl\Psig = u_{-e}^\guided\vle\,,
  \qquad
  \Pr\Psig = u_{+e}^\guided\vre\,,
\end{equation*}
($u_{-e}^\guided$ and $u_{+e}^\guided$ are complex constants) and $\Psig$ corresponds to a guided-mode solution $\psig(z)$ to the Maxwell ODEs,
\begin{equation}\label{guidedmode2}
\renewcommand{\arraystretch}{1.0}
\left.
  \begin{array}{rll}
  z<0: & \psig(z) = u_{-e}^\guided\wle(z) &= u_{-e}^\guided F(z)\vle e^{ik_{-e} z} \,, \\
  \vspace{-1.5ex}\\
  z>L: & \psig(z) = u_{+e}^\guided\wre(z-L) &= u_{+e}^\guided F(z-L)\vre e^{ik_{+e}(z-L)} \,,  
  \end{array}
\right.
\end{equation}
that decays exponentially as $|z|\to\infty$.
  \item  The scattering problem (\ref{scattering2}) admits a solution $\Psi^\out$ if and only if the coefficients $j_{+e}$ and $j_{-e}$ of the evanescent modes of the incident field at $z=L$ and $z=0$,
\begin{eqnarray*}
  && \Pl\Psi^\inc = j_{-p}\vlp + j_{-e}\vle\,, \\
  && \Pr\Psi^\inc = j_{+p}\vrp + j_{+e}\vre\,,
\end{eqnarray*}
are related to the coefficients $u_{+e}^\guided$ and $u_{-e}^\guided$ of the guided mode by
\begin{equation*}
  \det\mat{1.2}{\overline{j_{+e}}}{\overline{j_{-e}}}{u_{+e}^\guided}{u_{-e}^\guided}\,=\,0\,,
\end{equation*}
that is, if and only if
\begin{equation*}
  \Psi^\inc \in {\cal N} := \sspan\{ \vrp,\vlp,\Psi^\guided \}\,.
\end{equation*}
The general solution is of the form
\begin{equation*}
  \Psi^\out = \Psi^\out_\text{\tiny partic} + c \Psig\,,
\end{equation*}
in which $c$ is an arbitrary complex number.
\end{enumerate}
\end{Theorem}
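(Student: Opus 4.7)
The plan is to reduce everything to finite-dimensional linear algebra in the canonical basis $\{\vlp,\vle,\vrp,\vre\}$ from Assumption~\ref{Assumption1}, in which both $K$ and the flux form $[\cdot,\cdot]$ take the normalized shape (\ref{canonicalform}). In this basis, $M := T\Pl - \Pr$ has the block lower-triangular form
\begin{equation*}
M = \begin{pmatrix} t_{--} & 0 \\ t_{+-} & -I \end{pmatrix},
\end{equation*}
so $\det(\mu I-M) = (\mu+1)^2\det(\mu I-t_{--})$ and the simplicity claim for the zero eigenvalue of $M$ becomes a claim about the $2\times 2$ block $t_{--}$. All of the structural information will be extracted from flux-unitarity of the slab transfer matrix $T$ at real $\kw$, namely the principle (\ref{flux1}).

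For Part~(1), I would first show that any nonzero $\Psig\in\ker M$ has only evanescent components: the equation $T\Pl\Psig = \Pr\Psig$ combined with $[T\,\cdot\,,T\,\cdot\,] = [\,\cdot\,,\,\cdot\,]$ yields $[\Pr\Psig,\Pr\Psig] = [\Pl\Psig,\Pl\Psig]$, which under (\ref{canonicalform}) reads $|u_{+p}^g|^2 = -|u_{-p}^g|^2$, forcing $u_{\pm p}^g=0$. Neither evanescent amplitude $u_{\pm e}^g$ can then vanish (otherwise $T\vle=0$, contradicting invertibility of $T$), so setting $\lambda := u_{+e}^g/u_{-e}^g\neq 0$ the guided-mode equation becomes $T\vle = \lambda\vre$, killing the $\vle$-column of $t_{--}$. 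Exponential decay of $\psig(z)$ in (\ref{guidedmode2}) as $|z|\to\infty$ is immediate from the signs of $\Im k_{\pm e}$ in Assumption~\ref{Assumption1}. For algebraic simplicity I would then use flux-unitarity twice more. First, $[T\vle,T\vlp] = [\vle,\vlp] = 0$ expands via (\ref{canonicalform}) as $\bar\lambda\,[T\vlp]_{\vle} = 0$, so $T\vlp$ has no $\vle$-component and $T\vlp = \alpha\vlp + \gamma\vrp + \delta\vre$ for some scalars. Second, $[T\vlp,T\vlp] = [\vlp,\vlp] = -1$ gives $-|\alpha|^2 + |\gamma|^2 = -1$, so $|\alpha|^2\ge 1$ and $\alpha\neq 0$. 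Thus $t_{--}$ is upper-triangular with eigenvalues $\alpha$ and $0$, and the characteristic polynomial of $M$ factors as $(\mu-\alpha)\mu(\mu+1)^2$, with $0$ a simple root.

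For Part~(2), the block form identifies $\Range(M)$ as the codimension-one subspace $\{w:[w]_{\vle}=0\}$, so solvability of $M\Psi^\out = -(T\Pr-\Pl)\Psi^\inc$ is equivalent to the vanishing of the $\vle$-coefficient of the right-hand side. Two further flux identities, $[T\vrp,T\vle] = 0$ and $[T\vre,T\vle] = 1$, yield $[T\vrp]_{\vle} = 0$ and $[T\vre]_{\vle} = 1/\bar\lambda$, so the solvability condition reads $j_{-e} - j_{+e}/\bar\lambda = 0$, equivalently $\overline{u_{+e}^g}\,j_{-e} = \overline{u_{-e}^g}\,j_{+e}$. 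Complex-conjugating rewrites this as the stated determinantal identity, which, combined with the free propagating amplitudes $j_{\pm p}$, is precisely $\Psi^\inc\in\mathcal{N}$. The one-parameter freedom $\Psi^\out \mapsto \Psi^\out + c\Psig$ comes from the one-dimensional $\ker M$ just established.

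The main obstacle is the algebraic-simplicity step in Part~(1). A priori $M$ could carry a Jordan block at $0$; what rules this out is precisely the two-stage use of flux-unitarity — the first identity forces the $(\vlp,\vle)$-entry of $t_{--}$ to vanish, and only then does flux-unitarity on $\vlp$ itself constrain $\alpha$ through $|\alpha|^2 = 1+|\gamma|^2$. Without both halves, $\alpha = 0$ is a genuine possibility and algebraic simplicity fails. Part~(2) is then a careful but routine bookkeeping exercise in the same basis; the only subtlety is tracking the conjugate-linearity of $[\cdot,\cdot]$ in its first slot when converting the solvability condition into the clean determinantal form.
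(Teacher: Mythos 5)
Your argument is correct and reaches every conclusion of the theorem, but by a genuinely different route from the paper at the two nontrivial steps. The opening move (flux-unitarity of $T$ forces $u^\guided_{\pm p}=0$, hence $T\vle=\lambda\vre$ with $\lambda\neq0$) is the same as the paper's. For algebraic simplicity the paper works basis-free: it computes $\Null\big((T\Pl-\Pr)^\fadj\big)=\sspan\{\vre\}$ using $\Pl^\fadj=\Plp+\Pre$ and $\Pr^\fadj=\Prp+\Ple$, deduces $\Range(T\Pl-\Pr)=\sspan\{\vlp,\vrp,\vre\}$, and checks that the nullspace generator $s\vle+\vre$ ($s\neq0$) is not in the range, so no Jordan block can form. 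You instead exhibit the characteristic polynomial: the block-triangular form of $T\Pl-\Pr$ in the ordered basis $(\vlp,\vle,\vrp,\vre)$ reduces everything to the $2\times2$ block $t_{--}$, and the two further flux identities $[T\vle,T\vlp]=0$ and $[T\vlp,T\vlp]=-1$ give $t_{--}=\mathrm{diag}(\alpha,0)$ with $|\alpha|^2=1+|\gamma|^2\geq1$ (where $\alpha,\gamma$ are the $\vlp$- and $\vrp$-components of $T\vlp$), so $0$ is a simple root of $(\mu-\alpha)\,\mu\,(\mu+1)^2$. For part (2) the paper invokes the Fredholm alternative for the indefinite form, requiring $C\Psi^\inc$ to be flux-orthogonal to $\Null(B^\fadj)=\sspan\{\vre\}$, whereas you read off $\Range(T\Pl-\Pr)=\{w:[w]_{\vle}=0\}$ from the columns and compute the $\vle$-coefficient of $C\Psi^\inc$ directly; both yield the same condition $\overline{j_{+e}}\,u^\guided_{-e}=\overline{j_{-e}}\,u^\guided_{+e}$. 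Your route is more elementary (no flux-adjoint machinery) and delivers the bonus inequality $|\alpha|\geq1$; the paper's is shorter to state and transfers more easily to settings where the canonical basis of (\ref{canonicalform}) is not explicitly in hand.

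One caveat, which you share with the paper's own proof: the last-line identification of the determinant condition with $\Psi^\inc\in{\cal N}=\sspan\{\vrp,\vlp,\Psig\}$ is not automatic. The determinant condition says $(\overline{j_{+e}},\overline{j_{-e}})$ is proportional to $(u^\guided_{+e},u^\guided_{-e})$, while membership in $\cal N$ says $(j_{+e},j_{-e})$ is; these coincide for all admissible $\Psi^\inc$ only if $\overline{u^\guided_{+e}}\,u^\guided_{-e}$ is real, and nothing in your argument (or the paper's) establishes that. If you retain the span characterization of $\cal N$, you should either prove $\overline{u^\guided_{+e}}\,u^\guided_{-e}\in\RR$ or fall back on the determinant form of the solvability criterion, which is what your computation actually proves.
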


Part (2) says that, in order for the system to have a time-harmonic response to an incident field at real $\kw$ for which the slab supports a guided mode, the evanescent mode components of the incident field at $z=0$ and $z=L$ are equal to the complex conjugates of the  values of a guided mode at $z=L$ and $z=0$, respectively.

\commentsps{Question: Must $u_{+e}^\guided$ and $u_{-e}^\guided$ have the same magnitude?}

\subsection{The full scattering matrix around a guided mode}  

The delicate behavior of resonance phenomena are revealed through an analysis of the behavior of the {\em scattering matrix} near a real pole $\kwz$, where the slab admits a perfect guided mode.

Let us set $B:=T\Pl-\Pr$ and $C:=T\Pr-\Pl$, so that the scattering and guided-mode equations become
\begin{equation*}
\renewcommand{\arraystretch}{1.1}
\left.
  \begin{array}{ll}
  B(\kk,\omega)\Psi^\out + C(\kk,\omega)\Psi^\inc = 0\,, &  \text{(scattering)} \\
  B(\kk,\omega)\Psi^\guided = 0 \,. & \text{(generalized guided mode)}
  \end{array}
\right.
\end{equation*}
$B$ and $C$ are analytic in $\kw$ near $\kwz$.
In terms of the scattering matrix $S:=-B^{-1}C$, the scattering problem has the simple form
\begin{equation*}
  \Psi^\out = S\kw \Psi^\inc\,.
  \qquad
  \text{(scattering matrix formulation)}
\end{equation*}
The complex dispersion relation for guided slab modes gives the locus of $\kw$ pairs for which a generalized guided mode exists,
\begin{equation*}
  D\kw := \det B\kw=0\,. \quad \text{(dispersion relation for guided modes)}
\end{equation*}
One can think of this relation as defining a pole of the scattering matrix, located in the closed lower $\omega$-halfplane by Theorem~\ref{thm:resolvent}, which varies with the wavevector parameter $\kk$.

Suppose now that $D\kwz=0$ for a {\em real} pair $\kwz$, so that $B\kwz\Psi^\guided=0$, as in the example of section~\ref{sec:example}.  Theorem \ref{thm:scattering} says that {\em the zero-eigenvalue of $B\kwz$ is algebraically simple}; thus it extends to a simple analytic eigenvalue $\ell\kw$ in a complex neighborhood of $\kwz$.  In this neighborhood, the locus of $\kw$ pairs that admit a generalized guided mode is given by the complex dispersion relation
\begin{equation*}
  \ell\kw=0\,.
  \qquad \text{(guided-mode relation near $\kwz\in\RR^3$)}
\end{equation*}
It follows from Theorem \ref{thm:nondegeneracy} is that $\ell$ cannot be identically zero, i.e.,
 $  \ell\kw\not\equiv 0 $.
Thus, the scattering matrix $S\kw$ is meromorphic in $\kw$ near $\kwz$.

\smallskip

Resonance phenomena near a guided mode pair $\kwz$, is tied to the behavior of the poles of the scattering matrix $S$ (compare the Auger states in quantum mechanics~\cite{ReedSimon1980d}).  We follow \cite{ShipmanVenakides2005,Shipman2010} and
denote by $\Pres$ the analytic Riesz projection onto the one-dimensional eigenspace of $B\kw$ corresponding to the eigenvalue $\ell\kw$ and by $\Preg$ the complementary projection,
\begin{equation*}
  \Pres\kw = \frac{1}{2\pi i}\oint_{\cal C} (\lambda - B\kw)^{-1} d\lambda\,,
  \qquad
  \Pres + \Preg = I,
\end{equation*}
where $\cal C$ is a circle in the complex $\lambda$-plane enclosing $\ell\kw$ and no other eigenvalues of $B\kw$.  These projections commute with $B$ and reduce it analytically:
\begin{equation*}
  B\kw = \ell\kw \Pres\kw + \tilde B\kw \Preg\kw,
\end{equation*}
in which
\begin{equation*}
  \tilde B\kw = \Pres\kw + B\kw  
\end{equation*}
is analytic {\em and invertible} near $\kwz$.
The inverse of $\tilde B$,
\begin{equation}
  R\kw=\tilde B\kw^{-1}
\end{equation}
is analytic at $\kwz$ and $\Preg R\Preg = R\Preg=\Preg R$, and from this one obtains a meromorphic representation of~$B^{-1}$,
\begin{equation*}
  B\kw^{-1} = \ell\kw^{-1} \Pres\kw + R\kw\Preg\kw\,.
\end{equation*}
In block-matrix form with respect to the images of $\Pres$ and $\Preg$, $B^{-1}$ has the form
\begin{equation*}
  B\kw^{-1} =
\renewcommand{\arraystretch}{1.3}
\left[
  \begin{array}{cc}
    \ell\kw^{-1} & 0 \\
    0 & R\kw
  \end{array}
\right]\,.
\end{equation*}
Thus the scattering matrix $S$ has the form
\begin{equation}\label{Smatrix0}
  S  \,=\,  \ell^{-1} \tilde S\,,
  \qquad
  \tilde S\kw = -(\Pres + \ell R\Preg)C\,,
\end{equation}
with $\tilde S$ analytic in $\kw$ at $\kwz$.
%
%

\begin{figure}
\centerline{\scalebox{0.45}{\includegraphics{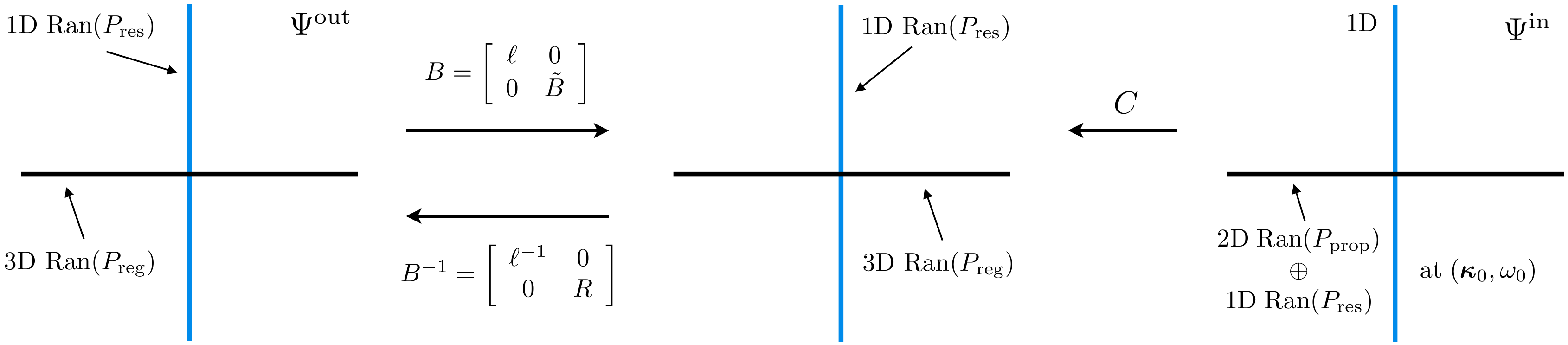}}}
\caption{\small The matrix $B$ in the scattering problem $B\kw\Psi^\out + C\kw\Psi^\inc=0$ is noninvertible at a guided mode wavevector-frequency $\kwz$, where an analytic eigenvalue $\ell\kw$ of $B\kw$ vanishes.  Near $\kwz$, $B\kw$ is decomposed analytically by ``resonant" and ``regular" projections $\Pres$ and $\Preg$.  At {\em real} $\kwz$, the preimage of $\Range(\Preg)$ under $C$ is equal to the three-dimensional space $\cal N$ spanned by $\Range(\Pres)$ and the propagating harmonics (Theorem~\ref{thm:scattering}).  This is the space of incident fields $\Psi^\inc$ for which the scattering problem admits a solution at $\kwz$.}
\label{fig:BCoperators}
\end{figure}

\smallskip
{\slshape\bfseries Interpretation.}\
Let $\Phi\kw$ be analytic, and set $\Psi^\inc=\ell\Phi$, so that the source field vanishes on the guided-mode relation.  The outgoing field trace
\begin{equation}\label{analyticconnection}
  \Psi^\out \,=\, -\Pres(C\,\Phi) - \ell R\Preg(C\,\Phi)
  \qquad
  (\Psi^\inc=\ell\Phi)\,
\end{equation}
provides an {\em analytic connection} between scattering states (on $\ell\kw\not=0$) and generalized guided modes (on $\ell\kw=0$).  This was the crux of the analysis of resonant transmission anomalies in~\cite{ShipmanVenakides2005}.

\smallskip
The importance of the nondegeneracy of guided modes in the sense of the following theorem was mentioned in point (4) at the beginning of section~\ref{sec:scattering}.

\begin{Theorem}[Nondegeneracy of guided modes]\label{thm:nondegeneracy}
If $B\kw$ is noninvertible at a real pair $\kwz$ then its zero eigenvalue can be extended analytically about $\kwz$ to an algebraically simple eigenvalue $\ell\kw$ that satisfies
\begin{equation*}
  \frac{d\ell}{d\omega}\kwz = \frac{1}{\overline{u_{+e}^\guided} u_{-e}^\guided} \frac{ic}{16\pi} \int_{-\infty}^\infty \left( \frac{\partial A}{\partial \omega}\kwz\psig(z),\psig(z) \right)dz
  = \frac{1}{\overline{u_{+e}^\guided} u_{-e}^\guided}i\int_{-\infty}^\infty U^g(z)\,dz \not= 0\,.
\end{equation*}
Here $\psig(z)$ is any guided mode associated with the nullspace of $B\kwz$, the coefficients $u_{-e}^\guided$, $u_{+e}^\guided$ are related to this guided mode by $\psig(0)=u_{-e}^\guided\vle$, $\psig(L)=u_{+e}^\guided\vre$, and $U^g(z)$ is the time-averaged energy density of the time-harmonic electromagnetic field $(\EE,\HH)$ with tangential wavevector $\kk_0$, frequency $\omega_0$, and tangential electric and magnetic field components $[E_1,E_2,H_1,H_2]^T=\psi^g(z)$.
\end{Theorem}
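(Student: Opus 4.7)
The plan is to combine first-order perturbation theory for the simple analytic eigenvalue $\ell\kw$ (whose existence and simplicity come from Theorem~\ref{thm:scattering}) with a Green's-type identity for the Maxwell ODE, and to evaluate the boundary terms at $|z|\to\infty$ using the canonical flux form (\ref{canonicalform}).

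First I would identify a flux-form left null vector of $B\kwz$. From $B=TP_-\!-P_+$ and $T^\fadj=T^{-1}$ at real $\kwz$, one has $B^\fadj=P_-^\fadj T^{-1}-P_+^\fadj$; the adjoint relations (\ref{adjoints}) give $P_-^\fadj=P_{-p}+P_{+e}$ and $P_+^\fadj=P_{+p}+P_{-e}$, and using the guided-mode identity $Tv_{-e}=(u_{+e}^\guided/u_{-e}^\guided)v_{+e}$ one verifies directly that $\phi:=v_{+e}$ satisfies $B^\fadj\phi=0$, with $[\phi,\Psi^\guided]=u_{-e}^\guided$ by (\ref{canonicalform}). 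Standard perturbation of a simple eigenvalue then yields $\ell'(\omega_0)=[\phi,B'(\omega_0)\Psi^\guided]/u_{-e}^\guided$, and since $[\phi,B(\omega_0)\,\cdot\,]=0$ this quantity equals $[\phi,(B\tilde\Psi)'(\omega_0)]/u_{-e}^\guided$ for any analytic family $\tilde\Psi(\omega)$ with $\tilde\Psi(\omega_0)=\Psi^\guided$. I would use the ansatz $\tilde\Psi(\omega):=u_{-e}^\guided v_{-e}(\omega)+u_{+e}^\guided v_{+e}(\omega)$; expanding $T(\omega)v_{-e}(\omega)=\sum_i\alpha_i(\omega)v_i(\omega)$ with $\alpha_i(\omega_0)=0$ for $i\ne+e$ and $\alpha_{+e}(\omega_0)=u_{+e}^\guided/u_{-e}^\guided$, a short calculation using the canonical form (\ref{canonicalform}) collapses the perturbation formula to $\ell'(\omega_0)=\alpha_{-e}'(\omega_0)$.

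The computational heart is the evaluation of $\alpha_{-e}'(\omega_0)$. I would introduce the analytic family of Maxwell-ODE solutions $\psi^L(z;\omega)$ determined by $\psi^L(0;\omega)=v_{-e}(\omega)$, so that $\psi^L(z;\omega_0)=\psig(z)/u_{-e}^\guided$. Differentiating $-iJ\partial_z\psi^L=A(z;\omega)\psi^L$ in $\omega$ and using Hermiticity of $A$ at real $\omega_0$ to cancel the two homogeneous contributions yields the Green's identity
\begin{equation*}
\frac{d}{dz}\bigl[\psi^L,\,\partial_\omega\psi^L\bigr]\Big|_{\omega_0}=\frac{ic}{16\pi}\Bigl(\psi^L(z;\omega_0),\,\tfrac{\partial A}{\partial\omega}(z;\omega_0)\psi^L(z;\omega_0)\Bigr),
\end{equation*}
whose right-hand side integrand equals $(i/|u_{-e}^\guided|^2)U^g(z)$ by Theorem~\ref{thm:EnergyDensity}. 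I would integrate from $-\infty$ to $+\infty$.

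The hard part will be the boundary evaluation at $z\to+\infty$. At $z\to-\infty$, $\psi^L(z;\omega)=F(z;\omega)v_{-e}(\omega)e^{ik_{-e}(\omega)z}$ and its $\omega$-derivative both vanish exponentially (the extra $z$-factor in $\partial_\omega\psi^L$ is dominated), so that boundary contribution is zero. At $z\to+\infty$, $\psi^L$ is pure $v_{+e}$ (decaying), but $\partial_\omega\psi^L$ carries nontrivial components along all four modes; writing $\psi^L(z;\omega)|_{z>L}=F(z\!-\!L;\omega)\sum_i\alpha_i(\omega)v_i(\omega)e^{ik_i(\omega)(z-L)}$, flux-unitarity of $F$ at real $\omega_0$ together with (\ref{canonicalform}) annihilates every pairing $[v_{+e},v_j]$ except $j=-e$. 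The surviving cross term has exponential factor $\overline{e^{ik_{+e}(z-L)}}\,e^{ik_{-e}(z-L)}=1$ because $\overline{k_{+e}}=k_{-e}$, whereas every other combination---including product-rule contributions from $F(z;\omega)$, $v_{+e}(\omega)$, and $e^{ik_{+e}(\omega)(z-L)}$ paired with $\psi^L$---decays like $e^{-2\Im(k_{+e})(z-L)}$ even after polynomial factors. Hence the $+\infty$ limit is exactly $(\overline{u_{+e}^\guided}/\overline{u_{-e}^\guided})\,\alpha_{-e}'(\omega_0)$, and inserting this into the integrated Green's identity yields the stated formula for $\ell'(\omega_0)$. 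Nonvanishing is then immediate from Theorem~\ref{thm:energy}: passivity (\ref{PassiveMedia}) forces $U^g(z)>0$ wherever $\psig(z)\ne0$, so the integral is strictly positive.
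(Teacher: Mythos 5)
Your proposal is correct, and its opening step coincides with the paper's: both identify $\vre$ (equivalently $\Psigr=u_{+e}^\guided\vre$) as a flux-form left null vector of $B\kwz$ via $B^\fadj=\Pl^\fadj T^{-1}-\Pr^\fadj$ and the relation $T^{-1}\vre=(u_{-e}^\guided/u_{+e}^\guided)\vle$, and both reduce the problem to evaluating the pairing of $B'(\omega_0)\Psig$ against that null vector, normalized by $[\vre,\Psig]=u_{-e}^\guided$. Where you genuinely diverge is in the evaluation of that pairing, and your route is a legitimate alternative. The paper expands $(T\Pl-\Pr)'$ into the three terms $[T'\Psigl,\Psigr]+[\Pl'\Psig,\Psigl]-[\Pr'\Psig,\Psigr]$ and converts each into an energy integral over $[0,L]$, $(-\infty,0]$, and $[L,\infty)$ respectively --- the slab term by a transfer-matrix Green's identity, the two tail terms by perturbing the eigenvectors of $K$, using derivative-of-projection identities, and invoking the periodic-medium energy representation of Theorem~\ref{thm:EnergyIndRep}. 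You instead collapse $\ell'(\omega_0)$ to the single scalar $\alpha_{-e}'(\omega_0)$ (the $\vle$-coefficient of $T(\omega)\vle(\omega)$) and run one global Green's identity for $\frac{d}{dz}[\psi^L,\partial_\omega\psi^L]$ over all of $\RR$, extracting $\overline{\alpha_{+e}}\,\alpha_{-e}'$ from the unique non-decaying cross term at $z\to+\infty$, where $\overline{e^{ik_{+e}(z-L)}}\,e^{ik_{-e}(z-L)}=1$ because $k_{-e}=\overline{k_{+e}}$ and flux-unitarity of $F$ makes $[F\vre,F\vle]=1$ exactly. This buys a real economy: you need only the pointwise energy-density formula of Theorem~\ref{thm:EnergyDensity}, and you bypass both the projection-derivative algebra and Theorem~\ref{thm:EnergyIndRep}, since your differential identity holds verbatim in the periodic tails with the true $z$-dependent $A$. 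What the paper's decomposition buys in exchange is an explicit attribution of the energy to the three spatial regions and the reusable identities (\ref{term1})--(\ref{term3}). Two points you should make explicit when writing this up: $u_{\pm e}^\guided\neq0$ (needed both for $[\vre,\Psig]\neq0$ in the perturbation formula and for $\alpha_{+e}(\omega_0)\neq0$ in the boundary term), which follows from invertibility of $T$; and the absolute convergence of $\int_{-\infty}^{\infty}U^g\,dz$ together with its strict positivity, which is exactly Theorem~\ref{thm:energy} and rests on the coercivity hypothesis (\ref{PassiveMedia}) rather than on pointwise positivity of $U^g$ alone.
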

\begin{proof}
The proof is technical and is deferred to section~\ref{sec:nondegeneracy}. 
\end{proof}

A local description of the scattering matrix near a real point on the guided mode dispersion relation is given in the following theorem.

\begin{Theorem}[Scattering matrix: local representation]\label{thm:SmatrixLocal}
Let $\kwz$ be a real pair such that $\ell\kwz=0$. Then the generic condition
\begin{equation}
  \frac{\partial\ell}{\partial\omega}\kwz\,\not=\,0.
\end{equation}
is satisfied and $\ell$ has the local representation
\begin{equation}
  \ell\kw = (\omega-\omega_0+g(\kk))\,\hat\ell\kw\,,
\end{equation}
where $\hat\ell\kw$, $g(\kk)$ are both analytic with $\hat\ell\kwz\not=0$ and $g(\kk_0)=0$. Furthermore, the scattering matrix has the local representation
\begin{equation}\label{Smatrix}
  S\kw = \ell^{-1}\kw\tilde S\kw =
  \frac{h\kw}{\omega-\omega_0+g(\kk)}\big( \Pres + \ell R\Preg \big)C\,,
\end{equation}
with $h=-\hat\ell^{-1}$ analytic and nonzero at $\kwz$. Moreover, at $\kk=\kk_0$ the scattering matrix $S(\kk_0,\omega)$ has a simple pole at $\omega=\omega_0$ with
\begin{equation}\label{SmatrixOmegaPole}
\lim_{\omega\rightarrow\omega_0}(\omega-\omega_0)S(\kk_0,\omega)=\left(\frac{\partial\ell}{\partial\omega}\Pres C\right)\kwz\not=0.
\end{equation}
\end{Theorem}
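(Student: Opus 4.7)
The plan is to deduce the three remaining conclusions (factorization of $\ell$, representation of $S$, and the residue at $\omega_0$) from standard one-variable complex analytic function theory, once the nondegeneracy $(\partial\ell/\partial\omega)\kwz\neq 0$ is in hand. That nondegeneracy is precisely the content of Theorem~\ref{thm:nondegeneracy}, so the technical weight of the present theorem rests on that earlier result, and no new electrodynamic input is required here.

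To factor $\ell$, I would invoke the analytic implicit function theorem: $\ell\kw$ is analytic in a complex polydisc about $\kwz$, vanishes at $\kwz$, and has nonzero $\omega$-derivative there, so there is a unique analytic $\omega_g(\kk)$ defined near $\kk_0$ with $\omega_g(\kk_0)=\omega_0$ and $\ell(\kk,\omega_g(\kk))=0$. Setting $g(\kk):=\omega_0-\omega_g(\kk)$ yields $g$ analytic with $g(\kk_0)=0$ and makes $\omega-\omega_0+g(\kk)=\omega-\omega_g(\kk)$ a linear Weierstrass polynomial vanishing simply along the graph of $\omega_g$. Since $\ell$ also has a simple $\omega$-zero there (its $\omega$-derivative stays nonzero near $\kwz$ by continuity), the quotient
\[
\hat\ell\kw:=\frac{\ell\kw}{\omega-\omega_0+g(\kk)}
\]
is bounded near $\kwz$ and extends analytically by Riemann's removable-singularity theorem in $\omega$, with joint analyticity in $\kk$ following e.g.\ from Hartogs' theorem (equivalently, this is Weierstrass division by a linear distinguished polynomial). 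A Taylor expansion in $\omega$ about $\omega_g(\kk)$ identifies $\hat\ell(\kk,\omega_g(\kk))=(\partial\ell/\partial\omega)(\kk,\omega_g(\kk))$, so in particular $\hat\ell\kwz\neq 0$.

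Substituting $\ell=(\omega-\omega_0+g)\hat\ell$ into the identity $S=\ell^{-1}\tilde S$ of (\ref{Smatrix0}), with $\tilde S=-(\Pres+\ell R\Preg)C$, and setting $h:=-\hat\ell^{-1}$ (analytic and nonzero at $\kwz$), yields the claimed representation (\ref{Smatrix}). For the residue, specializing to $\kk=\kk_0$ eliminates $g$ since $g(\kk_0)=0$, so
\[
(\omega-\omega_0)S(\kk_0,\omega)=h(\kk_0,\omega)\bigl(\Pres(\kk_0,\omega)+\ell(\kk_0,\omega)R(\kk_0,\omega)\Preg(\kk_0,\omega)\bigr)C(\kk_0,\omega).
\]
As $\omega\to\omega_0$ the $\ell R\Preg$ term drops out and the limit equals $h(\kk_0,\omega_0)(\Pres C)\kwz$, which matches (\ref{SmatrixOmegaPole}) once $h\kwz$ is rewritten via $h\kwz=-1/(\partial\ell/\partial\omega)\kwz$. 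Its nonvanishing reduces to $(\Pres C)\kwz\neq 0$, which I would verify by contradiction: if $(\Pres C)\kwz=0$, then $C\Psi^{\inc}\in\Range\Preg\kwz=\Range B\kwz$ for every incoming $\Psi^{\inc}$, so the scattering equation (\ref{scattering2}) would be solvable for \emph{every} incident field, contradicting Theorem~\ref{thm:scattering}(2), which restricts solvability to the codimension-one subspace $\mathcal{N}$. The only analytically nontrivial step is the division yielding $\hat\ell$, which is routine; the substantive obstacle---the nonvanishing of the $\omega$-derivative of $\ell$---lies entirely in Theorem~\ref{thm:nondegeneracy}.
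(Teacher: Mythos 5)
Your proof is correct and follows essentially the same route as the paper's: Theorem~\ref{thm:nondegeneracy} supplies $\partial\ell/\partial\omega\kwz\neq0$, the factorization of $\ell$ is Weierstrass preparation (which you reprove in the linear case via the implicit function theorem and removable singularities), the representation of $S$ is direct substitution into (\ref{Smatrix0}), and the nonvanishing of $(\Pres C)\kwz$ comes from Theorem~\ref{thm:scattering}(2) exactly as in the paper, since $\mathcal{N}=\{\Psi:\Pres C\Psi=0\}$ is a proper subspace of $\CC^4$. One remark: your explicit computation yields the residue $-\big((\partial\ell/\partial\omega)^{-1}\Pres C\big)\kwz$ rather than the $\big((\partial\ell/\partial\omega)\Pres C\big)\kwz$ displayed in (\ref{SmatrixOmegaPole}) --- this does not actually ``match'' the stated formula as you claim, but the discrepancy appears to be a typo in the theorem's statement (the paper's own proof never computes the constant), and the essential conclusion that the limit is nonzero is unaffected.
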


\begin{proof}
Theorem~\ref{thm:nondegeneracy} states that the generic condition $\frac{\partial\ell}{\partial\omega}\kwz\,\not=\,0$ always holds at real $\kwz$ at which $\ell$ vanishes.  The Weierstra{\ss} Preparation Theorem then provides the local representation $\ell\kw=(\omega-\omega_0+g(\kk))\,\hat\ell\kw$ with $\hat\ell\kwz\not=0$ and $g(\kk_0)=0$ and both analytic.  By these facts and the representation (\ref{Smatrix0}) above, the local representation (\ref{Smatrix}) for the scattering matrix now follows.  In particular $S(\kk_0,\omega)$ is meromorphic in $\omega$ at $\omega_0$ and the equality in (\ref{SmatrixOmegaPole}) holds.  To complete the proof we need only show that $\left(\Pres C\right)\kwz\not=0$.  This follows from Theorem \ref{thm:scattering}, which implies that at $\kwz$, ${\cal N}=\sspan\{ \vrp,\vlp,\Psi^\guided \}$ is a three-dimensional subspace of $\CC^4$ with the property that $\Psi\in {\cal N}$ if and only if $\Pres C\Psi=0$.
\end{proof}

When $\ell\kwz=0$ at a real pair $\kwz$, there are two cases, resulting in different scattering behavior.  They are distinguished by whether the incident field $\Psi^\inc$ yields a (nonunique) solution at $\kwz$ or no solution at all.

\begin{enumerate}
\item  If $(\Pres C\Psi^\inc)\kwz=0$, there exist multiple solutions to the scattering problem at $\kwz$, and any two solutions differ by a multiple of $\Psi^\guided$,
\begin{equation}\label{multiplesolutions}
  \Psi^\out = \text{const.}\Psi^\guided - R\Preg C\Psi^\inc\,.
  \qquad
  \text{(at $\kwz$)}
\end{equation}
The projection of $\Psi^\out$ to the propagating modes is independent of the solution, that is, independent of the constant multiplying $\Psi^\guided$ because $\Psi^\guided$ is evanescent.  {\em Thus the far-field behavior is unique.}  The three-dimensional space $\cal N$ of incoming field traces $\Psi^\inc$ at $\kwz$ for which a solution~(\ref{multiplesolutions}) exists
satisfies $C\Psi^\inc\in \Range B$ and is characterized by Theorem \ref{thm:scattering} as
\begin{equation*}
  {\cal N} = \sspan\{ \vrp,\vlp,\Psi^\guided \}\,.
  \qquad
  \text{(non-resonant incident fields)}
\end{equation*}
In particular, $\cal N$ contains the propagating modes.  Scattering of propagating modes manifests interesting anomalous behavior in the transmission coefficient and field amplification.
\item  If $(\Pres C\Psi^\inc)\kwz\not=0$, the scattering problem has no solution at $\kwz$ or at any real $\kw$ on the dispersion relation near $\kwz$.  Field amplification is more pronounced than in the first case (see section~\ref{sec:amplification}).
\end{enumerate}

In case (1), not only is the far-field scattering behavior unique at $\kwz$, but, by fixing $\kk=\kk_0$ and varying $\omega$, a unique constant in (\ref{multiplesolutions}) is determined, making the scattering solution well defined at $\kwz$.  To see this, suppose that $\Psi^\inc\kw$ is analytic with
\begin{equation}\label{solvable}
  \Psi^\inc\kwz\in{\cal N}\,.
  \qquad \text{($\Psi^\out=S\Psi^\inc$ is solvable)}
\end{equation}
Then $\Psi^\out(\kk_0,\omega)$ is analytic in $\omega$ at $\omega_0$.  To see this, observe that
(\ref{solvable}) means $\Pres C\Psi^\inc\kwz = 0$ so that 
\begin{equation}\label{Psiout}
  \Pres C\Psi^\inc(\kk_0,\omega) = (\omega-\omega_0)\hat\Psi(\omega)
\end{equation}
in which $\hat\Psi(\omega)$ is an analytic eigenvector of $B(\kk_0,\omega)$ near $\omega_0$.  Applying (\ref{Smatrix}) at $\kk=\kk_0$ to $\Psi^\inc$ yields
\begin{equation}\label{omegaanalytic}
  \Psi^\out(\kk_0,\omega) = S\Psi^\inc = h(\kk_0,\omega)\hat\Psi(\omega) - R\Preg C\Psi^\inc(\kk_0,\omega).
\end{equation}
Comparing this to (\ref{multiplesolutions}) shows that
$\Psi^\out(\kk_0,\omega)$ extends analytically through $\omega_0$ as a solution (\ref{multiplesolutions}) to the scattering problem, and, at $\kwz$, selects a particular solution from the family (\ref{multiplesolutions}).

{\bfseries Remark.} Depending on the path in real $\kw$ space through $\kwz$ along which $\ell$ has a nonzero derivative at $\kwz$, a different multiple of the guided mode is determined in the limit at $\kwz$, changing the evanescent part of the response field but not its far-field behavior.

\subsection{The far-field scattering matrix}\label{sec:farfield}

Because the traces of the propagating modes at $\kwz$ are in $\sspan\{\vrp,\vlp\}\subset\cal N$ there is always a solution to the scattering problem for any {\em incident propagating field}, i.e., $\Psi^\inc \in \sspan\{\vrp,\vlp\}$.
By projecting the outgoing field of an incident propagating field onto the propagating modes, one obtains the far-field, or reduced, scattering matrix $S_0$.  It is an operator from $\sspan\{\vrp,\vlp\}$ to itself given~by
\begin{equation*}
  S_0\kw = P_pSP_p,
  \qquad
  \text{(far-field scattering matrix)}
\end{equation*}
where $P_p:=\Prp+\Plp$. More precisely, an incident propagating field with trace $\Psi^\inc_p\kw = j_{+p}\vrp+j_{-p}\vlp$ is mapped to the trace of propagating (far-field) modes of the outgoing field $\Psi^\out_p\kw = u_{+p}\vrp+u_{-p}\vlp$ by
\begin{equation*}
 \Psi^\out_p = S_0 \Psi^\inc_p
  = -P_p\big( \ell^{-1}\Pres (C\Psi^\inc_p) + R\Preg (C\Psi^\inc_p) \big).
\end{equation*}
As we have said, $\Pres C\Psi^\inc_p\kwz=0$ so that $\Psi^\out_p$ is well defined at $\kwz$ as the limit of the $\omega$-analytic function $\Psi^\out_p(\kk_0,\omega)$ at $\omega_0$.

By identifying $S_0$ with its matrix with respect to the basis $\{\vrp,\vlp\}$, one has
\begin{equation}
\begin{bmatrix}
u_{+p}\\
u_{-p}
\end{bmatrix}
=S_0\begin{bmatrix}
j_{+p}\\
j_{-p}
\end{bmatrix}.
\end{equation}
In view of $S_0=\ell^{-1}P_p\tilde SP_p$, in the basis $\{\vrp,\vlp\}$, $S_0$ has the form
\begin{equation*}
  \mat{1.3}{t_{+p}\kw}{r_{+p}\kw}{r_{-p}\kw}{t_{-p}\kw}=S_0\kw \,=\, -\ell^{-1} P_p\left( \Pres + \ell\,R\Preg \right)CP_p
  \,=\, \frac{1}{\ell\kw} \mat{1.3}{b_1\kw}{a_2\kw}{a_1\kw}{b_2\kw}\,,
\end{equation*}
and $\tilde S_0 = P_p \tilde S P_p$ has the form
\begin{equation}
  \tilde S_0\kw \,:=\, \mat{1.3}{b_1\kw}{a_2\kw}{a_1\kw}{b_2\kw}\,.
\end{equation}
The entries $a_i$ and $b_i$ are analytic at $\kwz$ because $\tilde S$, $\vrp$, and $\vlp$ are, and the transmission and reflections coefficients $\{t_{+p},t_{-p}\}$ and $\{r_{+p},r_{-p}\}$ are meromorphic functions of $\kw$ near $\kwz$.
 
If $\kw$ is real, then $S_0\kw$ is a unitary matrix.  Indeed, $S_0[j_{+p},j_{-p}]^\textrm{T}=[u_{+p},u_{-p}]^\textrm{T}$ means that there exists a solution $\psi$ of the Maxwell ODEs with
\begin{eqnarray*}
  && \psi(0) = j_{+p}\vrp + u_{-p}\vlp + u_{-e}\vle\,,\\
  && \psi(L) = j_{-p}\vlp + u_{+p}\vrp + u_{+e}\vre\,,
\end{eqnarray*}
and, since $[\psi(z),\psi(z)]$ is invariant in $z$, (\ref{conservation}) yields $|j_{+p}|^2-|u_{-p}|^2=|u_{+p}|^2-|j_{-p}|^2$.

At the real pair $\kwz$ on the dispersion relation $\ell\kw=0$, we have seen that $S_0$ is well defined and, because of (\ref{omegaanalytic}),
\begin{equation*}
  S_0\kwz=\lim_{\omega\to\omega_0}S_0(\kk_0,\omega)=-(P_pR\Preg CP_p)\kwz.
\end{equation*}
Since $\ell\kwz=0$, we have
\begin{equation*}
  \tilde S_0\kwz=0\,.
\end{equation*}

We have arrived at the key point (2) at the beginning of section~\ref{sec:scattering}.
{\em The sensitive nature of resonant scattering for real $\kw$ near a guided-mode pair $\kwz$ is embodied in a common zero of $a_j$, $b_j$, and $\ell$ at $\kwz$.}  This was the basis of the analysis of scattering resonances in~\cite{ShipmanVenakides2005}.
That work shows another way to see the confluence of the zero sets of $\ell$ and $\tilde S_0$ at $\kwz$ without directly utilizing the fact that $\sspan\{\vrp,\vlp\}\subset{\cal N}$.  One starts with the field $\Psi^\out$ in (\ref{analyticconnection}) that connects scattering states analytically to guided modes near $\kwz$.  The matrix $\tilde S_0\kw$ is obtained by projecting this field analytically onto the propagating modes.  The fact that $\tilde S_0\kwz=0$ comes from $\ell\kwz=0$ and the energy-flux conservation laws $|\ell|^2=|a_j|^2+|b_j|^2$ for real $\kw$.  The latter follows from setting the incident field in \ref{totalfield} to be a propagating mode from the right or from the left with appropriate coefficient $j_{\pm p}=\ell$.

\medskip
{\bfseries The transmitted energy.}\, Let ${\cal T}\kw^2$, for real $\kw$, denote the ratio of energy flux of the transmitted field to that of an incident propagating field from the left of the slab or the right.  By setting the incident field in \ref{totalfield} to be propagating from the left, i.e., $j_{+p}\not=0,j_{-p}=j_{-e}=j_{+e}=0$, or from the right, i.e., $j_{-p}\not=0,j_{+p}=j_{-e}=j_{+e}=0$,
the unitarity of $S_0\kw$ and the flux relations (\ref{canonicalform}) yield
\begin{align}\label{def:transmission}
   {\cal T}\kw^2&=|t_{\pm p}\kw|^2=1-|r_{\pm p}\kw|^2  \hspace{5em}
   \text{(transmitted energy)} \\
    & = \frac{|b_j\kw|^2}{|\ell\kw|^2} = 1-\frac{|a_j\kw|^2}{|\ell\kw|^2},
    \quad j=1,2 \nonumber
\end{align}
provided $\ell\kw\not=0$.  Thus ${\cal T}\kw=|t_{\pm p}|$ is the modulus of the transmission coefficient, independently of the side of the slab on which the incident field impinges.

At a guided-mode pair $\kw$, the transmission can be defined in the $\omega$-limit as
\begin{eqnarray}\label{TransmittanceAtGuidedModePair}
   {\cal T}\kwz^2:=\lim_{\omega\rightarrow \omega_0}{\cal T}(\kk_0,\omega)^2= \frac{|b'_j|^2}{|\ell'|^2} = 1-\frac{|a'_j|^2}{|\ell'|^2},
   \quad j=1,2,
\end{eqnarray}
where
\begin{eqnarray}
b'_j := \frac{\partial b_j}{\partial\omega}\kwz,
\quad
a'_j := \frac{\partial a_j}{\partial\omega}\kwz,
\quad
\ell':=\frac{\partial\ell}{\partial\omega}\kwz\not=0.
\end{eqnarray}
The fact that $\ell'\not=0$ follows from Theorem~\ref{thm:nondegeneracy}.

Energy-flux conservation takes the form
\begin{eqnarray}
|\ell\kw|^2=|a_j\kw|^2+|b_j\kw|^2,\;\;j=1,2,
\end{eqnarray}
for real $\kw$. 

\subsection{Scattering anomalies}\label{sec:anomalies}

The following two subsections analyze two fundamental resonance phenomena---transmission anomalies and field amplification.
The former deals with the resonant transmission of energy of a propagating mode across a slab and is based on the sensitivity of the reduced scattering matrix $S_0$ to frequency around a guided slab mode.  The latter requires consideration of the full scattering matrix $S$ because resonant amplification of a field in the slab is due to enhanced excitation of the evanescent modes in the ambient space just outside the~slab.

\subsubsection{Transmission anomalies}\label{sec:transmission}

The transmission graph in Figure~\ref{fig:resonance} exhibits a characteristic sharp double spike, or peak-dip shape, often called a ``Fano lineshape".  It is a sharp deviation, localized at the bound-state frequency, from a certain ``background transmission".  This background can be identified with the transmission graph of the unperturbed system ($\kk=\kk_0$ in the present context), which features no anomaly at all (the dotted graph).  Most of this subsection is an analysis of this anomaly, which leads to a transparent formula for it (\ref{T2b} and \ref{a/b}).  The analysis builds on \cite{ShipmanVenakides2005,Shipman2010}, which treats the case of a one-dimensional wavevector.

\smallskip
The relations (\ref{TransmittanceAtGuidedModePair}) imply the following conditions on the  
derivatives $a'_j$ and $b'_j$.

\begin{Lemma}\label{derivatives}
Either both of $a'_j$ ($j=1,2$) vanish or neither does; the same holds for~$b'_j$.  But at least one of the pairs $(a'_1,a'_2)$, $(b'_1,b'_2)$ must be nonzero.
\end{Lemma}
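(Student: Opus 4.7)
My plan is to extract all three assertions directly from the limit formula (\ref{TransmittanceAtGuidedModePair}) for the transmittance at a guided-mode pair, using only the nondegeneracy $\ell'\neq 0$ granted by Theorem~\ref{thm:nondegeneracy}. Since both expressions in (\ref{TransmittanceAtGuidedModePair}) give the same number ${\cal T}(\kwz)^2$ for the two choices $j=1,2$, the lemma is essentially encoded in that single identity.

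First I would write (\ref{TransmittanceAtGuidedModePair}) for the two values $j=1,2$ in parallel,
\[
  \frac{|b'_1|^2}{|\ell'|^2} \,=\, {\cal T}(\kk_0,\omega_0)^2 \,=\, \frac{|b'_2|^2}{|\ell'|^2},
  \qquad
  \frac{|a'_1|^2}{|\ell'|^2} \,=\, 1-{\cal T}(\kk_0,\omega_0)^2 \,=\, \frac{|a'_2|^2}{|\ell'|^2}.
\]
Since $|\ell'|^2\neq 0$, canceling it immediately yields $|a'_1|=|a'_2|$ and $|b'_1|=|b'_2|$. This settles the first two assertions of the lemma: within each pair one entry vanishes if and only if the other does.

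Next I would rearrange (\ref{TransmittanceAtGuidedModePair}) into the first-order (infinitesimal) form of the energy-flux conservation identity,
\[
  |\ell'|^2 \,=\, |a'_j|^2+|b'_j|^2, \qquad j=1,2,
\]
which can also be read off by expanding $|\ell|^2=|a_j|^2+|b_j|^2$ (valid on real $\kw$) to quadratic order in $\omega-\omega_0$ along $\{\kk_0\}\times\RR$. Because the left-hand side is strictly positive by Theorem~\ref{thm:nondegeneracy}, at least one of $a'_j,b'_j$ must be nonzero for each $j$; combining this with the modulus equalities from the previous paragraph forces at least one of the pairs $(a'_1,a'_2)$, $(b'_1,b'_2)$ to be entirely nonzero, giving the third assertion.

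There is no real obstacle in the argument itself; the substantive ingredient is simply Theorem~\ref{thm:nondegeneracy}, whose proof (deferred to section~\ref{sec:nondegeneracy}) rests on the careful energy-density analysis of section~\ref{sec:ODEs}. Once $\ell'\neq 0$ is in hand, the lemma is a one-line consequence of the two equivalent expressions for the limiting transmittance.
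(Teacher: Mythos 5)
Your proposal is correct and follows exactly the route the paper intends: the paper offers no separate proof of the lemma beyond the remark that it is implied by the relations (\ref{TransmittanceAtGuidedModePair}), and your two steps (reading off $|a'_1|=|a'_2|$, $|b'_1|=|b'_2|$ from the equality of the two expressions for ${\cal T}\kwz^2$, then using $|\ell'|^2=|a'_j|^2+|b'_j|^2>0$ with $\ell'\neq 0$ from Theorem~\ref{thm:nondegeneracy}) are precisely the intended deduction. Nothing is missing.
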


The extreme cases of full background transmission and full background reflection at the guided-mode pair $\kwz$ lead to limiting forms of the Fano-type shape.  Full reflection at $\kwz$, i.e., ${\cal T}\kwz=0$, means
\begin{equation*}
  b'_j=0,\;\;j=1,2
\end{equation*}
and leads to a ``Lorentzian lineshape" with a single peak at its center, and full transmission, i.e., ${\cal T}\kwz=1$, means
\begin{equation*}
  a'_j=0\,, \qquad j=1,2,
\end{equation*}
and results in an inverted Lorentzian shape.

\medskip
{\bfseries The case of a peak-dip anomaly.}\
When the derivatives of all the entries of $\tilde S_0$ are nonzero at $\kwz$, which is the case of neither full nor vanishing background transmission, the Weierstra{\ss} Preparation Theorem provides the following local forms:
\begin{equation}\label{Weierstrass1}
  \renewcommand{\arraystretch}{1.1}
\left.
  \begin{array}{ll}
    \ell\kw = \big(\tomega + \ell_1(\tkk) + \ell_2(\tkk,\tkk) + \dots \big)\hat\ell\kw\,, & \hat\ell\kwz=\ell'\not=0 \\
    a_1\kw = \big(\tomega + r_1(\tkk) + r_2(\tkk,\tkk) + \dots \big)\hat a_1\kw\,, & \hat a_1\kwz=a'_1\not=0 \\
    b_1\kw = \big(\tomega + t_1(\tkk) + t_2(\tkk,\tkk) + \dots \big)\hat b_1\kw\,, & \hat b_1\kwz=b'_1\not=0  \\
    a_2 = -e^{i\theta} \overline{a_1},\;\;b_2 = e^{i\theta} \overline{b_1},\;\;\theta=\theta\kw\in\RR & \text{for real } \kw
  \end{array}
\right.
\end{equation}
in which
\begin{equation*}
  \tilde\kk = \kk-\kk_0\,,
  \qquad
  \tilde\omega = \omega-\omega_0\,,
\end{equation*}
and the functions $\ell_1$, $r_1$, $t_1$ are linear, $\ell_2$, $r_2$, $t_2$ are bilinear forms, {\itshape etc.} of $\tkk$. The relation between $a_1$ and $a_2$ and that between $b_1$ and $b_2$ follows from the unitarity of the reduced scattering matrix $S_0\kw$ for real~$\kw$.

The analysis of resonant scattering anomalies relies on certain relations among the coefficients in the expansions~(\ref{Weierstrass1}).

\begin{Proposition}\label{prop:coefficients}\renewcommand{\labelenumi}{\alph{enumi}.}
\hspace{1em}
\begin{enumerate}
  \item $|\ell'|^2 = |a'_1|^2+|b'_1|^2$\,.
  \item $\ell_1(\tkk) = r_1(\tkk) = t_1(\tkk)\,\in\RR\,$ for all $\tkk\in\RR^2$.
  \item $\Im\ell_2(\tkk,\tkk)\geq0$ for all $\tkk\in\RR^2$.
  \item For all $\tkk\in\RR^2$,
\begin{equation*}
  \ell_2(\tkk,\tkk)\in\RR \iff r_2(\tkk,\tkk) = t_2(\tkk,\tkk)\in\RR \iff \ell_2(\tkk,\tkk) = r_2(\tkk,\tkk) = t_2(\tkk,\tkk)\in\RR
\end{equation*}
  \item  Given $\xxi\in\RR^2$,\ \ $\left| a'_1 \right|^2\Re\!\left( r_2(\xxi,\xxi)-\ell_2(\xxi,\xxi) \right) + \left| b'_1 \right|^2\Re\!\left( t_2(\xxi,\xxi)-\ell_2(\xxi,\xxi) \right) \,=\, 0$
  \item Given $\xxi\in\RR^2$, if $r_2(\xxi,\xxi), t_2(\xxi,\xxi)\in\RR$ then
\begin{align}
\big(r_2(\xxi,\xxi)-t_2(\xxi,\xxi)\big)^2&=2\big(\Im\ell_2(\xxi,\xxi)\big)^2+\big(r_2(\xxi,\xxi)-\operatorname{Re}\ell_2(\xxi,\xxi)\big)^2+\big(t_2(\xxi,\xxi)-\operatorname{Re}\ell_2(\xxi,\xxi)\big)^2,\label{IneqBetwePeakDip}\\
-\big(\Im\ell_2(\xxi,\xxi)\big)^2&=(r_2(\xxi,\xxi)-\operatorname{Re} \ell_2(\xxi,\xxi))(t_2(\xxi,\xxi)-\operatorname{Re}\ell_2(\xxi,\xxi)).\label{IneqCentrFreqBetwePeakDip}
\end{align}
  \item Given $\xxi\in\RR^2$, if $r_2(\xxi,\xxi), t_2(\xxi,\xxi)\in\RR$ then
\begin{equation*}
  \Re\ell_2(\xxi,\xxi)-t_2(\xxi,\xxi)=\pm\frac{\left| a'_1 \right|}{\left| b'_1 \right|} \Im\ell_2(\xxi,\xxi)\,,
  \qquad
  \Re\ell_2(\xxi,\xxi)-r_2(\xxi,\xxi)=\mp\frac{\left| b'_1 \right|}{\left| a'_1 \right|} \Im\ell_2(\xxi,\xxi)\,.
\end{equation*}
In particular, these differences are of opposite sign.
  \item Given $\xxi\in\RR^2$, if $a_1(\kk_0+\tau\xxi,\omega)=a_2(\kk_0+\tau\xxi,\omega)$ and $b_1(\kk_0+\tau\xxi,\omega)=b_2(\kk_0+\tau\xxi,\omega)$ for $|\tau|\ll1$ and $|\omega-\omega_0|\ll1$ and if $\Im\ell_2(\xxi,\xxi)>0$ and $r_2(\xxi,\xxi)\not=t_2(\xxi,\xxi)$, then all of the $k$-th order tensors $r_k$ and $t_k$ take on real values at~$\xxi$.
\end{enumerate}
\end{Proposition}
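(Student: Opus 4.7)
The plan is to deduce all eight statements from three ingredients via Taylor analysis of the Weierstra{\ss} representations in (\ref{Weierstrass1}): (i) the energy conservation identity $|\ell|^2 = |a_j|^2 + |b_j|^2$ for real $\kw$ and $j=1,2$; (ii) the unitarity relations $a_2 = -e^{i\theta}\overline{a_1}$, $b_2 = e^{i\theta}\overline{b_1}$ on real $\kw$; and (iii) Theorem~\ref{thm:resolvent}, which locates the local dispersion branch $\tomega_g(\tkk) = -\ell_1(\tkk) - \ell_2(\tkk,\tkk) - \cdots$ in the closed lower half $\omega$-plane for real $\tkk$.

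Part (a) follows by restricting (i) to $\kk=\kk_0$ (where $\ell/\tomega = \hat\ell$, etc.) and evaluating at $\tomega\to 0$. For (b), expand the real-analytic identity $F := |\ell|^2 - |a_1|^2 - |b_1|^2 \equiv 0$ to second order in $(\tkk,\tomega)\in\RR^3$; the quadratic derivatives of $F$ at the origin reduce via the Weierstra{\ss} form to
\[
|\ell'|^2|\ell_1(\tkk)|^2 = |a_1'|^2|r_1(\tkk)|^2 + |b_1'|^2|t_1(\tkk)|^2,\qquad |\ell'|^2\Re\ell_1(\tkk) = |a_1'|^2\Re r_1(\tkk) + |b_1'|^2\Re t_1(\tkk),
\]
for all real $\tkk\in\RR^2$. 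Combining these via the variance identity yields an exact expression for $(\Im\ell_1(\tkk))^2$ as a sum of two nonnegative quadratic quantities, one in $(\Re r_1 - \Re t_1)$ and one in $(\Im r_1, \Im t_1)$. Ingredient (iii) applied at linear order forces $\Im\ell_1\equiv 0$ (otherwise a suitable sign of $\tkk$ would give $\Im\tomega_g(\tkk)>0$, contradicting Theorem~\ref{thm:resolvent}); both nonnegative terms must then vanish, yielding $r_1 = t_1\in\RR$ and then $\ell_1 = r_1\in\RR$ from the linear identity. Part (c) is the second-order refinement of (iii): with $\ell_1\in\RR$, $\Im\tomega_g(\tkk) = -\Im\ell_2(\tkk,\tkk) + O(|\tkk|^3)\leq 0$ gives $\Im\ell_2(\tkk,\tkk)\geq 0$.

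For (e)--(g), push the Taylor expansion of $F\equiv 0$ further. Substituting $\tomega = -\ell_1(\tkk) + s$ with $s\in\RR$ and tracking the coefficient of $s\,\tkk_i\tkk_j$ at total degree $3$, the only contribution in $|\ell|^2 = |p|^2|\hat\ell|^2$ (with $p = \tomega + \ell_1 + \ell_2 + \cdots$) is $2s|\ell'|^2\Re\ell_2(\tkk,\tkk)$, because corrections from variations of $|\hat\ell|^2$ off $|\ell'|^2$ couple with $|p|^2\sim s^2$ and land at strictly higher total degree. Matching the analogous expressions for $a_1, b_1$ and invoking (a) produces (e). The total-degree-4 coefficient at $\tkk_i\tkk_j\tkk_k\tkk_m$, under the hypothesis $r_2,t_2\in\RR$, yields $|\ell'|^2|\ell_2|^2 = |a_1'|^2 r_2^2 + |b_1'|^2 t_2^2$; eliminating $|a_1'|^2,|b_1'|^2$ via (a) and (e) produces (\ref{IneqCentrFreqBetwePeakDip}), and (\ref{IneqBetwePeakDip}) is then an algebraic identity. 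Part (g) is the explicit solution of the $2\times 2$ linear system in $(\Re\ell_2 - r_2, \Re\ell_2 - t_2)$ furnished by (e) and (\ref{IneqCentrFreqBetwePeakDip}), with $|a_1'|^2 + |b_1'|^2 = |\ell'|^2$ providing the normalization. Part (d) collapses: if $\ell_2\in\RR$, (\ref{IneqCentrFreqBetwePeakDip}) forces $(r_2-\ell_2)(t_2-\ell_2)=0$ and (e) rules out a single factor vanishing, so $r_2 = t_2 = \ell_2$; conversely, $r_2 = t_2\in\RR$ combined with (e) gives $\Re\ell_2 = r_2$, and then $\Im\ell_2 = 0$ from (\ref{IneqCentrFreqBetwePeakDip}).

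Part (h) is the main obstacle. The hypothesis $a_1 = a_2$, $b_1 = b_2$ on the complex line $\kk = \kk_0 + \tau\xxi$, combined with (ii) on the real portion of that line, forces $a_1\overline{b_1} + \overline{a_1}b_1 \equiv 0$ on the real line, i.e., $a_1\overline{b_1}$ is purely imaginary there. Substituting the Weierstra{\ss} forms of $a_1, b_1$ and expanding in $(\tau,\tomega)\in\RR^2$ produces an infinite sequence of reality constraints: total degree $2$ yields $a_1'\overline{b_1'}\in i\RR$; total degree $3$ yields $\Im r_2(\xxi,\xxi) = \Im t_2(\xxi,\xxi)$; and so on. Seeded with the nondegeneracy hypotheses $\Im\ell_2(\xxi,\xxi)>0$ and $r_2(\xxi,\xxi)\neq t_2(\xxi,\xxi)$, one sets up an induction on $k$ in which reality of the lower-order tensors feeds into the next degree of the constraint to produce reality of $r_k(\xxi,\dots,\xxi)$ and $t_k(\xxi,\dots,\xxi)$. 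The nondegeneracy $\ell'\neq 0$ (Theorem~\ref{thm:nondegeneracy}) makes the $\tomega$-Taylor coefficient uniquely solvable at each step; the technical heart is disentangling how real and imaginary parts of higher-order tensors couple across orders, with the nondegeneracy hypotheses precisely ensuring that the inductive linear system remains invertible.
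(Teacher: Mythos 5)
Your overall route is the same as the paper's: everything is extracted from the flux-conservation identity $|\ell|^2=|a_1|^2+|b_1|^2$ on real $\kw$, the unitarity relations between $a_1,a_2$ and $b_1,b_2$, and Theorem~\ref{thm:resolvent}, by Taylor-expanding the Weierstra{\ss} forms (\ref{Weierstrass1}) along suitable curves. For parts (a)--(c), (e) and (g) your computations check out and are in fact more explicit than the paper's own sketch (the paper defers (a)--(d) and the first identity of (f) to Theorem~20 of \cite{Shipman2010}); your variance-identity argument at linear order for (b) and your extraction of the $s\,\tau^2$ coefficient for (e) are exactly the intended mechanism. Your derivation of (f) goes in the opposite direction from the paper's --- you obtain (\ref{IneqCentrFreqBetwePeakDip}) from the degree-four coefficient along $\tomega=-\ell_1(\tkk)$ and get (\ref{IneqBetwePeakDip}) as an algebraic consequence, whereas the paper expands along the zero-reflection and zero-transmission curves to get (\ref{IneqBetwePeakDip}) first --- but both are valid and equivalent in content.

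There are two genuine gaps. First, your forward implication in (d) is circular as written: you invoke (\ref{IneqCentrFreqBetwePeakDip}) under the hypothesis ``$\ell_2(\tkk,\tkk)\in\RR$'' alone, but that identity was established only under the assumption $r_2,t_2\in\RR$, which is precisely what you are trying to prove. The fix is available with your own tools: the degree-four matching along $\tomega=-\ell_1(\tkk)$ gives the unrestricted identity $|\ell'|^2|\ell_2|^2=|a_1'|^2|r_2|^2+|b_1'|^2|t_2|^2$, and combining it with the degree-three identity $|\ell'|^2\Re\ell_2=|a_1'|^2\Re r_2+|b_1'|^2\Re t_2$ via the same variance identity you used in (b) yields $(\Im\ell_2)^2=\alpha\beta(\Re r_2-\Re t_2)^2+\alpha(\Im r_2)^2+\beta(\Im t_2)^2$ with $\alpha=|a_1'|^2/|\ell'|^2$, $\beta=|b_1'|^2/|\ell'|^2$; setting $\Im\ell_2=0$ kills all three nonnegative terms and gives (d) directly. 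Second, part (h) is a plan rather than a proof: you correctly reduce the reciprocity hypothesis to $\Re(a_1\overline{b_1})\equiv0$ on real $\kw$ and identify the order-by-order induction, but the inductive step --- that reality of the tensors through order $k-1$, together with $\Im\ell_2(\xxi,\xxi)>0$ and $r_2(\xxi,\xxi)\neq t_2(\xxi,\xxi)$, forces reality at order $k$ --- is asserted, not established, and it is exactly where the work lies. The paper itself delegates this to Theorem~12 of \cite{ShipmanTu2012}, so your sketch stops at the same point the paper does, but it cannot be counted as a self-contained proof of (h).
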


\begin{proof}
Items (a--d) are stated in Theorem~20 of~\cite{Shipman2010} in the case of a one-dimensional wavevector.  The proofs are straightforwardly extensible to the multi-dimensional case.  Briefly, (a) comes from~(\ref{TransmittanceAtGuidedModePair}), and that $\ell_1$ is real and $\ell_2$ has a nonnegative imaginary part comes from Theorem~\ref{thm:resolvent}, which disallows a frequency of a generalized slab mode to be in the upper half plane.  The other parts of (b--d) are proved by applying flux conservation $|\ell|^2=|a_1|^2+|b_1|^2$ for real $\kw$ along the relation $\tomega + \ell_1(\tkk)=0$, then along the relation $\tomega + \ell_1(\tkk) + r_2(\tkk,\tkk)=0$, {\itshape etc.} 
Equality (e) is proved by expanding (a) evaluated at $(\kk_0+\tau\xxi,\omega)$ in the variables $\tau$ and $\hat\omega=\tomega+\tau\ell_1(\xxi)$ and taking the coefficient of $\hat\omega\tau^2$.
The first equality in (f) and item (h) are proved in the one-dimensional case in Theorem~20 of \cite{Shipman2010} and Theorem~12 of \cite{ShipmanTu2012}, respectively.  Their proofs are extensible to the multi-dimensional case by considering a one-dimensional subspace of wavevectors $\{\tau\xxi : \tau\in\RR \}$, with $\xxi\in\RR^2$ fixed. The second equality in (f) follows from the first, which is proved by applying flux conservation $|\ell|^2=|a_1|^2+|b_1|^2$ for real $\kw$ along the relation $\tomega + \ell_1(\tkk)+ r_2(\tkk,\tkk)=0$ and then along $\tomega + \ell_1(\tkk)+t_2(\tkk,\tkk)=0$ for wavevectors $\tkk=\tau \xxi$, $|\tau|\ll 1$.
Part (g) comes from (e) and the second equation of (f).
\end{proof}

\smallskip
Because of part (b), the zero-sets of $\ell$, $a_j$ and $b_j$ are
\begin{equation}
  \renewcommand{\arraystretch}{1.1}
\left.
  \begin{array}{rll}
    \ell=0: & \tomega + \ell_1(\tkk) + \ell_2(\tkk,\tkk) + \dots = 0 &\text{(dispersion relation)} \\
    a_j=0: & \tomega + \ell_1(\tkk) + r_2(\tkk,\tkk) + \dots = 0 &\text{(zero reflection)} \\
    b_j=0: & \tomega + \ell_1(\tkk) + t_2(\tkk,\tkk) + \dots = 0 &\text{(zero transmission)}
  \end{array}
\right\}
\;
\text{with $\ell_1(\tkk)\in\RR$ for all $\tkk\in\RR^2$.}
\end{equation}
For real $\kw$, the transmission is 
\begin{equation}\label{T2b}
\renewcommand{\arraystretch}{1.1}
\left.
  \begin{array}{lcl}
  {\cal T}\kw^2 &=& \displaystyle\frac{|b_1|^2}{|\ell|^2}=\frac{|b_1|^2}{|a_1|^2+|b_1|^2}\\
\vspace{-1ex}\\
&=&
\displaystyle\frac{\left|b'_1\right|^2\,\left|\tomega + \ell_1(\tkk) + t_2(\tkk,\tkk) + \dots\right|^2}
    {\,\left|b'_1\right|^2\,\left|\tomega + \ell_1(\tkk) + t_2(\tkk,\tkk) + \dots\right|^2
              \;+\; \left|a'_1\right|^2\,\left|\tomega + \ell_1(\tkk) + r_2(\tkk,\tkk) + \dots\right|^2 \left|q(\tkk,\tomega)\right|^2\,}\,,    
  \end{array}
\right.
\end{equation}
in which $q\tkw$ is analytic at $(0,0)$ and $q(0,0)=1$.  It can be written alternatively as
\begin{equation}\label{a/b}
\renewcommand{\arraystretch}{1.1}
\left.
  \begin{array}{rcl}
     {\cal T}\kw^2 &=& \displaystyle\frac{1}{1+ |a_1/b_1|^2} \,,\\
  \vspace{-1ex}\\
       \displaystyle\frac{a_1}{b_1} &=&
  \displaystyle\frac{a_1'}{b_1'}\cdot
  \frac{\tomega + \ell_1(\tkk) + r_2(\tkk,\tkk) + \cdots}{\tomega + \ell_1(\tkk) + t_2(\tkk,\tkk) + \cdots}\cdot
  q\tkw\,.
  \end{array}
\right.
\end{equation}
At $\kk=\kk_0$, the transmission is continuous as a function of $\omega$,
\begin{equation*}
  {\cal T}(\kk_0,\omega)^2 = \frac{|b'_1|^2}{|b'_1|^2 + |a'_1|^2 |q(0,\tomega)|^2}=\frac{1}{1 + |a'_1/b'_1|^2 |q(0,\tomega)|^2}\,.
\end{equation*}

\medskip
{\bfseries Fano-type resonance.}\
Often, the transmission attains $0\%$ and $100\%$ at frequencies near $\omega_0$ when $\kk$ is perturbed from $\kk_0$, resulting in a clean transmission anomaly, as in the example in section~\ref{sec:example}.
This happens when the scattering problem is reciprocal, that is, when $a_1\!=\!a_2\!=\!a$ and $b_1\!=\!b_2\!=\!b$, which is guaranteed when the Maxwell ODEs admit symmetry in~$z$ about the centerline of the slab, as in Fig.~\ref{fig:layered}(right).
This is seen by Proposition~\ref{prop:coefficients}(h): if $\Im\ell_2(\xxi,\xxi)\!>\!0$ and $r_2(\xxi,\xxi)\not=t_2(\xxi,\xxi)$ for some unit wavevector $\xxi\in\RR^2$, then the transmission and reflection coefficients $a$ and $b$ vanish at {\em real} frequencies associated with the real wavevectors $\kk=\kk_0+\tau \xxi$.  Denote these frequencies by $\omega_\text{\tiny max}$ and $\omega_\text{\tiny min}$:
\begin{equation}
  \renewcommand{\arraystretch}{1.1}
\left.
  \begin{array}{ll}
    \omega_\text{\tiny max} = \omega_0 - \tau\ell_1(\xxi) - \tau^2r_2(\xxi,\xxi) + \cdots & \text{($100\%$ transmission).} \\
    \omega_\text{\tiny min} = \omega_0 - \tau\ell_1(\xxi) - \tau^2t_2(\xxi,\xxi) + \cdots & \text{($0\%$ transmission).}
  \end{array}
\right.
\end{equation}
If $\ell_1(\xxi)\not=0$, the anomaly is detuned linearly from the real guided-mode frequency $\omega_0$.

Along the line of wavevectors $\kk=\kk_0+\tau \xxi$, the frequency of the generalized guided mode is
\begin{eqnarray}
\omega_g=\omega_0-\tau\ell_1(\xxi)-\tau^2\ell_2(\xxi,\xxi) + \dots\,,
\qquad \text{(complex resonance frequency)}
\end{eqnarray} 
which lies below the real $\omega$-axis since $\Im\ell_2(\xxi,\xxi)\!>\!0$.
Its real and imaginary parts are interpreted as the {\em central frequency} and the {\em width} of the transmission resonance,
\begin{eqnarray}
\omega_\text{\tiny cent} &:=&\Re \omega_g=\omega_0-\tau\ell_1(\xxi)-\tau^2\Re\ell_2(\xxi,\xxi)+O(\tau^3)\,,
   \hspace{3em} \text{($\omega_\text{\tiny cent}=$ central frequency)} \label{centralfrequency}\\
-\textstyle\half\Gamma &:=& \Im \omega_g= -\tau^2\Im\ell_2(\xxi,\xxi)+\O(\tau^3)\,, \label{width}
   \hspace{8.5em} \text{($\Gamma=$ width)} \\
 Q &=& \frac{\left|\Re\omega_g\right|}{-2\,\Im\omega_g} = \frac{\left|\omega_\text{\tiny cent}\right|}{\Gamma}
                    = \frac{\left|\omega_0\right|}{-\tau^2\,2\,\Im\ell_2(\xxi,\xxi)} + O(|\tau|^{-1})\,. \label{Q}
               \hspace{1.4em}     \text{(quality factor)}
\end{eqnarray}

Importantly, \emph{the central frequency $\omega_\text{\tiny cent}=\omega_\text{\tiny cent}(\kk)=\omega_\text{\tiny cent}(\kk_0+\tau\xxi)$ lies between the frequencies of $0\%$ and $100\%$ transmission}.  This is seen through the relations
%
%
%
\begin{eqnarray*}
  \omega_\text{\tiny max}-\omega_\text{\tiny cent} &=& \pm\tau^2\frac{\left| a'_1 \right|}{\left| b'_1 \right|}\Im\ell_2(\xxi,\xxi) + O(\tau^3)\,,\\ 
  \omega_\text{\tiny min}-\omega_\text{\tiny cent} &=& \mp\tau^2\frac{\left| b'_1 \right|}{\left| a'_1 \right|}\Im\ell_2(\xxi,\xxi) + O(\tau^3)\,,
\end{eqnarray*}
for $0<|\tau|\ll 1$, which follows from Proposition~\ref{prop:coefficients}(g).
The width
\begin{equation}\label{Gamma}
  \Gamma\,\approx\,2\tau^2\Im\ell_2(\xxi,\xxi)
\end{equation}
is quadratically small in $\tau$.
A different but related measure of the width (also quadratic in $\tau$) is the distance between its peak and its dip,
\begin{equation}
  \omega_\text{\tiny max} - \omega_\text{\tiny min}
  \,=\, \tau^2 \left( t_2(\xxi,\xxi) - r_2(\xxi,\xxi) \right) + \O(\tau^3)\,. 
\end{equation}

The quality factor of a resonance, from (\ref{Qfactor}) in section~\ref{sec:EnergyFluxAndDensity} is the ratio of the energy of a generalized guided mode contained in a finite volume to the energy dissipated from that volume per cycle, 
It is inversely proportional to the spectral width of the resonance and thus blows up like $1/\tau^2$ as the parallel wavevector $\kk=\kk_0+\tau\xxi$ approaches a wavevector $\kk_0$ that supports a perfect guided mode.

\smallskip
One draws the following conclusions about transmission anomalies near a guided-mode wavevector-frequency pair $\kwz$, as $\kk$ is perturbed along $\xxi$, {\itshape i.e.,} $\kk=\kk_0+\tau \xxi$, given that $\Im\ell_2(\xxi,\xxi)\not=0$.
\begin{enumerate}\renewcommand{\labelenumi}{\alph{enumi}.}
  \item  $\cal T$ takes on all values between $0$ and $1$ in each real $\tau\omega$-ball around $(0,\omega_0)$.
  \item  The width of the anomaly is quadratic in $\,\tau$.
  \item  The order in which the peak and dip occur is independent of $\,\tau$.
  \item  If $\ell_1(\xxi)\not=0$, the central frequency of the anomaly is detuned linearly from the guided-mode frequency by $\Delta\omega=\omega_\text{\tiny cent}-\omega_0\sim-\ell_1(\xxi)\tau$. Thus the frequencies of the peak and the dip, $\omega_\text{\tiny max}$ and $\omega_\text{\tiny min}$, and central frequency $\omega_\text{\tiny cent}$, which differ by $O(\tau^2)$ are located to the same side of $\omega_0$ with $\omega_\text{\tiny cent}$ lying between $\omega_\text{\tiny max}$ and~$\omega_\text{\tiny min}$,
\begin{equation*}
    (\omega_\text{\tiny max}\!-\omega_0) \;\sim\; (\omega_\text{\tiny min}\!-\omega_0) \;\sim\; (\omega_\text{\tiny cent}-\omega_0) \;\sim\; -\ell_1(\xxi)\tau
    \qquad (\ell_1(\xxi)\not=0).
\end{equation*}
  \item  The Q-factor is inversely proportional to the width of the resonance, $Q\sim|\omega_0|/\Gamma$.
\end{enumerate}

A connection to the Fano formula~\cite{Fano1961} is made by writing the transmission in the form
\begin{equation*}
  {\cal T}\kw^2 \,=\, \frac{|b|^2}{|\ell|^2}
      \,=\, \left| b'_1 \right|^2 \frac{\left| \varpi + \tau^2\big( t_2(\xxi,\xxi) - \Re\ell(\xxi,\xxi) \big) + \dots \right|^2}
                                                 {\left| \varpi + i\tau^2 \big( \Im\ell_2(\xxi,\xxi) + \dots \big) \right|^2}
              \left| 1 + \dots \right|^2.
\end{equation*}
(In the fraction, the higher-order terms involve $\tau$, and in the last factor, they involve $\tau$ and $\omega-\omega_0$).
By ignoring the big-$O$ terms, one obtains simple peak-dip formulas for the transmission and refelection ${\cal R}^2 = 1- {\cal T}^2$,
\begin{eqnarray}\label{TR}
  {\cal T}\kw^2 &\approx& t_0^2\,\frac{(q+e)^2}{1+e^2}\,, \\
  {\cal R}\kw^2 &\approx& r_0^2\,\frac{(q'+e)^2}{1+e^2}\,,
\end{eqnarray}
through the following definitions.
\begin{equation*}
  \renewcommand{\arraystretch}{1.1}
\left.
  \begin{array}{ll}
    \varpi = \omega-\omega_\text{\tiny cent}(\kk_0+\tau\xxi) & \text{(deviation from central frequency)} \\
    \vspace{-2ex}\\
    \Gamma = 2\tau^2\,\Im\ell_2(\xxi,\xxi) +\dots & \text{(resonance width)} \\
    \vspace{-2ex}\\
    e = \displaystyle\frac{\varpi}{\Gamma/2} & \text{(normalized frequency)} \\
    \vspace{-2ex}\\
    q = \displaystyle\frac{t_2(\xxi,\xxi)-\Re\ell_2(\xxi,\xxi)}{\Im\ell_2(\xxi,\xxi)} & \text{(transmission asymmetry parameter)} \\
    \vspace{-2ex}\\
    q' = \displaystyle\frac{r_2(\xxi,\xxi)-\Re\ell_2(\xxi,\xxi)}{\Im\ell_2(\xxi,\xxi)} & \text{(reflection asymmetry parameter)} \\
    t_0 = |b'_j| \quad (j=1,2) & \text{(background transmission)} \\
    r_0 = |a'_j| \quad (j=1,2) & \text{(background reflection)} \\    
  \end{array}
\right.
\end{equation*}

\smallskip
{\bfseries Example of Section~\ref{sec:example} revisited.}
In the example, a true (exponentially decaying) guided mode was constructed for $k_2=0$ and for real $k_1$.  At $\kwz=(0.5,0;\,0.26015...)$, for instance, the slab supports a guided mode, and one has $\tkk=(\tilde k_1,k_2)=(k_1\!-\!0.5,k_2)$.  Because of the symmetry of the problem in the spatial variable $y$, the dispersion relation is even in $k_2$.
Thus, one can make the replacements $\ell_1(\tkk) \mapsto \ell_1\tilde k_1$ and $\ell_2(\tkk,\tkk) \mapsto \ell_{21}\tilde k_1^2 + \ell_{22}k_2^2$, with $\ell_{21}$ real,
\begin{equation}\label{omegag}
  \omega_g \,=\, \omega_0 - \ell_1\tilde k_1 - \ell_{21}\tilde k_1^2 - \Re\ell_{22}\,k_2^2 - i\,\Im\ell_{22}\,k_2^2 + \dots.
\end{equation}
Since the structure is symmetric about the center of the slab, $100\%$ and $0\%$ transmission is achieved along real curves
\begin{equation}\label{maxmin}
  \renewcommand{\arraystretch}{1.1}
\left.
  \begin{array}{ll}
    \omega_\text{\tiny max} = \omega_0 - \ell_1\tilde k_1 - \ell_{21}\tilde k_1^2 - r_2 k_2^2 + \cdots & \text{($100\%$ transmission).} \\
    \omega_\text{\tiny min} = \omega_0 - \ell_1\tilde k_1 - \ell_{21}\tilde k_1^2 - t_2 k_2^2 + \cdots & \text{($0\%$ transmission).}
  \end{array}
\right.
\end{equation}
These curves coincide with that for $\omega_g$ when $k_2\!=\!0$ because, by construction, the slab admits a perfect guided mode at a {\em real} frequency $\omega_g$, when $k_2=0$,
\begin{equation*}
  \omega_g = \omega_\text{\tiny cent} = \omega_\text{\tiny max} = \omega_\text{\tiny min}
  \qquad \text{(when $k_2=0$).}
\end{equation*}
When $k_2$ vanishes, the frequencies of $100\%$ and $0\%$ transmission coalesce and the anomaly disappears.  This apparent contradiction in the value of $\cal T$ is reflected in the fact that $\cal T$ takes on all values between $0$ and $1$ in every neighborhood of a point $(k_1,0;\,\omega_g)$ of a perfect guided mode.  Depending on the path along which $\kw$ approaches $(k_1,0;\,\omega_g)$, different limits of ${\cal T}\kw$ are attained.

As discussed in section~\ref{sec:example}, the the central frequency of a resonance is controlled by varying $k_1$, whereas the width is controlled by varying $k_2$.  This is true up to $O(|\tkk|^3)$, as seen either from the difference between the frequencies $\omega_\text{\tiny max}$ and $\omega_\text{\tiny min}$ in (\ref{maxmin}) or from the real and imaginary parts of $\omega_g$ (\ref{omegag}) with the definitions (\ref{centralfrequency},\ref{width}).
Fig.~\ref{fig:ComplexDispersion} shows two different paths $\kk=\kk_0+\tau\xxi$ in real $\kk$ space, one with $\xxi=(0, 1)$ and one with $\xxi=( 2/\sqrt{29}, 5/\sqrt{29} )$.  Along the first path, only $k_2$ varies, and the difference between the peak and dip of the transmission widens as the central frequency stays put, as shown in Fig.~\ref{fig:transmission}.  Along the second path, both $k_2$ and $k_1$ vary, effecting both the width and central frequency of the anomaly.  Notice that the ordering of the peak and the dip is independent of $\tau$.

\medskip
{\bfseries The Lorentzian case: $100\%$ or $0\%$ background transmission.}\
Full transmission at the guided-mode pair (${\cal T}\kwz=1$) occurs when $a_1'=a_2'=0$, and the expression (\ref{Weierstrass1}) is no longer valid for $a_1$.  Let us assume the generic condition
\begin{equation*}
  \frac{\partial^2 a_j}{\partial \omega^2}\kwz\not=0\,.
\end{equation*}
The Weierstra{\ss} Preparation Theorem provides the forms
\begin{equation}\label{Weierstrass2}
  \renewcommand{\arraystretch}{1.1}
\left.
  \begin{array}{ll}
    \ell\kw = \big(\tomega + \ell_1(\tkk) + \ell_2(\tkk,\tkk) + \dots \big)\,\hat\ell\kw\,, & \hat\ell\kwz\not=0 \\
    a_1\kw = \big(\tomega^2 + \tomega\alpha_1(\tkk) + \alpha_0(\tkk) \big)\,\hat a_1\kw\,, & \hat a_1\kwz\not=0 \\
    b_1\kw = \big(\tomega + \ell_1(\tkk) + t_2(\tkk,\tkk) + \dots \big)\,\hat b_1\kw\,, & |\hat b_1\kwz|=|\hat\ell\kwz|  
  \end{array}
\right.
\end{equation}
Here, $\alpha_0(\tkk)$ and $\alpha_1(\tkk)$ are analytic and vanishing at the origin.
One shows that the linear terms in $\ell$ and $b_1$ are indeed equal and that $|b'_1|=|\ell'|$ by using the flux-conservation relation $|\ell|^2 = |a_1|^2+|b_1|^2$ for real $\kw$.
In addition, by setting $\tomega=0$ and using flux conservation once again, one finds that
\begin{equation}\label{alpha0}
  \alpha_0(\tkk) = \O(|\tkk|^2).
\end{equation}

The zero-set of $a_1$ (and $a_2=-e^{i\theta}\overline{a_1}$) is obtained by factoring the quadratic Weierstra{\ss} polynomial.  To do this, write $\tkk=\tau\xxi$, where $\xxi$ is a vector and $\tau$ a complex scalar variable.  The factors are power series in $\tau^\onehalf$, but, because of (\ref{alpha0}), the leading term in $\tau$ is linear or higher,
\begin{equation}\label{puiseux}
  \tomega^2 + \tomega \alpha_1(\tau\xxi) + \alpha_0(\tau\xxi)
  = \left( \tomega + r_1^{(1)}(\xxi)\,\tau  + \O(\tau^\threehalves) \right)
     \left(  \tomega + r_1^{(2)}(\xxi)\,\tau  + \O(\tau^\threehalves) \right) \,.
\end{equation}
We are interested in the case that $\xxi$ is a real vector and $\tau$ is a real scalar.
In the case that the coefficients of the two Puiseux series in (\ref{puiseux}) and the coefficients of 
the polynomial $\tomega + \ell_1(\tau\xxi) + t_2(\tau\xxi,\tau\xxi) + \cdots$ in (\ref{Weierstrass2}), as a function of $\tau$, are real, 
one obtains one frequency of $0\%$ transmission and two frequencies of $100\%$ transmission,
\begin{equation}
  \renewcommand{\arraystretch}{1.1}
\left.
  \begin{array}{rll}
    a_j=0: &
    \renewcommand{\arraystretch}{1.1}
\left\{
  \begin{array}{l}
    \tomega + r_1^{(1)}(\xxi)\,\tau  + \O(\tau^\threehalves) = 0 \\
    \tomega + r_1^{(2)}(\xxi)\,\tau  + \O(\tau^\threehalves) = 0
  \end{array}
\right.
    & \text{(zero reflection)} \\
    \vspace{-2ex} \\
    b_j=0: & \tomega + \ell_1(\xxi)\,\tau + t_2(\xxi,\xxi)\,\tau^2 + \dots = 0 &\text{(zero transmission)}
  \end{array}
\right.
\end{equation}

\commentsps{How can we show that $\ell_1(\xxi)$ lies between the two coefficients $r_1^{(j)}(\xxi)$?  Using full expansion of flux conservation, the result would be in a higher order term that involves the factor multiplying the Weierstrass polynomial.}

\medskip
Zero transmission at the guided-mode pair (${\cal T}\kwz=0$) occurs when $b_1'=b_2'=0$.  This case is analyzed similarly.

\subsubsection{Resonant amplification}\label{sec:amplification}

Resonant amplification can be measured by the ratio $|\Psi^\out|/|\Psi^\inc|$ when it becomes unbounded. We assume now that $\Psi^\inc$ is analytic at $\kw=\kwz$ with $\Psi^\inc\kwz\not=0$. In (\ref{Smatrix}), the term $h\Pres C\Psi^\inc$ is an analytic multiple of an analytic eigenvector $\hat\Psi$ of $B$ corresponding to the simple eigenvalue $\ell$, 
\begin{equation*}
  \left( h \Pres C\Psi^\inc \right)\!\kw \,=\, \alpha\kw\hat\Psi\kw\,,
\end{equation*}
so that
\begin{equation*}
  \Psi^\out = S\Psi^\inc = \frac{\alpha\kw}{\omega-\omega_0+g(\kk)}\hat\Psi\kw
  \,+\,
  \left( R\Preg C\Psi^\inc \right)\kw,
\end{equation*}
in which $g(\kk) := \ell_1(\tkk) + \ell_2(\tkk,\tkk) + \dots$.
The second term is bounded, and thus we may neglect it when seeking the dominant behavior of the ratio $|\Psi^\out|/|\Psi^\inc|$, when this is large.  We might as well take $|\hat\Psi\kwz|=1$ to obtain
\begin{equation}\label{A}
  {\cal A} := \left| \frac{\alpha\kw}{\omega-\omega_0 + g(\kk)} \right|
  \,\sim\,
  \frac{|\Psi^\out|}{|\Psi^\inc|}
  \qquad
  \text{(when $|\Psi^\out|\gg|\Psi^\inc|$)}
\end{equation}
The quantity $\cal A$ is the {\em field amplification indicator}.

The dominant behavior of $\cal A$ near $\kwz$ is distinguished according to the two cases above, namely, whether the scattering problem is solvable at $\kwz$ or not.  Solvability means that $\alpha\kwz=0$, or that the coefficient $\alpha_0$ in the Taylor series of $\alpha$ vanishes,
\begin{equation*}
  \alpha\kw \;=\; \alpha_0 + \alpha_\kk\cdot\tilde\kk \,+\, \alpha_\omega \tilde\omega + \dots\,.
\end{equation*}
We have seen that $\alpha_0=0$ exactly when $\Pres C\Psi\inc\kwz=0$, or
\begin{equation*}
  \alpha_0=0
  \qquad \iff \qquad
  \Psi^\inc\kwz \in {\cal N} = \sspan\{ \vrp, \vlp, \Psi^\guided \}\,.
\end{equation*}

When $\kk=\kk_0$, i.e., $\tilde\kk=0$, $\omega_0$ is a frequency embedded in the continuum, and $\cal A$ simplifies to
\begin{equation}\label{A1}
  {\cal A} = \left| \frac{\alpha_0 + \alpha_\omega\tilde\omega + \dots}{\tilde\omega} \right|\,.
  \qquad
  \text{(along\; $\tilde\kk=0$)}
\end{equation}
Field amplification has particularly singular behavior when one approaches $\kwz$ along the transmission anomaly, which, to linear order, 
is the relation $\tomega+\ell_1(\tkk)=0$. Write $g$ in its real and imaginary parts (when evaluated at real $\kw$):
\begin{equation*}
  g(\kk) = g_1(\tkk) + ig_2(\tkk)\,.
\end{equation*}
When $\kw$ lies on the surface $\tilde\omega+g_1(\tilde\kk)=0$, $\cal A$ reduces~to
\begin{equation}\label{A2}
  {\cal A} = \left| \frac{\alpha_0 + \alpha_\kk\cdot\tilde\kk - \alpha_\omega g_1(\tkk) + \dots}{\Im\ell_2(\tilde\kk,\tilde\kk)+\O(|\tilde\kk|^3)} \right|\,.
  \qquad
  \text{(along\; $\tilde\omega+g_1(\tilde\kk)=0$)}
\end{equation}
The amplification is generically either linear or quadratic in $|\tkk|^{-1}$, depending on whether $\alpha_0$ vanishes or not. The behavior of the field amplification indicator is summarized in the following table, where in the right column we assume $\Im\ell_2(\tilde\kk,\tilde\kk)\not=0$ for $\tilde \kk\not= 0$.

\begin{center}
\begin{tabular}{l||c|c}
   & $\tilde\kk=0$\; ($\tilde\omega\to0$) & $\tilde\omega+g_1(\tilde\kk)=0$\; ($\tilde\kk\to0$) \\\hline\hline
\parbox{3em}{\hspace{1em}\\$\alpha_0=0$\vspace{1ex}} & no amplification & generically ${\cal A}\sim\frac{c}{|\tilde\kk|}$ \\\hline
\parbox{3em}{\hspace{1em}\\$\alpha_0\not=0$\vspace{1ex}} & ${\cal A}\sim\frac{\alpha_0}{\tilde\omega}$ & ${\cal A}\sim\frac{c}{|\tilde\kk|^2}$ \\\hline
\end{tabular}
\end{center}


\section{Non-degeneracy of guided modes}\label{sec:nondegeneracy}

True guided modes, which decay exponentially with distance from the defect layer, occur at real wavevector-frequency pairs $\kw$ on the dispersion relation $D\kw=\det B\kw=0$.  They are non-degenerate in two ways, described by Theorems~\ref{thm:scattering} and \ref{thm:nondegeneracy}, whose proofs are given below.

Theorem~\ref{thm:scattering} states that any zero eigenvalue of $B\kw$ for $\kw\in\RR^3$ is simple, and it gives necessary and sufficient conditions on the incoming field $\Psi^\inc$ that guarantee that the scattering problem has a solution.
Theorem~\ref{thm:nondegeneracy} states that the analytic analytic eigenvalue $\ell\kw$ of $B\kw$ that vanishes at $\kwz\in\RR^3$ has a nonzero derivative in $\omega$ at $\kwz$,
$
  \frac{\partial \ell}{\partial \omega}\kwz\,\not=\, 0\,.
$

\medskip

\begin{proof}[\bfseries Proof of Theorem~\ref{thm:scattering}]
  Let $\Psig$ be a nonzero solution of $(T\Pl-\Pr)\Psig=0$, and set
\begin{eqnarray*}
  && \Pl\Psig = u_{-p}\vlp + u_{-e}\vle\,, \\
  && \Pr\Psig = u_{+p}\vrp + u_{+e}\vre\,.
\end{eqnarray*}
Using the energy-interaction matrix (\ref{canonicalform}) and the flux-unitarity of $T$, one computes
\begin{equation*}
  -|u_{-p}|^2 = [\Pl\Psig,\Pl\Psig]
                = [T\Pl\Psig,T\Pl\Psig]
                = [\Pr\Psig,\Pr\Psig]
  =|u_{+p}|^2,
\end{equation*}
which implies that $u_{-p}=u_{+p}=0$.
The equation $(T\Pl-\Pr)\Psig=0$ now implies that $u_{-e}T\vle = u_{+e}\vre$, and since $\Psig$ is nonzero and $T$ is invertible, one concludes that $T^{-1}\vre = s\,\vle$, where $s=u_{-e}/u_{+e}\not=0$.
This proves that the nullspace is one-dimensional.

The solution (\ref{guidedmode2}) to the Maxwell ODEs is seen to have boundary values
$\psi^g(0) = u_{-e}^\guided\vle = \Ple\Psig$ and $\psi^g(L) = u_{+e}^\guided\vre = \Pre\Psig$ and therefore is the field corresponding to the boundary conditions defined by $(\Psi^\out,\Psi^\inc)=(\Psig,0)$.

By the Fredholm alternative, there exists a solution of (\ref{scattering2}) if and only if $(T\Pr-\Pl)\Psi^\inc$ is flux-orthogonal to the nullspace of $(T\Pl-\Pr)^\fadj$.  Now, $(T\Pl-\Pr)^\fadj=\Pl^\fadj T^{-1} - \Pr^\fadj$, and by (\ref{adjoints}),
\begin{equation*}
  \Pl^\fadj = \Plp + \Pre\,,
  \qquad
  \Pr^\fadj = \Prp + \Ple\,,
\end{equation*}
and $\Pl^\fadj$ and $\Pr^\fadj$ are complementary projections.  Thus,
\begin{eqnarray*}
  && \Psi\in\Null\left( (T\Pl-\Pr)^\fadj \right)
    \iff \left\{ \Pr^\fadj\Phi=0 \,\text{ and }\, \Pl^\fadj T^{-1}\Phi=0 \right\} \\
  && \iff \left\{ \Phi\in\Range\Pl^\fadj=\sspan\{\vlp,\vre\} \,\text{ and }\, T^{-1}\Phi\in\Range\Pr^\fadj=\sspan\{\vrp,\vle\} \right\} \\
  && \iff \left\{ \Phi=a\vre \,\text{ and }\, T^{-1}\Phi=b\vle \,\text{ for some }\, a,\,b\in\CC \right\}\,.
\end{eqnarray*}
The final equivalence is due to the equality $[\Phi,\Phi]=[T^{-1}\Phi,T^{-1}\Phi]$ and the flux interactions between modes given in (\ref{canonicalform}).  In words, $\Null\left( (T\Pl\!-\!\Pr)^\fadj \right)$ consists of the set of boundary data at $z=L$ of all guided modes.  In conclusion,
\begin{equation*}
    \Null\left( (T\Pl\!-\!\Pr)^\fadj \right) = \Pr \left(\Null(T\Pl\!-\!\Pr)\right)=\sspan\{\vre\}\,.
\end{equation*}

Now let $\Psi^\inc=j_{+p}\vrp + j_{+e}\vre + j_{-p}\vlp + j_{-e}\vle$ be given, then
\begin{equation*}
  (T\Pr-\Pl)\Psi^\inc \,=\, j_{+p} T\vrp + j_{+e} T\vre - j_{-p}\vlp - j_{-e}\vle\,.
\end{equation*}
The condition that $(T\Pr-\Pl)\Psi^\inc$ is flux-orthogonal to the nullspace of $(T\Pl-\Pr)^\fadj$ is that
$[(T\Pr-\Pl)\Psi^\inc,\vre]=0$, under the condition that $T^{-1}\vre = s\,\vle$.  Using the flux relations again yields
\begin{multline*}
  0 = [j_{+p} T\vrp + j_{+e} T\vre - j_{-p}\vlp - j_{-e}\vle,\vre]
     = [j_{+p}\vrp+j_{+e}\vre,T^{-1}\vre] - \overline{j_{-e}}[\vle,\vre] \\
     = \overline{j_{+e}}\,s[\vre,\vle] - \overline{j_{-e}}[\vle,\vre] = \overline{j_{+e}}\,s - \overline{j_{-e}}\,,
\end{multline*}
which, in view of $s=u_{-e}/u_{+e}$ proves the relation in part 2.

To prove that the algebraic multiplicity of the zero eigenvalue is $1$, one shows that the nullspace and the range of $T\Pl\!-\!\Pr$ intersect trivially.  Now, $\Range(T\Pl-\Pr) = \Null(\Pl^\fadj T^{-1}-\Pr^\fadj)^\fperp$, and
\begin{multline*}
  \Phi\in\Null(\Pl^\fadj T^{-1}\!-\!\Pr^\fadj)
  \iff \left\{ \Phi\in\Range\Pl^\fadj=\sspan\{\vlp,\vre\} \,\text{ and }\, T^{-1}\Phi\in\Range\Pr^\fadj=\{\vrp,\vle\} \right\} \\
  \iff \left\{ \Phi = a\vre \,\text{ and }\, T^{-1}\Phi = b\vle \,\text{ for some }\, a,\,b\in\CC \right\}\,.
\end{multline*}
Since $T^{-1}\vre = s\,\vle$, one concludes that $\Null(\Pl^\fadj T^{-1}\!-\!\Pr^\fadj)=\sspan\{\vre\}$.
Therefore,
\begin{equation*}
  \Psi\in\Range(T\Pl-\Pr)
  \iff 0 = [\Psi,\vre] = [\Ple\Psi,\vre]
\end{equation*}
and thus $\Range(T\Pl-\Pr)=\sspan\{\vlp,\vrp,\vre\}$, which does not contain the generator of the nullspace of $(T\Pl\!-\!\Pr)$, which is $s\vle+\vre$ ($s\not=0$).
\end{proof}

\medskip

\begin{proof}[\bfseries Proof of Theorem~\ref{thm:nondegeneracy}]\label{prf:thm:nondegeneracy}
Set $B=T\Pl\!-\!\Pr$. Suppose $B\kw$ is noninvertible at a real pair $\kwz$. Then by Theorem \ref{thm:scattering} the zero eigenvalue of $B\kwz$ is an algebraically simple eigenvalue. Thus because $B\kw$ is analytic in a complex neighborhood of $\kwz$ then by perturbation theory there exists a unique eigenvalue $\ell\kw$ of $B\kw$ such that $\lim_{\kw\rightarrow\kwz}\ell\kw=0$. Furthermore, by perturbation theory this eigenvalue of $B\kw$ is algebraically simple and analytic in a complex neighborhood of $\kwz$. Now for the rest of the proof we fix $\kappa=\kappa_0$ with primed functions denoting the derivative with respect to $\omega$ at $\omega_0$.

Let $\Psig$ be any eigenvector in the one-dimensional nullspace of $B(\omega)$. Then by perturbation theory there exists an analytic eigenvector $\Psi(\omega)$ of $B(\omega)$ corresponding to the eigenvalue $\ell(\omega)$ in a complex neighborhood of $\omega=\omega_0$ such that $\Psi(\omega)=\Psig$. Now differentiating the relation
\begin{equation*}
  (\ell(\omega)I-B(\omega))\Psi(\omega) = 0
\end{equation*}
with respect to $\omega$, evaluating at $\omega_0$, and using the fact that $\ell(\omega_0)=0$ yields 
\begin{equation}\label{diffeig}
  (\ell'(\omega_0)I-B'(\omega_0))\Psig - B(\omega_0)\Psi'(\omega_0) \,=\, 0\,.
\end{equation}
Convenient is the notation $\Psig=\Psigl+\Psigr$, with $T\Psigl=\Psigr$, where $\Pl\Psig=\Psigl=u_{-e}^\guided\vle$ and $\Pr\Psig=\Psigr=u_{+e}^\guided\vre$. In particular, since $T$ is invertible this implies $u_{-e}^\guided,u_{+e}^\guided\not=0$.

Applying $[\Psigr,\cdot]$ to (\ref{diffeig}) leads to
\begin{equation*}
  \ell'(\omega_0)[\Psigr,\Psig]
  - [\Psigr,B'(\omega_0)\Psig]
  - [B(\omega_0)^\fadj\Psigr,\Psi'(\omega_0)]
  \,=\, 0\,.
\end{equation*}
Because $B^\fadj=\Pl^\fadj T^{-1} - \Pr^\fadj$ for real $\omega$, one obtains
\begin{equation*}
  \ell'(\omega_0)[\Psigr,\Psigl] = [\Psigr,B'(\omega_0)\Psig]\,.
\end{equation*}
Observe that $[\Psigr,\Psigl]=[u_{+e}^\guided\vre,u_{-e}^\guided\vle]=\overline{u_{+e}^\guided}u_{-e}^\guided\not=0$.
To prove that $\ell'(\omega_0)\not=0$, it suffices to prove that $[\Psigr,B'(\omega_0)\Psig]\not=0$. We prove this by considering the three terms in the right side of the identity
\begin{eqnarray}
  [(T\Pl-\Pr)'\Psig,\Psigr] &=& [T'\Pl\Psig,\Psigr] + [T\Pl'\Psig,\Psigr] - [\Pr'\Psig,\Psigr] \notag \\
     &=& [T'\Psigl,\Psigr] + [\Pl'\Psig,\Psigl] - [\Pr'\Psig,\Psigr]\,. \label{threeterms}
\end{eqnarray}

To treat the first term in the latter expression, use the fact that for the canonical Maxwell ODEs (\ref{canonical}) its transfer matrix $T(0,z)$ satisfies the matrix ODE $-iJdT(0,z)/dz = A(z;\omega)T(0,z)$ with $T(0,0)=I$ so that for real $\omega$ by differentiating both sides of this matrix ODE with respect to $\omega$ we obtain for $T(0,z)$ the identity
\begin{equation}
  \frac{\partial }{\partial z}\left( T^*(iJ)\frac{\partial T}{\partial \omega} \right) = T^*\frac{\partial A}{\partial \omega}T,
\end{equation}
from which it follows that
\begin{equation*}
  \int_0^L \left( A'\psi_1(z), \psi_2(z) \right)dz = -i\left( JT'\psi_1(0), \psi_2(L) \right).
\end{equation*}
where $A'$ is an abbreviation for $\frac{\partial A}{\partial \omega}(z;\omega_0)$.  Application of this identity to the guided-mode solution $\psi^g(z)$ yields
\begin{equation}\label{term1}
  [T'\Psigl,\Psigr] = \frac{c}{16\pi}\left( JT'\Psigl,\Psigr \right)
  = \frac{c}{16\pi i} \int_0^L \left( \frac{\partial A}{\partial \omega}(z;\omega_0)\psig(z), \psig(z) \right)dz\,.
\end{equation}

To deal with the other two terms of (\ref{threeterms}), one needs a general fact about derivatives of projections:
\begin{equation*}
  \Plr' = \Pr\Plr'\Pl + \Pl\Plr'\Pr\,.
\end{equation*}
The term $[\Pl'\Psig,\Psigl]$ becomes
\begin{eqnarray*}
  [\Pl'\Psig,\Psigl] &=& [\Pl'\Psigl,\Pr^\fadj\Psigl] + [\Pl'\Psigr,\Pl^\fadj\Psigl] \\
             &=& [\Pl'\Psigl,\Psigl] \,=\, [\Ple'\Psigl,\Psigl]\,.
\end{eqnarray*}
The final equality follows from $\Pl=\Ple+\Plp$ and
\begin{eqnarray*}
  [\Plp'\Psigl,\Psigl] &=& [(\Pr+\Ple)\Plp'\Plp\Psigl,\Psigl] + [\Plp\Plp'(\Pr+\Ple)\Psigl,\Psigl] \\
            &=& [\Plp\Plp'\Psigl,\Psigl] \,=\, [\Plp'\Psig,\Plp\Psigl] \,=\, 0\,.
\end{eqnarray*}
Define an analytic eigenvector of $K(\omega)$ for $\omega$ in a neighborhood of $\omega_0$ by
\begin{equation*}
  \Psil(\omega) := \Ple(\omega)\Psigl\,,
  \qquad
  \Psil(\omega_0) = \Psigl\,.
\end{equation*}
The corresponding analytic eigenvalue $k_{-e}(\omega)$ has negative imaginary part for $\omega$ near $\omega_0$.
Differentiating the equation $(K-k_{-e})\Psil = 0$ and evaluating at $\omega_0$ yields
\begin{equation*}
  \left( K'-k_{-e}' \right) \Psigl + \left( K-k_{-e} \right)\Ple'\Psigl \,=\, 0\,.
\end{equation*}
which, upon applying $[\cdot,\Psigl]$ yields
\begin{equation*}
  [K'\Psigl,\Psigl] + [K\Ple'\Psigl,\Psigl] = k_{-e}^*[\Ple'\Psigl,\Psigl]\,,
\end{equation*}
and then using the flux-self-adjointness of $K$,
\begin{multline*}
  (k_{-e}^*-k_{-e})[\Ple'\Psigl,\Psigl] = [K'\Psigl,\Psigl] = \frac{c}{16\pi}\left( JK'\Psigl,\Psigl \right) \\
           = \frac{c}{16\pi i}(k_{-e}^*-k_{-e}) \int_{-\infty}^0 e^{i(k_{-e}-k_{-e}^*)z}\left( JK'\Psigl,\Psigl \right) dz \\
           = \frac{c}{16\pi i}(k_{-e}^*-k_{-e}) \int_{-\infty}^0 \left( JK'\Psigl e^{ik_{-e}z},\Psigl e^{ik_{-e}z} \right) dz
           = \frac{c}{16\pi i}(k_{-e}^*-k_{-e}) \int_{-\infty}^0 \left( A'\psig(z),\psig(z) \right) dz\,
\end{multline*}
where the latter equality follows from Theorem \ref{thm:EnergyIndRep}. Thus we conclude that
\begin{equation}\label{term2}
  [\Pl'\Psig,\Psigl] \,=\, [\Ple'\Psigl,\Psigl] \,=\, \frac{c}{16\pi i} \int_{-\infty}^0 \left( A'\psig(z),\psig(z) \right)dz\,.
\end{equation}

A similar calculation is valid for the third term of (\ref{threeterms}):
Define $\Psir(\omega):=\Pre(\omega)\Psigr$.  The differentiated eigenvalue problem at $\omega=\omega_0$ is
\begin{equation*}
    \left( K'-k_{+e}' \right) \Psigr + \left( K-k_{+e} \right)\Pre'\Psigr \,=\, 0\,,
\end{equation*}
and $\Im(k_{+e})>0$ near $\omega_0$.
Application of $[\cdot,\Psigr]$ yields
\begin{equation*}
  [K'\Psigr,\Psigr] + [K\Pre'\Psigr,\Psigr] = k_{+e}^*[\Pre'\Psigr,\Psigr]\,,
\end{equation*}
and the calculation
\begin{multline*}
  (k_{+e}^*-k_{+e})[\Pre'\Psigr,\Psigr] = [K'\Psigr,\Psigr] = \frac{c}{16\pi}\left( JK'\Psigr,\Psigr \right) \\
           = -\frac{c}{16\pi i}(k_{+e}^*-k_{+e}) \int_L^\infty e^{i(k_{+e}-k_{+e}^*)(z-L)}\left( JK'\Psigr,\Psigr \right) dz \\
           = -\frac{c}{16\pi i}(k_{+e}^*-k_{+e}) \int_L^\infty \left( JK'\Psigr e^{ik_{+e}(z-L)},\Psigr e^{ik_{+e}(z-L)} \right) dz
           = -\frac{c}{16\pi i}(k_{+e}^*-k_{+e}) \int_L^\infty \left( A'\psig(z),\psig(z) \right) dz,
\end{multline*}
where the latter equality follows from Theorem \ref{thm:EnergyIndRep}. Thus we conclude that
\begin{equation}\label{term3}
  -[\Pr'\Psig,\Psigr] \,=\, -[\Pre'\Psigr,\Psigr] \,=\, \frac{c}{16\pi i} \int_L^\infty \left( A'\psig(z),\psig(z) \right)dz\,.  
\end{equation}
In conclusion, since the integral is real-valued, (\ref{threeterms}) together with equations (\ref{term1},\ref{term2},\ref{term3}) yield
\begin{equation*}
  [\Psigr,(T\Pl-\Pr)'\Psig] \,=\, \frac{ic}{16\pi} \int_{-\infty}^\infty \left( A'\psig(z),\psig(z) \right)dz\,.
\end{equation*}
and Theorem \ref{thm:energy} implies 
\begin{equation*}
\frac{c}{16\pi} \int_{-\infty}^\infty \left( A'\psig(z),\psig(z) \right)dz=\int_{-\infty}^\infty U^g(z)\,dz>0.
\end{equation*}
The proof now follows from these facts.
\end{proof}


\section{The Maxwell ODEs and EM energy in layered media}\label{sec:reduction} 

In this section we discuss the reduction of the time-harmonic Maxwell's equations (\ref{MaxwellEqsTimeHar}) to the canonical Maxwell ODEs (\ref{canonical}) (i.e., the reduced Maxwell's equations) and the dependence of the solutions and energy density on the system parameters. 

The reduction method essentially comes from the theory of linear differential-algebraic equations (DAEs) where, via a linear transformation, solutions to a system of DAEs are the solutions to a system of linear ODEs. Although it is known for layered media, Maxwell's equations can be reduced to a system of linear ODEs and their solutions are given in terms of the transfer matrix \cite{Berreman1972}, it is has not been clearly shown until now how these ODEs and their solutions are connected to the most important physical attributes that of energy flux and energy density. This connection is of vital importance in the study of dissipation, dispersion, and scattering in layered media. Thus we carry out a rigorous analysis of these connections deriving results that are needed in this paper, for instance, to prove Theorems \ref{thm:EnergyConservationLaw}, \ref{thm:EnergyDensity}, and \ref{thm:EnergyIndRep} which are needed to prove two of our main results, namely, Theorems \ref{thm:nondegeneracy}, \ref{thm:SmatrixLocal}.

In this paper, frequency-independent and $z$-dependent material tensors $\epsilon,\mu$ are Hermitian matrix-valued functions which are bounded (measurable) and coercive, that is, there exists constants $c_1,c_2>0$ such that
\begin{eqnarray*}
0<c_1I\leq \epsilon(z),\mu(z)\leq c_2I
\end{eqnarray*}
for all $z\in\mathbb{R}$. We are considering Maxwell's equations in Cartesian coordinates, Gaussian units, and one-dimensional layered media whose plane parallel layers have positively-directed normal vector $\mathbf{e}_{3}$. The electromagnetic fields will be written as column vectors.

We consider solutions of the time-harmonic Maxwell's equations (\ref{MaxwellEqsTimeHar}) of the form
\begin{equation*}
\begin{bmatrix}
\EE(\mathbf{r}) \\
\HH(\mathbf{r})
\end{bmatrix}
=
\begin{bmatrix}
\EE\left(  z\right) \\
\HH\left(  z\right)
\end{bmatrix}
e^{i\left(  k_{1}x+k_{2}y-\omega t\right)  }
\end{equation*}
for ${\bf k}_{\parallel}=\left(  k_{1},k_{2},0\right)  \in\mathbb{C}^{2}$ and $\omega\in\mathbb{C}\setminus\left\{  0\right\}  $. Set $\kk=(k_1,k_2)$. \ Their $z$-dependent factors are solutions to
the linear differential--algebraic equations (DAEs)
\begin{equation}
i^{-1}\frac{d}{dz}G
\begin{bmatrix}
\EE\left(  z\right) \\
\HH\left(  z\right)
\end{bmatrix}
=V\left(  z;\kk,\omega\right)
\begin{bmatrix}
\EE\left(  z\right) \\
\HH\left(  z\right)
\end{bmatrix}
\label{DAEs}
\end{equation}
where $G$ and $V\left(  z;\kk,\omega\right)  $ are the block
matrices
\begin{eqnarray}\label{def:BlockMatricesInDAEs}
G=
\begin{bmatrix}
0 & i\mathbf{e}_{3}\times\\
-i\mathbf{e}_{3}\times & 0
\end{bmatrix}
,\text{ \ \ }V\left(  z;\kk,\omega\right)  =\frac{\omega}{c}
\begin{bmatrix}
\epsilon\left(  z\right)  & 0\\
0 & \mu\left(  z\right)
\end{bmatrix}
+%
\begin{bmatrix}
0 & {\bf k}_{\parallel}\times\\
-{\bf k}_{\parallel}\times & 0
\end{bmatrix}
.
\end{eqnarray}
In particular, these are DAEs and not ODEs since $\det G=0$. \ As the theory
of solutions of DAEs are not as well known as for ODEs, we will only briefly
describe below the solutions to these DAEs without explicit reference to the
general theory of solutions of DAEs begin used.

We begin by introducing the matrices 
$P_{\parallel}:\mathbb{C}^{6}\rightarrow\mathbb{C}
^{4}$, $P_{\bot}:\mathbb{C}^{6}\rightarrow\mathbb{C}^{2}$ which map the $z$-dependent factors $F=\left [
\EE, \HH\right]^{T}$ onto their tangential components $\psi=
\left [
E_{1} , E_{2} , H_{1} , H_{2}
\right]
^{T}$ and normal components $\phi=
\left[
E_{3} , H_{3}
\right]
^{T}$, respectively, that is,
\[
P_{\parallel}F=\psi,\text{ \ \ }P_{\bot}F=\phi,
\]
where
\begin{eqnarray}
P_{\parallel}=
\begin{bmatrix}
1 & 0 & 0 & 0 & 0 & 0\\
0 & 1 & 0 & 0 & 0 & 0\\
0 & 0 & 0 & 1 & 0 & 0\\
0 & 0 & 0 & 0 & 1 & 0
\end{bmatrix}
,\text{ \ \ }P_{\bot}=
\begin{bmatrix}
0 & 0 & 1 & 0 & 0 & 0\\
0 & 0 & 0 & 0 & 0 & 1
\end{bmatrix}
.
\end{eqnarray}
Then any factor $F$ may be represented in block form as
\begin{eqnarray}
F=
\begin{bmatrix}
\psi\\
\phi
\end{bmatrix}
\end{eqnarray}
and $G$, $V$ (suppressing the explicit dependence on the parameters) with
respect to this block form can be written as the block matrices
\begin{equation}\label{BlockMatricesInDAEsPrime}
G=
\begin{bmatrix}
J & 0\\
0 & 0
\end{bmatrix}
,\ \ V=
\begin{bmatrix}
V_{\parallel\parallel} & V_{\parallel\bot}\\
V_{\bot\parallel} & V_{\bot\bot}
\end{bmatrix},\tag{\ref{def:BlockMatricesInDAEs}$^\prime$}
\end{equation}
where the block components are defined in (\ref{def:BlockCompsGandV}).
Next, we introduce the Schur complement of this block representation of $V$
with respect to the block $V_{\bot\bot}$, i.e.,
\begin{eqnarray}
A=V_{\parallel\parallel}-V_{\parallel\bot}\left(  V_{\bot\bot}\right)
^{-1}V_{\bot\parallel},\label{SchurComp}
\end{eqnarray}
and the matrix
\begin{eqnarray}
\Phi=-\left(  V_{\bot\bot}\right)  ^{-1}V_{\bot\parallel},
\end{eqnarray}
where
\begin{eqnarray}\label{def:BlockCompsGandV}
&J=P_{\parallel}GP_{\parallel}^{\ast}=
\begin{bmatrix}
0 & 0 & 0 & 1\\
0 & 0 & -1 & 0\\
0 & -1 & 0 & 0\\
1 & 0 & 0 & 0
\end{bmatrix}
,\\
&V_{\bot\bot}=P_{\bot}VP_{\bot}^{\ast}=\frac{\omega}{c}
\begin{bmatrix}
\epsilon_{33} & 0\\
0 & \mu_{33}
\end{bmatrix}
,\nonumber\\
&V_{\parallel\parallel}=P_{\parallel}VP_{\parallel}^{\ast}=\frac{\omega}{c}
\begin{bmatrix}
\epsilon_{11} & \epsilon_{12} & 0 & 0\\
\epsilon_{21} & \epsilon_{22} & 0 & 0\\
0 & 0 & \mu_{11} & \mu_{12}\\
0 & 0 & \mu_{21} & \mu_{22}
\end{bmatrix}
,\nonumber\\
&V_{\parallel\bot}=P_{\parallel}VP_{\bot}^{\ast}=\frac{\omega}{c}\left[
\begin{array}
[c]{cc}%
\epsilon_{13} & 0\\
\epsilon_{23} & 0\\
0 & \mu_{13}\\
0 & \mu_{23}
\end{array}
\right]  +\left[
\begin{array}
[c]{cc}
0 & k_{2}\\
0 & -k_{1}\\
-k_{2} & 0\\
k_{1} & 0
\end{array}
\right]  ,\nonumber\\
&V_{\bot\parallel}=P_{\bot}VP_{\parallel}^{\ast}=\frac{\omega}{c}\left[
\begin{array}
[c]{cccc}
\epsilon_{31} & \epsilon_{32} & 0 & 0\\
0 & 0 & \mu_{31} & \mu_{32}
\end{array}
\right]  +\left[
\begin{array}
[c]{cccc}
0 & 0 & -k_{2} & k_{1}\\
k_{2} & -k_{1} & 0 & 0
\end{array}
\right]  .\nonumber
\end{eqnarray}
In particular, the following Hermitian property holds:
\begin{equation}
A^*=A,\;\;\text{for every }(z,\kk,\omega)\in \RR\times\RR^2\times \RR/\{0\}.\label{MatrixAinMaxwellODEsIsHermitian} 
\end{equation}
Now in the block form (\ref{BlockMatricesInDAEsPrime}) the DAEs (\ref{DAEs}) become
\begin{equation}
\begin{bmatrix}
i^{-1}J\frac{d\psi}{dz}\\
0
\end{bmatrix}
=%
\begin{bmatrix}
V_{\parallel\parallel} & V_{\parallel\bot}\\
V_{\bot\parallel} & V_{\bot\bot}
\end{bmatrix}
\begin{bmatrix}
\psi\\
\phi
\end{bmatrix}
.\tag{\ref{DAEs}$^\prime$}\label{bDAEs}
\end{equation}
Therefore, the solutions $F=\left[  \EE,\HH\right]  ^{T}$ to the DAEs (\ref{DAEs}) are precisely those
satisfying
\begin{eqnarray*}
&i^{-1}J\frac{d\psi}{dz}=A\psi,\text{ \ \ }\psi\in\left(  AC_{loc}\left(\mathbb{R}\right)  \right)  ^{4},\\
&\phi =\Phi\psi,
\end{eqnarray*}
whose tangential and normal components are
\[
P_{\parallel}F=\psi,\text{ \ \ }P_{\bot}F=\phi.
\]
Here $\left(  AC_{loc}\left(\mathbb{R}\right)  \right)  ^{4}$ denotes the space of column vectors whose entries are locally absolutely continuous functions of $z$. On the other hand, from any solution of the canonical Maxwell ODEs
\begin{equation}
i^{-1}J\frac{d\psi}{dz}=A\psi,\text{ \ \ }\psi\in\left( AC_{loc}\left(\mathbb{R}\right)  \right)  ^{4}\label{cODEs}
\end{equation}
we get a solution $F=\left[  E,H\right]  ^{T}$ to the DAEs by setting
\begin{eqnarray}
F=P_{\parallel}^{\ast}\psi+P_{\bot}^{\ast}\Phi\psi
\end{eqnarray}
and, in particular, its tangential and normal components are
\begin{eqnarray}
P_{\parallel}F=\psi,\text{ \ \ }P_{\bot}F=\Phi\psi.
\end{eqnarray}

Now for the proofs in this paper we need to describe the dependency of the
solutions of the canonical Maxwell ODEs and DAEs as well as the energy density on the parameters $\left(
\kk,\omega\right)  \in\mathbb{C}^{2}\times\mathbb{C}\setminus\left\{  0\right\}  $. \ The solutions of the Maxwell ODEs are
given in terms of the $4\times 4$ transfer matrix $T\left(  z_{0},z;\kk,\omega\right)  $ via matrix multiplication%
\[
\psi\left(  z\right)  =T\left(  z_{0},z;\kk,\omega\right)
\psi_0,\text{ \ \ }\psi\left(  z_{0}\right)  =\psi_0,
\]
where $T\left(  z_{0},\cdot;\kk,\omega\right)  $ is the unique
solution to the integral equation
\begin{eqnarray*}
&T\left(  z_{0},z;\kk,\omega\right)  =I_{4}+\int_{z_{0}}^{z}
iJ^{-1}A\left(  z_{1};\kk,\omega\right)  T\left(  z_{0}
,z_{1};\kk,\omega\right)  dz_{1},\\
&T\left(  z_{0},\cdot;\kk,\omega\right)  \in M_{4}\left(
AC_{loc}\left(\mathbb{R}\right)  \right)  .
\end{eqnarray*}
Here $M_{4}\left(AC_{loc}\left(\mathbb{R}\right)\right)$ denotes the space of $4\times 4$ matrices whose entries are locally absolutely continuous functions of $z$. This transfer matrix is given explicitly by the Peano-Baker series%
\begin{equation}
T\left(  z_{0},z;\kk,\omega\right)  =I_{4}+\sum_{j=1}^{\infty
}\mathcal{I}_{j}\left(  z_{0},z;\kk,\omega\right)
,\label{PeanoBakerSeries}
\end{equation}
where
\begin{eqnarray*}
&\mathcal{I}_{1}\left(  z_{0},z;\kk,\omega\right)  =\int_{z_{0}
}^{z}iJ^{-1}A\left(  z_{1};\kk,\omega\right)  
dz_{1},\\
&\mathcal{I}_{j}\left(  z_{0},z;\kk,\omega\right)  =\int_{z_{0}
}^{z}iJ^{-1}A\left(  z_{1};\kk,\omega\right)  \mathcal{I}
_{j-1}\left(  z_{0},z_{1};\kk,\omega\right)  dz_{1}\text{ for
}j\geq 2,
\end{eqnarray*}
and this series converges uniformly and absolutely on any compact interval in
$\mathbb{R}$ in the $L^{\infty}$ norm $\left\Vert \cdot\right\Vert _{\infty}$ since
\begin{eqnarray}
A\left(  \cdot;\kk,\omega\right)  \in M_{4}\left(  L^{\infty
}\left(\mathbb{R}\right)  \right) 
\end{eqnarray}
More specifically, from our hypotheses of our materials
tensors being bounded, coercive, and independent of the parameters $\left(
\kk,\omega\right)  $ then it follows from the representation of
$A$ in (\ref{SchurComp}) that 
\begin{equation}
A\in\mathcal{O}(\mathbb{C}^{2}\times\mathbb{C}\setminus\left\{  0\right\},M_{4}\left(  L^{\infty}\left(\mathbb{R}\right)  \right)  ).\label{SmoothnessOfAinMaxwellDAEs}
\end{equation}
Here $\mathcal{O}$ denotes holomorphic functions. Hence it follows from this and the fact that the
series (\ref{PeanoBakerSeries}) converges uniformly and absolutely on any
compact interval in $\mathbb{R}$ in the $L^{\infty}$ norm $\left\Vert \cdot\right\Vert _{\infty}$, that
\begin{eqnarray*}
T\left(  z_{0},z;\cdot,\cdot\right)   & \in\mathcal{O}(\mathbb{C}^{2}\times\mathbb{C}\setminus\left\{  0\right\}  ,M_{4}\left(\mathbb{C}\right)  )\text{ and}\\
T(z_{0},\cdot;\cdot,\cdot)  & \in\mathcal{O}(\mathbb{C}^{2}\times\mathbb{C}\setminus\left\{  0\right\}  ,M_{4}\left(  L_{loc}^{\infty}\left(\mathbb{R}\right)  \right)  ).
\end{eqnarray*}

Now we can describe the dependency of the solutions of the Maxwell ODEs and
DAEs on the parameters $\left(  \kk,\omega\right)  \in\mathbb{C}^{2}\times\mathbb{C}\setminus\left\{  0\right\}  $ as well as the associated energy density for
such solutions. \ Let $\Psi\in\mathbb{C}^{4}$. Then it follows from the facts above that
\[
\psi\left(  \cdot;\kk,\omega\right)  =T\left(  z_{0}
,\cdot;\kk,\omega\right)  \Psi\in\left(  AC_{loc}\left(\mathbb{R}\right)  \right)  ^{4}
\]
is a solution of the Maxwell ODEs (\ref{cODEs}),
\[
F\left(  \cdot;\kk,\omega\right)  =P_{\parallel}^{\ast}\psi\left(
\cdot;\kk,\omega\right)  +P_{\bot}^{\ast}\Phi\left(
\cdot;\kk,\omega\right)  \psi\left(  \cdot;\kk
,\omega\right)
\]
is a solution to the DAEs (\ref{DAEs}), and
\begin{equation}
\psi\in\mathcal{O}(\mathbb{C}^{2}\times\mathbb{C}\setminus\left\{  0\right\}  ,\left(  L_{loc}^{\infty}\left(\mathbb{R}\right)  \right)  ^{4}),\text{ \ \ }F\in\mathcal{O}(\mathbb{C}^{2}\times\mathbb{C}\setminus\left\{  0\right\}  ,\left(  L_{loc}^{\infty}\left(\mathbb{R}\right)  \right)  ^{6}).\label{ParaDepSolns}
\end{equation}
In particular, this implies the energy density of the electromagnetic field
associated with $F\left(  \cdot;\kk,\omega\right)  =\left[
\EE\left(  \cdot;\kk,\omega\right)  ,\HH\left(  \cdot;\kk,\omega\right)  \right]  ^{T}$, namely,
\begin{equation}
U\left(  \cdot;\kk,\omega\right)  =\frac{1}{16\pi}\left(  \left(
\epsilon\left(  \cdot\right)  \EE\left(  \cdot;\kk,\omega\right)
,\EE\left(  \cdot;\kk,\omega\right)  \right)  +\left(  \mu\left(
\cdot\right)  \HH\left(  \cdot;\kk,\omega\right)  ,\HH\left(
\cdot;\kk,\omega\right)  \right)  \right)
\label{defAssocEnergyDens}
\end{equation}
is continuous on any compact interval in $\mathbb{R}$ in the $L^{\infty}$ norm $\left\Vert \cdot\right\Vert _{\infty}$, i.e.,
\begin{equation}
U\in\mathcal{C}(\mathbb{C}^{2}\times\mathbb{C}\setminus\left\{  0\right\}  ,L_{loc}^{\infty}\left(\mathbb{R}\right)  ),\label{ParaDepEnergyDens}
\end{equation}
a result that is needed in the proof of Theorem \ref{thm:EnergyDensity}.

We now conclude this section with a key new result, needed in the proof of Theorems \ref{thm:exponents} and \ref{thm:nondegeneracy}, on the representation of the energy density in a periodic medium in terms of spatial averaging and the Floquet theory. This result, described in the next theorem, is an extension of Theorems \ref{thm:EnergyConservationLaw}, \ref{thm:EnergyDensity} for one-dimensional photonic crystals.

\smallskip

\begin{Theorem}[Energy Density: Indicator Matrix Representation]\label{thm:EnergyIndRep}
Let $T(0,z)$ be the transfer matrix for the Maxwell ODEs (\ref{canonical}) of a $d$-periodic layered medium satisfying (\ref{PassiveMedia}), i.e., bounded measurable coercive $\epsilon$ and $\mu$ depending only on $z$ and $d$-periodic. Suppose at a parallel wavevector $\kk_0\in\RR^2$ the transfer matrix has a Floquet representation $T(0,z)=F(z)e^{iKz}$, i.e., $F(z+d)=F(z)$, $F(z)=I$ and $K$ is constant in $z$, such that $K=K(\omega)$ is both analytic in a complex neighborhood of a frequency $\omega_0\in\RR/\{0\}$ and self-adjoint with respect to the energy-flux form $[\cdot,\cdot]$ for real $\omega$ near $\omega_0$. Then for any integer $N$,
\begin{equation}
\frac{c}{16\pi}\int_{0}^{Nd}\left(\frac{\partial A}{\partial \omega}(z;\kk_0,\omega_0)\psi_1(z),\psi_2(z)\right)dz=\int_{0}^{Nd}\left[\frac{dK}{d\omega}(\omega_0) e^{iK(\omega_0) z}\psi_1(0),e^{iK(\omega_0)z}\psi_2(0)\right]dz,
\end{equation}
for any two solutions $\psi_1$, $\psi_2$ of the Maxwell ODEs at $\kwz$.
\end{Theorem}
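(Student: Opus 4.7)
The plan is a limiting-absorption argument, reduced via polarization to a diagonal equality of total energies. Both sides of the claimed identity are sesquilinear forms in $(\psi_1(0),\psi_2(0))$, and both are Hermitian: the left side because $\frac{\partial A}{\partial\omega}(z;\kk_0,\omega_0)$ is Hermitian (Maxwell ODEs have Hermitian $A$ on $\RR$); the right side because differentiating $K(\omega)^\fadj=K(\omega)$ at $\omega_0$ makes $K'(\omega_0)$ flux-self-adjoint and the flux form $[\cdot,\cdot]$ satisfies $[a,b]=\overline{[b,a]}$. By the polarization identity it then suffices to prove the equality for $\psi_1(0)=\psi_2(0)=:\psi_0$.

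I will then compare the actual Maxwell solution $\psi(z;\omega):=T(0,z;\omega)\psi_0$ with an ``effective'' homogeneous solution $\tilde\psi(z;\omega):=e^{iK(\omega)z}\psi_0$, which satisfies the constant-coefficient canonical ODE $-iJ\tilde\psi'=A_{\mathrm{eff}}(\omega)\tilde\psi$ for the propagator $A_{\mathrm{eff}}(\omega):=JK(\omega)$; this $A_{\mathrm{eff}}$ is Hermitian on $\RR$ by flux-self-adjointness of $K$, and analytic in $\omega$ near $\omega_0$. The crucial point supplied by the Floquet hypothesis is that $T(0,z;\omega)=F(z;\omega)e^{iK(\omega)z}$ extends analytically in $\omega$ to a neighborhood of $\omega_0$ with $F(0;\omega)\equiv I$; $d$-periodicity of $F$ then gives $F(Nd;\omega)=I$. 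Consequently $\psi(z;\omega)$ and $\tilde\psi(z;\omega)$ agree at $z=0$ and at $z=Nd$, taking the common values $\psi_0$ and $e^{iK(\omega)Nd}\psi_0$ respectively.

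Next, I apply Theorem~\ref{thm:EnergyConservationLaw} separately to $\psi$ and to $\tilde\psi$ at $\omega=\omega_0+i\eta$ with small $\eta>0$; its derivation uses only the ODE structure and the Hermicity of the propagator on $\RR$, both of which hold for $A$ and for $A_{\mathrm{eff}}$. Because the boundary--flux differences agree by the previous paragraph, subtracting the two identities cancels the left-hand sides and yields
\[
\int_0^{Nd}\!\bigl(\psi(z;\omega),\,\Im A(z;\omega)\psi(z;\omega)\bigr)\,dz \;=\; \int_0^{Nd}\!\bigl(\tilde\psi(z;\omega),\,\Im A_{\mathrm{eff}}(\omega)\tilde\psi(z;\omega)\bigr)\,dz .
\]
Dividing by $\eta$ and letting $\eta\downarrow 0$, the same real-Hermitian analyticity argument used in the proof of Theorem~\ref{thm:EnergyDensity} gives $\eta^{-1}\Im A(z;\omega_0+i\eta)\to\frac{\partial A}{\partial\omega}(z;\omega_0)$ (and analogously for $A_{\mathrm{eff}}$) in $L^\infty_{\mathrm{loc}}$, while $\psi$ and $\tilde\psi$ converge locally uniformly on $[0,Nd]$ to their values at $\omega_0$. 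Multiplying by $c/(16\pi)$ and rewriting the right side via $\frac{\partial A_{\mathrm{eff}}}{\partial\omega}(\omega_0)=JK'(\omega_0)$ and the identity $(u,JMu)=\tfrac{16\pi}{c}[Mu,u]$ for flux-self-adjoint $M$ yields the diagonal identity, which polarization promotes to the full bilinear statement.

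The main obstacle is arranging the two terminal-flux terms to match \emph{exactly} at complex $\omega$, \emph{before} the $\eta\to 0$ limit is taken; any $O(\eta)$ mismatch would corrupt the limit at the order from which $\partial A/\partial\omega$ is extracted. This exact matching at every complex $\omega$ near $\omega_0$ is precisely what the hypothesized analytic Floquet factorization with $F(0;\omega)\equiv I$ provides, and that is why this hypothesis is built into the statement.
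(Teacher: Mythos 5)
Your proof is correct, but it follows a genuinely different route from the paper's. The paper works directly at the real frequency $\omega_0$: for any canonical system $-iJ\phi'=M(z;\omega)\phi$ with $M$ Hermitian for real $\omega$, it establishes the exact matrix identity $\frac{\partial}{\partial z}\bigl(Y^*(iJ)\frac{\partial Y}{\partial\omega}\bigr)=Y^*\frac{\partial M}{\partial\omega}Y$ for the transfer matrix $Y$, integrates it over $[0,Nd]$ for both $M=A(\cdot;\kk_0,\omega)$ and $M=JK(\omega)$, and observes that the resulting boundary terms $T(0,Nd;\omega)^*(iJ)\frac{\partial T}{\partial\omega}(0,Nd;\omega)$ coincide because $T(0,Nd;\omega)=e^{iK(\omega)Nd}$ identically in $\omega$ near $\omega_0$ (this is where $F(Nd;\omega)=F(0;\omega)=I$ enters); the identity is thus obtained at once in its full bilinear form, with no limit taken. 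You instead reuse the already-proved Theorem~\ref{thm:EnergyConservationLaw} at $\omega_0+i\eta$ for the two systems, cancel the terminal fluxes (which agree exactly at every complex $\omega$ by the same Floquet fact), and pass to the limit $\eta\downarrow0$ exactly as in the proof of Theorem~\ref{thm:EnergyDensity}, recovering the off-diagonal case by polarization. Both arguments hinge on the identical key point --- the $\omega$-independent agreement of the two transfer matrices at $z=Nd$ --- and you correctly isolate this as the crux. What the paper's route buys is economy: no limiting absorption, no polarization step (the differential identity is already sesquilinear), and no appeal to the continuity results (\ref{ParaDepSolns}). What your route buys is that it derives the theorem purely from previously established energy statements rather than introducing a new matrix identity, making transparent that Theorem~\ref{thm:EnergyIndRep} is literally the periodic-medium extension of Theorems~\ref{thm:EnergyConservationLaw} and~\ref{thm:EnergyDensity}, as the paper itself advertises. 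Two small points to tidy up: you should note that the case of negative $N$ is handled identically by integrating over $[Nd,0]$, and you should state explicitly that the $d$-periodicity of $F(\cdot;\omega)$ with $F(0;\omega)=I$ holds throughout the complex $\omega$-neighborhood (it does, by the analytic Floquet factorization cited in section~\ref{sec:periodic}), since your cancellation of boundary fluxes needs this at complex, not just real, frequencies.
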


\begin{proof}
The proof essentially follows from the fact that for any canonical ODEs
\begin{equation}
-iJ\frac{d}{dz}\phi(z)=M(z;\omega)\phi(z)
\end{equation}
with $M(\omega):=M(\cdot;\omega)$ analytic in $\omega$ near $\omega_0$ (i.e., bounded measurable entries as a function of $z$ and analytic in $\omega$ as a function into the Banach space $L^{\infty}(\RR)$) which is self-adjoint for a.e.\ $z$ for each real $\omega$ near $\omega_0$, its transfer matrix $Y(0,z;\omega)$ for real $\omega$ near $\omega_0$ satisfies the identity
\begin{equation}
  \frac{\partial }{\partial z}\left( Y^*(iJ)\frac{\partial Y}{\partial \omega} \right) = Y^*\frac{\partial M}{\partial \omega}Y,
\end{equation}
and hence for any $z>0$ this identity combined with the fact $Y(0,0;\omega)=I$ implies
\begin{equation}
\int_{0}^{\pm z}\left(\frac{\partial M}{\partial \omega}(z;\omega)\psi_1(z),\psi_2(z)\right)dz=\pm\left(Y(0,\pm z;\omega)^*(iJ)\frac{\partial Y}{\partial \omega}(0,\pm z;\omega)\psi_1(0),\psi_2(0)\right),
\end{equation}
for any two solutions $\phi_1$, $\phi_2$ of those canonical ODEs at $\omega$.

Now we use the fact that both $M(\cdot;\omega)=A(\cdot;\kk_0,\omega)$ and $M(\cdot;\omega)=JK(\omega)$ satisfy the hypotheses with the latter having transfer matrix $e^{iK(\omega)z}$ and so we find by taking $z=\pm Nd$ with $N$ a positive integer and using the fact $T(0,\pm Nd,\omega)=e^{iK(\omega)(\pm Nd)}$ that
\begin{gather}
\frac{c}{16\pi}\int_{0}^{\pm Nd}\left(\frac{\partial A}{\partial \omega}(z;\kk_0,\omega_0)\psi_1(z),\psi_2(z)\right)dz=\pm\frac{c}{16\pi}\left(T(0,\pm Nd;\omega_0)^*(iJ)\frac{\partial T}{\partial \omega}(0,\pm Nd;\omega_0)\psi_1(0),\psi_2(0)\right)\\
=\pm\frac{c}{16\pi}\left((e^{iK(\omega_0)(\pm Nd)})^*(iJ)\frac{\partial e^{iK(\omega)z}}{\partial \omega}(0,\pm Nd;\omega_0)\psi_1(0),\psi_2(0)\right)\\
=\frac{c}{16\pi}\int_{0}^{\pm Nd}\left(\frac{\partial JK}{\partial \omega}(z;\omega_0)e^{iK(\omega_0)z}\psi_1(0),e^{iK(\omega_0)z}\psi_2(0)\right)dz\\
=\int_{0}^{\pm Nd}\left[\frac{\partial K}{\partial \omega}(z;\omega_0)e^{iK(\omega_0)z}\psi_1(0),e^{iK(\omega_0)z}\psi_2(0)\right]dz,
\end{gather}
for any two solutions $\psi_1$, $\psi_2$ of the Maxwell ODEs at $\kwz$. This completes the proof.
\end{proof}

\bigskip
\bibliography{ShipmanWelters}

\end{document}